\title{Critical Ising model and spanning trees partition functions.}
\author{B\'eatrice de Tili\`ere
\thanks{{\small 
UPMC Univ Paris 06, UMR 7599,
Laboratoire de Probabilit\'es et Mod\`eles Al\'eatoires, 4 place Jussieu, 
F-75005 Paris.}
{\small\texttt{beatrice.de\_tiliere@upmc.fr.}}
{\small Supported by the ANR Grant 2010-BLAN-0123-02.}
}}
\date{}
\begin{document}

\maketitle

\begin{abstract}
We prove that the squared partition function of the two-dimensional critical Ising model defined on a finite, isoradial graph 
$\Gs=(\Vs,\Es)$, is equal to $2^{|\Vs|}$ times the partition function of spanning trees of the graph 
$\Gsext$, where $\Gsext$ is the graph $\Gs$ extended along the boundary; edges of $\Gs$ are assigned 
Kenyon's \cite{Kenyon3}  critical weights, and boundary edges of $\Gsext$ have specific weights. The proof is an explicit construction, providing a new relation 
on the level of configurations between two classical, critical models of statistical mechanics.
\end{abstract}

\section{Introduction}

Let $\Gs=(\Vs,\Es)$ be a finite, planar graph satisfying a geometric 
condition called \emph{isoradiality}, then every edge $e$ of $\Gs$ is naturally assigned an angle $\theta_e$. 
Consider the Ising model defined on the isoradial graph $\Gs$,
with Baxter's \cite{Baxter:Zinv} critical coupling constants $\Js=(\Js_e)_{e\in\Es}$:
\begin{equation*}
\forall\,e\in\Es,\quad \Js_e=\frac{1}{2}\log\left(\frac{1+\sin\theta_e}{\cos\theta_e}\right).
\end{equation*}
The partition function of the Ising model is denoted by $\Zising(\Gs,\Js)$, detailed definitions are given in Sections 
\ref{sec:isoradialgraphs} and \ref{sec:criticalversions}. Consider the graph $\Gsext$ obtained from the graph $\Gs$ by extending 
it along the boundary as in Section \ref{sec:ExtendedDefinitions}. The main result of this paper can loosely be stated as follows, a 
precise statement is given in Theorem~\ref{thm:MainResult} of Section \ref{sec:3}.
\begin{thm}\label{thm:main1}

The squared partition function of the critical Ising model defined on the isoradial graph $\Gs$,
is equal to $2^{|\Vs|}$ times the spanning trees partition function of the extended graph~$\Gsext$, 
with Kenyon's \cite{Kenyon3} critical weights on edges of $\Gs$, and specific weights on boundary
edges of $\Gsext$: 
\begin{equation*}
\Zising(\Gs,\Js)^2=
2^{|\Vs|}\Biggl|\sum_{\Ts\in \T^{\rs,\ss}(\Gsext)}
\Bigl(\prod_{e=(x,y)\in\Ts\cap\Es}\tan\theta_e\Bigr)
\underbrace{
\Bigl(\prod_{(x,y)\in\Ts\cap(\Esext\setminus\Es)}\tau_{(x,y)}\Bigr)
\Bigl(\prod_{(u,v)\in\Ts^*\cap(\Edext\setminus{\Es}^*)}\tau_{(u,v)}\Bigr)}_{\text{boundary contributions}}\Biggr|.
\end{equation*}
The sum is over spanning trees of the extended graph $\Gsext$, rooted at a vertex $\rs$, such that dual
spanning trees are rooted at a vertex $\ss$ of the dual $\Gdext$, see Section \ref{sec:MainResult}.
Weights $\tau$ at the boundary are defined in Equation \eqref{equ:weighttau}.
\end{thm}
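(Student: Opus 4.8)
The plan is to build the equality as a chain of explicit, weight-preserving correspondences between configurations, carrying (a pair of) critical Ising configurations on $\Gs$ to a spanning tree of $\Gsext$ rooted at $\rs$ together with its dual tree rooted at $\ss$, and to use the classical operator identities only as a cross-check that the bookkeeping of weights and signs is correct. Concretely, I would compose (i) the high-temperature (polygon) expansion of the Ising model, (ii) Fisher's correspondence to a dimer model on the decorated (Fisher) graph, and (iii) a Temperley-type correspondence from dimers to tree/dual-tree pairs, performing every step on the isoradial geometry so that Kenyon's critical weights appear explicitly.

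First I would rewrite $\Zising(\Gs,\Js)=2^{|\Vs|}\bigl(\prod_{e\in\Es}\cosh\Js_e\bigr)\sum_{P}\prod_{e\in P}\tanh\Js_e$, the sum running over even subgraphs $P\subseteq\Es$. Baxter's critical constants give the key simplification $\tanh\Js_e=\tan(\theta_e/2)$, turning every edge weight into an explicit trigonometric function of the isoradial angle; this is also where powers of $2$ per vertex first enter. Applying Fisher's correspondence then realizes $\sum_P\prod_{e\in P}\tanh\Js_e$ as a dimer partition function on the Fisher graph of $\Gs$, and I would rewrite the resulting dimer weights in terms of the $\theta_e$, absorbing the multiplicative constants so that the eventual prefactor is pinned to the claimed $2^{|\Vs|}$.

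Since the Fisher graph is planar, Kasteleyn's theorem expresses its dimer partition function as the Pfaffian of a Kasteleyn matrix, so that $\Zising(\Gs,\Js)^2$ becomes the determinant of an antisymmetric operator — this is precisely \emph{why} one squares: the square is the natural determinantal object, and the absolute value in the statement records the sign ambiguity of this Pfaffian. At criticality this operator is Kenyon's discrete Dirac operator on the isoradial graph, and the decisive algebraic input is that the critical Laplacian with conductances $\tan\theta_e$ factors through the Dirac operator; this converts the squared Pfaffian into the determinant of a (rooted, reduced) Laplacian. By Kirchhoff's matrix-tree theorem that determinant is the generating function of spanning trees weighted by $\prod_{e}\tan\theta_e$, and the simultaneous appearance of $\Ts$ rooted at $\rs$ and of the dual tree $\Ts^{*}$ rooted at $\ss\in\Gdext$ is exactly the tree/dual-tree structure produced by the Temperley correspondence. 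Carrying this out at the level of configurations, rather than only matching determinants, is what yields the announced bijective relation.

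The main obstacle is the boundary. On a finite graph the Dirac and Laplacian operators do not close up against one another without a choice of boundary condition, and the reduction underlying the matrix-tree theorem forces the deletion of, i.e.\ rooting at, distinguished vertices; I expect this to be exactly the reason one must pass from $\Gs$ to the extended graph $\Gsext$. The additional edges of $\Gsext$ (and of its dual $\Gdext$) supply the missing degrees of freedom, and the specific weights $\tau$ of Equation~\eqref{equ:weighttau} should be precisely those making the Laplacian factorization and the two rootings $\rs,\ss$ compatible at the boundary, thereby producing the boundary factors $\prod_{(x,y)\in\Ts\cap(\Esext\setminus\Es)}\tau_{(x,y)}$ and $\prod_{(u,v)\in\Ts^{*}\cap(\Edext\setminus{\Es}^{*})}\tau_{(u,v)}$. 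Verifying termwise that the boundary contributions generated by the construction match these prescribed $\tau$-weights, and controlling the signs so that the final identity holds under a single absolute value, is the delicate technical heart of the argument; the interior of the identity is governed entirely by Kenyon's critical weights $\tan\theta_e$ and the Laplacian factorization.
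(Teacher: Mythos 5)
Your route is genuinely different from the paper's: you propose the high-temperature expansion, Fisher's correspondence, a Pfaffian on the Fisher graph, and then a factorization of the critical Laplacian through a Dirac-type operator. The paper explicitly does \emph{not} take this path (that is the route of the earlier torus-case papers \cite{BoutillierdeTiliere:iso_perio,deTiliere:mapping,Cimasoni:Critical}); instead it starts from the double-Ising-to-bipartite-dimer correspondence of \cite{Nienhuis,WuLin,Dubedat}, so that the squared partition function is from the outset a \emph{determinant} of a bipartite Kasteleyn matrix on $\GQ$, which is then reinterpreted directly as a directed Laplacian. Your central algebraic step --- ``at criticality this operator is Kenyon's discrete Dirac operator, and the critical Laplacian with conductances $\tan\theta_e$ factors through it'' --- is where the proposal has a genuine gap. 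Kenyon's Dirac operator and his $\bar\partial$--Laplacian relation live on the \emph{bipartite} isoradial dimer graph (essentially $\GQ$), not on the non-bipartite Fisher graph; the analogous identity for the Fisher/Kac--Ward operator is a separate, nontrivial theorem, established in the literature only for graphs on the torus. On a planar graph with boundary no such factorization is available off the shelf, and you give no argument for it. Asserting it does not close the proof; it restates the theorem at the operator level.

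The second gap is that the boundary weights are never derived. The statement you are proving is quantitative at the boundary: the primal boundary edges of $\Gsext$ carry weight $2\sin\frac{\theta^\partial}{2}$ and the dual boundary edges of $\Gdext$ carry the complex weights $e^{\mp i\theta^\partial/2}$, and the sum is restricted to trees whose dual is rooted at a specific vertex $\ss$. You say these weights ``should be precisely those making the Laplacian factorization compatible at the boundary,'' but this is the entire technical content of the result (it occupies Sections \ref{sec:4}--\ref{sec:6} of the paper: the phase choice on boundary edges of $\GQ$, the column-sum computation producing the boundary term $ie^{-i\theta_e}(1-e^{-i\theta^\partial})$, two successive boundary graph surgeries, and the characterization of which spanning trees of the extended double arise as duals). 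Finally, your claim to produce a correspondence ``at the level of configurations'' is not supported by the steps you list: squaring a Pfaffian and invoking a matrix factorization are determinant identities, not bijections, and pairs of Fisher-graph dimer configurations do not map to spanning trees without the intermediate combinatorial structure (here, the extended double graph and the Temperley/KPW correspondence applied to a \emph{bipartite} matching problem). As it stands the proposal is a plausible program --- essentially the torus-case program transplanted to the disk --- but both the key operator identity and the boundary analysis, which together constitute the theorem, are assumed rather than proved.
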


\begin{rem}
A version of Theorem \ref{thm:main1} in the case where the graph $\Gs$ is embedded in the torus rather than the plane, is proved 
in the papers
\cite{BoutillierdeTiliere:iso_perio}, \cite{deTiliere:mapping} and \cite{Cimasoni:Critical}. The first two papers start from 
Fisher's correspondence \cite{Fisher} between the
Ising model defined on a graph~$\Gs$ and the dimer model defined on a non-bipartite, decorated graph $\Gs^{\mathrm F}$.
The squared Ising partition function is replaced by the determinant
of the dimer characteristic polynomial, and spanning trees are replaced by cycle rooted spanning forests of the torus.
In the paper \cite{deTiliere:mapping}, we prove an explicit mapping from cycle rooted spanning forests of the torus, to 
configurations counted by the determinant of the 
dimer characteristic polynomial. The paper \cite{Cimasoni:Critical} uses Kac-Ward matrices \cite{KacWard} instead of 
Fisher's correspondence. Working with graphs embedded in the torus has
the difficulty of involving the geometry of the torus, but avoids handling problems related to the graph having a boundary. 
\end{rem}

In this paper, we address issues related to the boundary of the graph, requiring a number of technical steps. The approach 
used to prove Theorem \ref{thm:main1} is also very different: it neither uses Fisher's correspondence nor Kac-Ward matrices.
Rather, our starting point is a mapping from the double Ising model defined on a planar graph $\Gs$, consisting of two independent Ising models,
to a dimer model defined on a related bipartite graph $\GQ$. This is a combination of a result of \cite{Nienhuis} and of
\cite{WuLin}-\cite{Dubedat}: Nienhuis \cite{Nienhuis} proves a mapping from the low temperature expansion of the two Ising models, to the 
free-fermion 6-vertex model; Wu and Lin \cite{WuLin} prove a mapping from the free-fermion 6-vertex model, to the bipartite dimer model in the case of the square lattice,
Dub\'edat \cite{Dubedat} generalizes this mapping to other planar graphs. Note that a generalization, to 
graphs embedded in surfaces of genus $g$, of the mapping from
the double Ising model to the bipartite dimer model is proved in 
\cite{BoutillierdeTiliere:XORloops}. 

Note also, that using a different approach, 
Dub\'edat \cite{Dubedat} relates two independent Ising models, with one living on the graph $\Gs$
and the second on the dual graph $\Gs^*$, to the bipartite dimer model on the graph~$\GQ$.
He uses a sequence of mappings, most of which are present in the physics literature: mapping from the double Ising model
to the 8-vertex model \cite{KadanoffWegner,Wu71}, then mapping from the 8-vertex model to another 8-vertex model
using Fan and Wu's abelian duality \cite{FanWu}; when coupling constants of the two Ising models satisfy Kramers and Wannier's
duality \cite{KramersWannier1,KramersWannier2}, the second 8-vertex model is a free-fermion
6-vertex model. The mapping from the free-fermion 6-vertex model to the bipartite dimer model is the one mentioned above.

The proof of Theorem \ref{thm:main1} provides an explicit mapping between the critical dimer model
on the bipartite graph $\GQ$ and critical spanning trees of the extended graph~$\Gsext$.

\medskip

\textbf{Outline of the paper}
\begin{itemize}
 \item \underline{Section \ref{sec:2}}. Definition of the Ising model on a planar graph $\Gs$. Definition of 
the dimer model on the bipartite graph $\GQ$ related to the double Ising model. Statement of the result 
\cite{Nienhuis,WuLin,Dubedat,BoutillierdeTiliere:XORloops} proving that 
the squared Ising partition function is equal, up to a constant,
to the partition function of the dimer model on the bipartite graph $\GQ$. Definition of the critical versions of the models on 
isoradial graphs. 
 \item \underline{Section \ref{sec:3}}. 
 Definition of spanning trees and related notions. Definition of the extended 
 graph $\Gsext$ constructed from $\Gs$. Statement of the main result, Theorem \ref{thm:MainResult}, loosely
 stated in Theorem \ref{thm:main1}.
\end{itemize}
Sections \ref{sec:4}, \ref{sec:5} and \ref{sec:6} consist of the proof of Theorem
\ref{thm:MainResult}. It provides a mapping between the critical dimer model on the bipartite graph $\GQ$ and
critical spanning trees of the extended graph $\Gsext$.
\begin{itemize}
 \item \underline{Section \ref{sec:4}}. Computation of the dimer partition function of the graph $\GQ$ 
 using a refinement of Kasteleyn/Temperley-Fisher's method \cite{Kasteleyn2,TF}, due to Kuperberg 
 \cite{Kuperberg}: it is equal to the determinant of a modified Kasteleyn matrix $\Ks$.
 \item \underline{Section \ref{sec:5}}. Interpretation of the Kasteleyn matrix $\Ks$ as the Laplacian matrix of 
 a directed graph $\GOo$ constructed from the bipartite graph $\GQ$, implying that
 the determinant of $\Ks$ counts weighted, oriented spanning trees of $\GOo$. 
 Transformation of the directed graph $\GOo$ along the boundary
 into a directed graph $\GO$: \emph{this is the first technical step used to handle the boundary of the 
 graph.}
 \item \underline{Section \ref{sec:6}}. Instead of considering oriented spanning trees of $\GO$,
 one considers dual spanning trees of the dual graph $\GOd$. Modification of the graph 
$\GOd$ along the boundary so that it becomes the \emph{double graph}~$\GDext$ of the extended graph~$\Gsext$
and of its dual~$\Gdext$: \emph{this is the second step used to handle the boundary of the graph.}
Characterization of spanning trees of the double graph~$\GDext$ arising as duals of oriented spanning trees 
of~$\GO$; mapping from perfect matchings of 
the double graph~$\GDext$ to spanning
trees of~$\GDext$: \emph{these are the two key steps allowing to reach spanning trees of the extended graph $\Gsext$.}
Conclusion of the proof of Theorem \ref{thm:MainResult} using the generalized form of Temperley's bijection \cite{Temperley}
due to Kenyon, Propp and Wilson \cite{KPW}.
\end{itemize}

\section{Square of Ising partition function via bipartite dimers, critical case}\label{sec:2}

In this section, we define the two-dimensional Ising model, the bipartite dimer model related to the double Ising model and the 
critical versions of the models.

In the whole of the paper, we suppose that the graph $\Gs=(\Vs,\Es)$ is planar, finite, connected, simple, with
vertices of degree at least two.

\subsection{Two-dimensional Ising model}\label{sec:FreeBoundary}

Consider a finite, planar graph $\Gs=(\Vs,\Es)$, together with a collection of positive \emph{coupling constants} 
$\Js=(\Js_e)_{e\in \Es}$ indexed by edges of $\Gs$. The \emph{Ising model on $\Gs$, with coupling constants~$\Js$}, is defined as follows.
A \emph{spin configuration} $\sigma$ is a function of the vertices of $\Gs$ taking values in $\{-1,1\}$. The probability 
of occurrence of a spin configuration $\sigma$ is given by the \emph{Ising Boltzmann
measure} $\PPising$, defined by:
$$
\PPising(\sigma)=\frac{1}{\Zising(\Gs,\Js)}\exp\Bigl(\sum_{e=uv\in
\Es}\Js_e\sigma_u\sigma_v\Bigr),
$$
where $\Zising(\Gs,\Js)=\sum\limits_{\sigma\in\{-1,1\}^\Vs}
\exp\Bigl(\sum\limits_{e=uv\in \Es}\Js_e\sigma_u\sigma_v\Bigr)$
is the normalizing constant, known as the \emph{Ising partition function}.

\subsection{Boundary conditions}

Suppose that the graph $\Gs$ is embedded in the plane, and also denote by $\Gs$ the embedded version of the graph. 
The set of \emph{boundary vertices}, respectively \emph{boundary edges}, of the graph $\Gs$, consists of vertices, respectively
edges, of the boundary of the outer face of $\Gs$. Let us denote by $\partial\Vs$ the set of boundary vertices of $\Gs$.

The Ising model defined in Section \ref{sec:FreeBoundary} is also known as the \emph{Ising model with free-boundary conditions}.
We now define the Ising model on $\Gs$, with coupling constants $\Js$ and \emph{plus-boundary conditions}.
A \emph{spin configuration} $\sigma$ is a function of the vertices of $\Gs$ taking values in $\{-1,1\}$
with the additional constraint that the value is $1$ on boundary vertices. Let us denote by 
$\{-1,1\}_{(\partial\Vs,+)}^{\Vs}$ this set of spin configurations. The probability of occurrence of a spin configuration $\sigma$ is defined similarly to 
the free-boundary case. The normalizing constant is denoted by $\Zising^+(\Gs,\Js)$, and referred to as the \emph{Ising partition function with
plus-boundary conditions}. It is equal to:
$$
\Zising^{+}(\Gs,\Js)=\sum\limits_{\sigma\in\{-1,1\}_{(\partial\Vs,+)}^{\Vs}}
\exp\Bigl(\sum\limits_{e=uv\in \Es}\Js_e\sigma_u\sigma_v\Bigr).
$$

The Ising partition function with plus-boundary conditions can be written using the
Ising partition function with free-boundary conditions of a related graph, in the following way.
Let us denote by $\Es_{\scriptscriptstyle \mathrm{\partial\Vs}}$ the set of edges of $\Gs$ joining boundary vertices.
Note that an edge of $\Es_{\scriptscriptstyle \mathrm{\partial\Vs}}$ is not necessarily
a boundary edge of the graph $\Gs$.

% \begin{figure}[ht]
% \begin{center}
% \includegraphics[width=3cm]{FigEmbedding.eps}
% \caption{Embedded planar graph $\Gs$. Boundary vertices and edges of $\Es_{\scriptscriptstyle \mathrm{\partial\Vs}}$
% are drawn in boldline.}
% \label{fig:Embedded}
% \end{center}
% \end{figure}

Define $\Gs'=(\Vs',\Es')$ to be the graph obtained from $\Gs$ by merging all
boundary vertices and edges of $\Es_{\scriptscriptstyle \mathrm{\partial\Vs}}$ into a single vertex $u_0$. 
Then, the graph $\Gs'$ is again planar. Observing that $\Es\setminus\Es_{\scriptscriptstyle \mathrm{\partial\Vs}}=\Es'$, we have:

% \begin{figure}[ht]
% \begin{center}
% \psfrag{u0}[c][c]{\scriptsize $u_0$}
% \includegraphics[width=2.7cm]{FigEmbedding2.eps}
% \caption{The graph $\Gs$ and the corresponding graph
% $\Gs'$.}\label{fig:Embedded2}
% \end{center}
% \end{figure}

\begin{align*}
\Zising^{+}(\Gs,\Js)&=\Bigl(\prod_{e\in\Es_{\scriptscriptstyle \mathrm{\partial\Vs}}}e^{\Js_e}\Bigr)
\sum\limits_{\sigma\in\{-1,1\}^{\Vs}_{(\partial\Vs,+)}}
\Bigl(\prod_{e=uv\in \Es\setminus\Es_{\scriptscriptstyle \mathrm{\partial\Vs}}}e^{\Js_e\sigma_u\sigma_v}\Bigr)\\
&=\Bigl(\prod_{e\in\Es_{\scriptscriptstyle \mathrm{\partial\Vs}}}e^{\Js_e}\Bigr)
\sum\limits_{\{\sigma\in\{-1,1\}^{\Vs'} :\,\sigma_{u_0}=1\}}
\Bigl(\prod_{e=uv\in \Es'}e^{\Js_e\sigma_u\sigma_v}\Bigr)\\
&=\frac{1}{2}\Bigl(\prod_{e\in\Es_{\scriptscriptstyle \mathrm{\partial\Vs}}}e^{\Js_e}\Bigr)
\sum\limits_{\sigma\in\{-1,1\}^{\Vs'}}
\Bigl(\prod_{e=uv\in \Es'}e^{\Js_e\sigma_u\sigma_v}\Bigr)\\
&=\frac{1}{2}\Bigl(\prod_{e\in\Es_{\scriptscriptstyle \mathrm{\partial\Vs}}}e^{\Js_e}\Bigr) \Zising(\Gs',\Js),
\end{align*}
where in the penultimate line, we have used the fact that the contribution $\prod_{e=uv\in
\Es'}e^{\Js_e\sigma_u\sigma_v}$ is invariant under the transformation
$\sigma\leftrightarrow-\sigma$.

As a consequence, the Ising partition function with plus-boundary conditions is
equal, up to an explicit multiplicative constant, to the partition function of the Ising
model with free-boundary conditions on a related, planar graph. The same kind of argument 
can be used to relate other boundary conditions to free ones.

In the sequel, we thus restrict ourselves to the Ising model with free-boundary conditions, simply referred to as the Ising model.
The result we prove holds for other boundary conditions as long as the related graphs satisfy the assumptions we shall make.

\subsection{Square of Ising partition function via bipartite dimers}

The goal of this section is to state Theorem \ref{thm:CedBea}, due to \cite{Nienhuis,WuLin,Dubedat,BoutillierdeTiliere:XORloops}, proving equality,
up to an explicit constant, between the partition function of the double Ising model on a planar graph $\Gs$ and the
partition function of the dimer model on a related bipartite graph $\Gs^\Qs$. The \emph{double Ising model} consists of two independent
Ising models living on the same graph $\Gs$, with the same coupling constants $\Js$, implying that its partition function is
the square of the partition function of the Ising model. 
Theorem \ref{thm:CedBea} of this section is actually the genus 0 case of the result of \cite{BoutillierdeTiliere:XORloops} which holds 
for graphs embedded in
surfaces of genus $g$. 

In the whole of this section, we suppose that the planar graph $\Gs$ is embedded in the sphere, and also denote by $\Gs$
the embedded graph. We start by defining the bipartite graph $\GQ$.

\subsubsection{Quadri-tiling graph}\label{sec:quadri}

Suppose that the embedding of the dual graph $\Gs^*$ of $\Gs$ is such that
primal and dual edges cross exactly once. The \emph{quad-graph}, denoted $\Gquad$,
is the graph whose vertices
are vertices of $\Gs$ and vertices of the dual graph $\Gs^*$. A dual vertex is
then joined to all primal vertices on the boundary of the corresponding
face. The embedding of $\Gquad$ is chosen such that its edges do not intersect
those of $\Gs$ and $\Gs^*$. Faces of the quad-graph are quadrangles and the
diagonals of the quadrangles consist of a primal edge of
$\Gs$ and its corresponding dual edge of $\Gs^*$, see Figure
\ref{fig:FigQuad} (left, dotted lines) for an example.

\begin{figure}[ht]
  \begin{center}
\includegraphics[width=13.5cm]{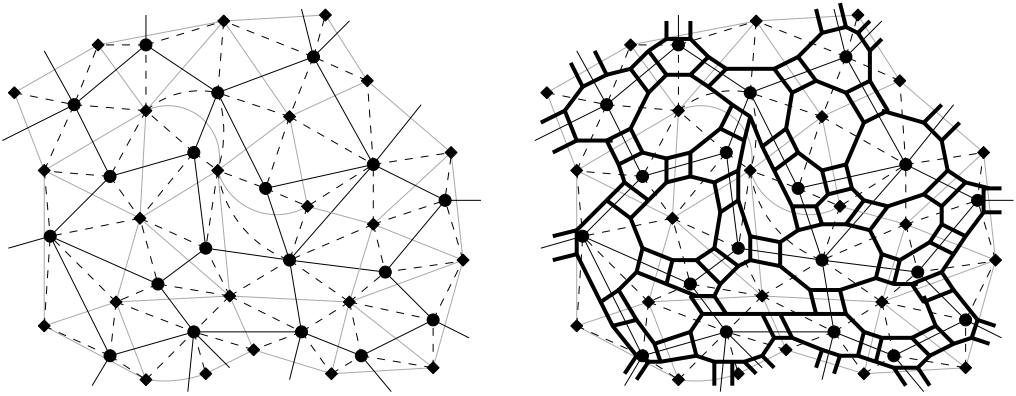}
    \caption{Left: a piece of a finite graph $\Gs$ embedded in the
sphere (plain full lines), of the dual graph $\Gs^*$ (grey full lines), of 
the quad-graph $\Gquad$ (dotted lines). Right: a piece of the quadri-tiling
graph $\GQ$ (thick full lines).}\label{fig:FigQuad}
  \end{center}
\end{figure}

Consider the
graph obtained by superimposing the primal graph $\Gs$, the dual graph $\Gs^*$,
the quad-graph $\Gquad$, and by adding a vertex at the crossing of each primal
and dual edge. Then, the dual of this graph, denoted by $\GQ=(\VQ,\EQ)$, is the
\emph{quadri-tiling} graph of $\Gs$, see Figure~\ref{fig:FigQuad}
(right, thick full lines) for an example. The graph $\GQ$ is bipartite, it contains: 
\begin{itemize}
\item \emph{quadrangles}, each quadrangle being contained in a 
quadrangle of the quad-graph $\Gquad$. In
each quadrangle, two edges cross a primal edge of $\Gs$ and two edges cross a dual edge of $\Gs^*$.
\item \emph{external edges}, crossing edges of the quad-graph $\Gquad$. 
\end{itemize}
Faces of the graph $\GQ$ are naturally partitioned as follows:
quadrangles corresponding to the crossing of primal and dual edges of $\Gs$ and $\Gs^*$,
faces corresponding to primal vertices of $\Gs$, and those corresponding to dual vertices of $\Gs^*$.
Note that all vertices of $\GQ$ have degree three.

\subsubsection{Dimer model and square of Ising partition function}

Consider an Ising model on the graph $\Gs$, with coupling constants $\Js=(\Js_e)_{e\in\Es}$. Using the coupling
constants $\Js$, define the following positive weight function $\nu=(\nu_\es)_{\es\in\EQ}$ on edges of the 
bipartite graph $\GQ$, see also Figure \ref{fig:FigQuad1}.

\begin{equation}\label{equ:dimerweights}
\nu_{\es}= 
\begin{cases}
1&\text{ if $\es$ is an external edge}\\
\cosh^{-1}(2\Js_{e})&\text{ if $\es$ crosses a primal edge $e$ of $\Gs$}\\
\tanh(2\Js_{e})&\text{ if $\es$ crosses the dual edge $e^*$ of an edge $e$ of $\Gs$}.
\end{cases}
\end{equation}

\begin{figure}[ht]
  \begin{center}
\psfrag{1}[c][c]{\scriptsize $1$}
\psfrag{e}[c][c]{\scriptsize $e$}
\psfrag{es}[c][c]{\scriptsize $e^*$}
\psfrag{cosh}[c][c]{\scriptsize $\cosh^{-1}(2\Js_e)$}
\psfrag{tanh}[c][c]{\scriptsize $\tanh(2\Js_e)$}
\includegraphics[width=5cm]{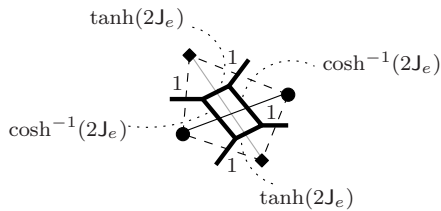}
    \caption{Positive weight function $\nu$ on edges of the bipartite graph $\GQ$.}\label{fig:FigQuad1}
  \end{center}
\end{figure}

The \emph{dimer model on $\GQ$ with weight function $\nu$}, is defined as follows.
A \emph{dimer configuration} of $\GQ$, also known as a \emph{perfect matching}, is a subset of edges of $\GQ$ such that
every vertex is incident to exactly one edge of the subset. Let us denote by
$\M(\GQ)$ the set of dimer configurations of the graph $\GQ$. The \emph{dimer model} assigns to a dimer configuration
$\Ms$ the probability of occurrence $\PPdimer(\Ms)$, equal to:

\begin{equation*}
\PPdimer(\Ms)= \frac{\prod_{\es\in M}\nu_{\es}}{\Zdimer(\GQ,\nu)}.
\end{equation*}
The probability measure $\PPdimer$ is known as the \emph{dimer
Boltzmann measure}; and the
normalizing constant $\Zdimer(\GQ,\nu)=\sum_{\Ms\in\M(\GQ)} \prod_{\es\in M}\nu_{\es}$, as the \emph{dimer partition function}.

The square of the Ising partition function and the 
dimer partition function are related by the following theorem.

\begin{thm}[\cite{Nienhuis,WuLin,Dubedat,BoutillierdeTiliere:XORloops}]\label{thm:CedBea}
Let $\Gs$ be a finite, planar graph embedded in the sphere, and let $\GQ$ be
the corresponding bipartite graph. Then,
\begin{equation*}
(\Zising(\Gs,\Js))^2=2^{|\Vs|}\bigl(\prod_{e\in
\Es}\cosh(2\Js_e)\bigr)\,\cdot\,\Zdimer(\GQ,\nu).
\end{equation*}
\end{thm}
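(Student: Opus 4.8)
The plan is to realise both sides of the identity as sums over local configurations living on a common geometric object built from $\Gs$ and $\Gs^*$, and then to construct a weight-preserving correspondence between them. I would first rewrite the left-hand side through the polygon (contour) representation of the Ising model. Expanding each factor of $\Zising(\Gs,\Js)$ by the high-temperature rule $e^{\Js_e\sigma_u\sigma_v}=\cosh\Js_e\,(1+\sigma_u\sigma_v\tanh\Js_e)$ and summing over spins yields $\Zising(\Gs,\Js)=2^{|\Vs|}\bigl(\prod_{e}\cosh\Js_e\bigr)\sum_{P}\prod_{e\in P}\tanh\Js_e$, the sum running over even subgraphs $P$ of $\Gs$. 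Squaring produces a sum over ordered pairs $(P,P')$ of even subgraphs, equivalently a four-state variable on each edge recording membership in neither, exactly one, or both of $P$ and $P'$. The two independent parity conditions at each vertex are precisely the ice-type (eight-vertex) constraints, and since the Boltzmann weight of the pair factorises as a product of two single-Ising weights, the resulting vertex model satisfies the free-fermion condition; this is the content of Nienhuis's mapping \cite{Nienhuis}, which I would invoke, and upon passing to the superposed (domain-wall) description the vertex weights become the functions of $2\Js_e$ visible in \eqref{equ:dimerweights}.

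The second, and central, step is the local passage from this free-fermion six-vertex model to the dimer model on $\GQ$, following Wu and Lin \cite{WuLin} and its extension by Dub\'edat \cite{Dubedat}. Here I would argue quadrangle by quadrangle: to each edge $e$ of $\Gs$ corresponds a quadrangle of $\GQ$ sitting at the crossing of $e$ and $e^*$, two of whose edges carry weight $\cosh^{-1}(2\Js_e)$ and two weight $\tanh(2\Js_e)$, the quadrangles being linked by the weight-$1$ external edges running along $\Gquad$. The assertion to verify is that, for each fixed vertex state of the six-vertex model, the total weight of all dimer configurations of $\GQ$ projecting onto that state equals the corresponding six-vertex weight, up to a universal factor. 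Carrying this out reproduces the six-vertex weights as exactly the functions of $2\Js_e$ of \eqref{equ:dimerweights}, the per-edge normalisation absorbed in the matching being $\cosh(2\Js_e)$; the external edges, which are freely matchable around the faces of $\GQ$ associated with primal and dual vertices, are what encode the two parity constraints through the global matching condition. Summing over all configurations then gives $(\Zising(\Gs,\Js))^2=2^{|\Vs|}\bigl(\prod_{e}\cosh(2\Js_e)\bigr)\,\Zdimer(\GQ,\nu)$, the factor $2^{|\Vs|}$ collecting the residual matching freedom at the vertices.

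I expect the main obstacle to be the local weight computation of the second step together with its global coherence. One must check that the correspondence between six-vertex states and dimer configurations is consistent simultaneously at every quadrangle and around every primal and dual vertex face, so that the matching condition on $\GQ$ reproduces the ice-type constraints exactly and introduces no spurious configurations; this is where the bipartiteness and degree-three structure of $\GQ$, and the embedding of $\Gs$ in the sphere (genus $0$, hence no homological corrections), enter. Tracking the multiplicative constant through the two mappings — in particular distinguishing the factor $2^{|\Vs|}$ from the naive $2^{2|\Vs|}$ of the doubled spin sum — is the concluding bookkeeping task. Since both sides are manifestly positive partition functions, no Kasteleyn sign analysis is required at this stage, which is exactly what makes the genus-$0$ case cleaner than its toroidal counterpart.
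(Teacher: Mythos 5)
The paper does not actually prove this statement: Theorem \ref{thm:CedBea} is imported wholesale from \cite{Nienhuis,WuLin,Dubedat,BoutillierdeTiliere:XORloops} (it is the genus-$0$ case of the result of the last reference) and serves only as the starting point of the present work. Your outline follows exactly the chain of mappings to which the paper attributes the result — polygon expansion of the squared partition function, passage to a free-fermion six-vertex model, then the Wu--Lin/Dub\'edat local correspondence with dimers on $\GQ$ — so the route is the intended one, not a new one.

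As a proof, however, the proposal has a concrete gap at its first step and defers the second. Squaring the high-temperature expansion gives $2^{2|\Vs|}\bigl(\prod_e\cosh^2\Js_e\bigr)\sum_{(P,P')}\prod_{e\in P}\tanh\Js_e\prod_{e'\in P'}\tanh\Js_{e'}$, whose per-edge data are powers of $\tanh\Js_e$ and whose prefactor is $2^{2|\Vs|}\prod_e\cosh^2\Js_e$; neither matches the weights $\tanh(2\Js_e)$, $\cosh^{-1}(2\Js_e)$ of \eqref{equ:dimerweights} nor the prefactor $2^{|\Vs|}\prod_e\cosh(2\Js_e)$ of the statement. You gesture at ``passing to the superposed (domain-wall) description'' to produce the doubled arguments, but that passage is a genuine resummation which you never perform: for instance, introducing the product spin $\xi_u=\sigma_u\sigma'_u$, summing over $\sigma$ at fixed $\xi$ so that each edge with $\xi_u\xi_v=1$ carries an effective coupling $2\Js_e$, and only then expanding — this is where the single factor $2^{|\Vs|}$ and the $\cosh(2\Js_e)$ actually originate, and it is what produces the pair (contour of $\xi$ in $\Gs^*$, even subgraph of $\Gs$) that the six-vertex/dimer correspondence encodes. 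Attributing the $2^{|\Vs|}$ instead to ``residual matching freedom at the vertices'' is a guess that the computation does not support. The second step — checking, quadrangle by quadrangle and around each primal and dual face of $\GQ$, that the dimer weights sum to the six-vertex weights with no spurious configurations — is flagged as ``the assertion to verify'' but never carried out; since that verification is the entire content of the Wu--Lin/Dub\'edat mapping, the proposal as written is a correct roadmap rather than a proof.
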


\begin{rem}
Note that the graph $\GQ$ is defined using the spherical embedding of the
finite, planar graph $\Gs$, but once the graph $\GQ$ is defined, the dimer
partition function does not depend on the choice of embedding of $\GQ$.
Similarly, the Ising partition function does not depend on the embedding of the
graph $\Gs$.
\end{rem}

\subsection{Critical versions of the model}

In this section, we define the critical version of the Ising model defined on isoradial graphs, and the
critical version of the corresponding bipartite dimer model.

We need the following definition. If $\Gs$ is a planar graph, embedded in
the plane, the \emph{restricted dual graph} of $\Gs$, denoted $\Gdres$, is
the dual of the graph $\Gs$ from which are removed: the vertex corresponding to the outer face and the dual edges corresponding
to boundary edges of $\Gs$.

\subsubsection{Isoradial graphs}\label{sec:isoradialgraphs}

Isoradial graphs probably first appeared in the work of Duffin \cite{Duffin}, see also Mercat \cite{Mercat:ising}. 
The definition we present here and the name come from the paper \cite{Kenyon3} by Kenyon. A planar graph $\Gs$ is \emph{isoradial}, 
if it can be embedded in the plane
in such a way that all faces are inscribable in a circle, with all circles
having the same radius. The embedding is said to be \emph{regular}
if all circumcenters are in the closure of the faces. 

Let $\Gs$ be a finite, isoradial graph having a regular isoradial embedding. We fix such an embedding, take the
common radius to be 1, and also denote by $\Gs$ the embedded graph. A regular isoradial embedding of the 
restricted dual $\Gdres$, with radius 1, is obtained by taking as dual vertices
the circumcenters of the corresponding faces. An example is provided in Figure~\ref{fig:Iso0} (left: plain and grey lines).

\begin{figure}[ht]
\begin{center}
\psfrag{e}[c][c]{\scriptsize $e$}
\psfrag{te}[c][c]{\scriptsize $\theta_e$}
\includegraphics[width=13cm]{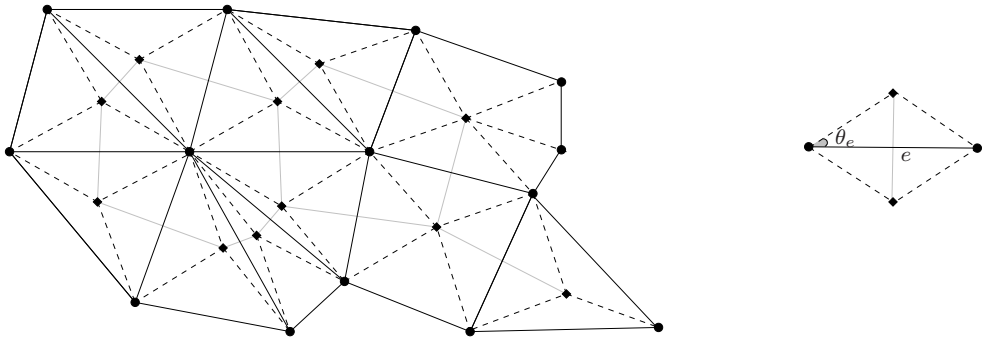}
\caption{Left: isoradial embedding of a finite planar graph $\Gs$ (plain black
lines) and of the restricted dual graph $\Gdres$ (grey lines); restricted
quad-graph $\Gquadres$ (dotted lines). Right: half-rhombus angles assigned to edges of $\Gs$.}
\label{fig:Iso0}
\end{center}
\end{figure}

The \emph{restricted quad-graph}, denoted $\Gquadres$, is the graph whose vertices are vertices of the graph~$\Gs$
and of the restricted dual $\Gdres$. A dual vertex of $\Gdres$ is joined to all primal
vertices on the boundary of the corresponding face, see Figure \ref{fig:Iso0} (left, dotted lines). Note that this definition does not
use isoradiality but, when the graphs $\Gs$ and $\Gdres$ are isoradially embedded, faces of the restricted quad-graph
are side-length 1 rhombi, or half-rhombi along the boundary of
the outer face. 

Every edge $e$ of $\Gs$ is the diagonal of exactly one
rhombus, or half-rhombus, of the restricted quad-graph, and we let $\theta_e$ be the half-angle at
the vertex it has in common with $e$, see Figure~\ref{fig:Iso0} (right).

\subsubsection{Critical versions of the Ising and dimer models}\label{sec:criticalversions}

Consider an Ising model defined on a finite, isoradial graph $\Gs$ with coupling
constants $\Js$, having a regular isoradial embedding. We fix such
an embedding and for every edge $e$ of $\Gs$, we  
let $\theta_e$ be the rhombus half-angle assigned to $e$.
The Ising model is said to be \emph{critical} if the coupling constants are equal to:
\begin{equation}\label{equ:IsingWeights}
\forall\,e\in\Es,\quad \Js_e=\frac{1}{2}\log\left(\frac{1+\sin\theta_e}{\cos\theta_e}\right). 
\end{equation}
These coupling constants were first derived by Baxter \cite{Baxter:Zinv}, using $Z$-invariance and a generalized
form of self-duality. They have been shown to be critical in the case of infinite periodic graphs by 
Li \cite{Li:critical}, and Cimasoni, Duminil-Copin \cite{CimasoniDuminil}.

Now consider a spherical embedding of the graph $\Gs$, and let $\GQ$ be the corresponding bipartite graph defined 
in Section \ref{sec:quadri}, on which the dimer model lives. Computing the dimer weights corresponding to 
the critical Ising weights, using Equation \eqref{equ:dimerweights} yields:

\begin{equation}\label{equ:KastWeights}
\forall\, \es\in\EQ,\quad \nu_\es=
\begin{cases}
1&\text{ if $\es$ is an external edge}\\
\cos(\theta_e)&\text{ if $\es$ crosses a primal edge $e$ of $\Gs$}\\
\sin(\theta_e)&\text{ if crosses the dual edge $e^*$ of an edge $e$ of $\Gs$}.
\end{cases}
\end{equation}

\section{Statement of main result}\label{sec:3}

The goal of this section is to state the main result of this paper proving that the square of the critical Ising model partition function is equal, up to
a multiplicative constant, to the partition function of critical spanning trees with appropriate boundary conditions. In 
Section~\ref{sec:LaplacianDirected}, we define spanning trees, the Laplacian matrix and state the matrix-tree theorem. Then, in
Section~\ref{sec:ExtendedDefinitions}, we consider the isoradial graph $\Gs$ on which the critical Ising model is defined,
extend the graph along its boundary into a graph $\Gsext$, and explain how the extended graph is embedded in the plane. 
Using this embedding we define a weight function on 
edges of $\Gsext$ and of its dual $\Gsext^*$, which turns out to be the spanning trees' critical weight function
of Kenyon \cite{Kenyon3}, with specific boundary conditions. 
In Section \ref{sec:MainResult}, we state Theorem \ref{thm:MainResult}, the main theorem of the paper.

\subsection{Spanning trees, Laplacian matrix, matrix-tree theorem}\label{sec:LaplacianDirected}

For the purpose of this section only, we let $\GV$ be a generic finite, directed graph; and we let
$\Gs$ be a generic undirected graph. Let $\rs$ denote a distinguished vertex of $\Gs$ or $\GV$, referred to as the 
\emph{root vertex}, or simply \emph{root}.

An \emph{$\rs$-rooted oriented spanning tree} ($^{\rs}\mathrm{OST}$) of $\GV$, is a subset of oriented edges of $\GV$ containing no cycle,
such that each vertex except the root $\rs$ has an outgoing edge. We denote by 
$\T^\rs(\GV)$ the set of $^{\rs}\mathrm{OSTs}$ of the graph $\GV$ rooted at $\rs$. Suppose that a weight function $\tau$ is assigned to oriented edges of $\GV$, meaning that an oriented edge 
$(\xs,\ys)$ has weight $\tau_{(\xs,\ys)}$. The weighted sum of $^{\rs}\mathrm{OSTs}$ of the graph 
$\GV$, is denoted $\ZOST^{\rs}(\GV,\tau)$ and referred to as the \emph{$^{\rs}\mathrm{OST}$ partition function}.
\begin{equation*}
\ZOST^{\rs}(\GV,\tau)=\sum_{\Ts\in\T^\rs(\GV)}\prod_{e=(\xs,\ys)\in\Ts}\tau_{(\xs,\ys)}.
\end{equation*}

The \emph{Laplacian
matrix} of the directed graph $\GV$, denoted by $\Delta_{\GV}$, has lines and
columns indexed by vertices of the graph. Coefficients of the matrix
$\Delta_{\GV}$ are defined by:
\begin{equation*}
(\Delta_{\GV})_{\xs,\ys}=
\begin{cases}
\tau_{(\xs,\ys)}&\text{ if $(\xs,\ys)$ is an oriented edge of $\GV$}\\
-\sum\limits_{\{\xs':\, (\xs,\xs')\text{ is an edge of $\GV$}\}}\tau_{(\xs,\xs')}&\text{ if $\ys=\xs$}\\
0&\text{ else}.
\end{cases}
\end{equation*}

The following theorem is the version for directed graphs of the classical theorem of Kirchhoff. It is due to
Tutte \cite{Tutte}, see also Chaiken \cite{Chaiken} for a nice statement.

\begin{thm}[\cite{Tutte}]\label{thm:matrixtree}
Let $\Delta_{\GV}$ be the Laplacian matrix of the directed graph $\GV$
with weight function $\tau$ on the edges. Let $\Delta_{\GV}^{(\rs)}$ be the
matrix $\Delta_{\GV}$ from which the line and the column corresponding to the root vertex $\rs$ have been removed. Then,
\begin{equation*}
\det(\Delta_{\GV}^{(\rs)})=\ZOST^{\rs}(\GV,\tau).
\end{equation*}
\end{thm}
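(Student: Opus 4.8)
The plan is to expand $\det(\Delta_{\GV}^{(\rs)})$ by the Leibniz permutation formula and to show that every term indexed by a configuration containing an oriented cycle cancels, so that only the $^{\rs}\mathrm{OST}$ contributions survive. Write $n=|\Vs|-1$ and let permutations $\pi$ range over the symmetric group of $\Vs\setminus\{\rs\}$, so that
\[
\det(\Delta_{\GV}^{(\rs)})=\sum_{\pi}\mathrm{sgn}(\pi)\prod_{\xs\neq\rs}(\Delta_{\GV})_{\xs,\pi(\xs)}.
\]
First I would substitute the definition of the entries: an off-diagonal factor $(\Delta_{\GV})_{\xs,\ys}=\tau_{(\xs,\ys)}$ is nonzero only when $(\xs,\ys)$ is an oriented edge, while a diagonal factor $(\Delta_{\GV})_{\xs,\xs}=-\sum_{\xs'}\tau_{(\xs,\xs')}$ is itself a signed sum over the outgoing edges at $\xs$. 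Expanding all these diagonal sums turns the right-hand side into a sum over pairs consisting of a permutation $\pi$ together with, for each fixed point of $\pi$, a choice of outgoing edge.

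Each such term can be encoded by a single function $\phi$ assigning to every non-root vertex $\xs$ an outgoing edge $(\xs,\phi(\xs))\in\Es$: if $\xs$ is not fixed by $\pi$ we set $\phi(\xs)=\pi(\xs)$ (off-diagonal factor), and if $\xs$ is fixed we let $\phi(\xs)$ be the endpoint of the edge selected from the diagonal sum. Thus $\phi$ is exactly a choice, at each vertex of $\Vs\setminus\{\rs\}$, of one outgoing edge, i.e. a spanning subgraph $F_\phi$ in which every non-root vertex has out-degree one. The monomial attached to the term is $\prod_{\xs\neq\rs}\tau_{(\xs,\phi(\xs))}$, and its scalar prefactor is $\mathrm{sgn}(\pi)\,(-1)^{f(\pi)}$, where $f(\pi)$ is the number of fixed points of $\pi$, the sign $(-1)^{f(\pi)}$ coming from the explicit minus signs carried by the diagonal entries.

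The heart of the argument, and the step I expect to be the main obstacle, is to sum this signed prefactor over all permutations $\pi$ realizing a fixed $\phi$. Since $\rs$ has no outgoing edge, every oriented cycle of $F_\phi$ lies inside $\Vs\setminus\{\rs\}$; let $C_1,\dots,C_k$ be these (vertex-disjoint) cycles, of lengths $\ell_1,\dots,\ell_k$. A subset $S\subseteq\Vs\setminus\{\rs\}$ carries a restriction $\phi|_S$ that is a bijection of $S$ if and only if $S$ is a union of some of the $C_j$; consequently the permutations realizing $\phi$ are obtained by choosing a subset $J\subseteq\{1,\dots,k\}$, letting $\pi$ act as the cyclic permutation along each $C_j$ with $j\in J$ and fixing every other vertex. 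A short computation, using that an $\ell$-cycle has sign $(-1)^{\ell-1}$ and that $f(\pi)=n-\sum_{j\in J}\ell_j$, gives the prefactor $(-1)^{n}(-1)^{|J|}$ for the choice $J$. Summing over all $J\subseteq\{1,\dots,k\}$ yields $(-1)^{n}(1-1)^{k}$, which vanishes as soon as $k\geq 1$ and equals $(-1)^{n}$ when $k=0$.

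Therefore only those $\phi$ for which $F_\phi$ has no cycle contribute, and such an $F_\phi$, having every non-root vertex of out-degree one and containing no oriented cycle, is precisely an $\rs$-rooted oriented spanning tree. Collecting the surviving terms gives
\[
\det(\Delta_{\GV}^{(\rs)})=(-1)^{n}\sum_{\Ts\in\T^\rs(\GV)}\ \prod_{(\xs,\ys)\in\Ts}\tau_{(\xs,\ys)}=(-1)^{|\Vs|-1}\,\ZOST^{\rs}(\GV,\tau),
\]
which is the asserted identity up to the global sign $(-1)^{|\Vs|-1}$ fixed by the sign convention chosen for the diagonal of $\Delta_{\GV}$; this sign is immaterial in the application of Theorem~\ref{thm:main1}, where only the absolute value of the spanning-tree sum enters. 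An alternative route would be a deletion–contraction induction on the edges of $\GV$, but the permutation-cancellation argument above has the advantage of making transparent exactly why the cyclic configurations disappear.
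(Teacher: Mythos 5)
Your argument is correct, but there is nothing in the paper to compare it with: Theorem~\ref{thm:matrixtree} is stated with a citation to Tutte (see also Chaiken) and no proof is given in the text. What you have written is the standard proof of the directed matrix--tree theorem: expand $\det(\Delta_{\GV}^{(\rs)})$ by the Leibniz formula, open up the diagonal entries so that each term is indexed by a choice of one outgoing edge at every non-root vertex, and observe that the signed multiplicities attached to a fixed such configuration $F_\phi$ sum to $(-1)^{|\Vs|-1}(1-1)^{k}$, where $k$ is the number of oriented cycles of $F_\phi$; all steps (the identification of the permutations realizing a given $\phi$ with unions of the cycles of $F_\phi$, the sign count $\mathrm{sgn}(\pi)(-1)^{f(\pi)}=(-1)^{|\Vs|-1}(-1)^{|J|}$, and the identification of acyclic $F_\phi$ with $\rs$-rooted oriented spanning trees) are sound. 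Your remark about the global sign is also a genuine and worthwhile observation: with the paper's convention, in which off-diagonal entries are $+\tau_{(\xs,\ys)}$ and diagonal entries are negative row sums, one has $\det(\Delta_{\GV}^{(\rs)})=(-1)^{|\Vs|-1}\ZOST^{\rs}(\GV,\tau)$, as the two-vertex example with a single edge $(\xs,\rs)$ already shows, so the identity as printed holds only up to this sign. This is harmless for the paper, since Proposition~\ref{prop:DimerTrees1} and everything downstream only use the absolute value, and the author explicitly discards overall signs when reordering the Kasteleyn matrix; still, it would be cleaner to either insert the factor $(-1)^{|\Vs|-1}$ in the statement or to flip the sign convention in the definition of $\Delta_{\GV}$.
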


Now consider an undirected graph $\Gs$. To the graph $\Gs$ is naturally associated a directed graph $\GV$, referred to
as the \emph{directed version} of $\Gs$: it has the same set of vertices, and two 
oriented edges $(\xs,\ys)$ and $(\ys,\xs)$ for every edge $\xs\ys$ of $\Gs$. An \emph{$\rs$-rooted oriented
spanning tree of $\Gs$}, is an $\rs$-rooted oriented spanning tree of its directed 
version $\GV$. Similarly to the directed case, $\T^\rs(\Gs)$ denotes the set of $^{\rs}\mathrm{OSTs}$ of the graph $\Gs$. A 
\emph{spanning tree} of the graph $\Gs$ is the unoriented version of an OST of $\Gs$. Note that a spanning tree
is the unoriented version of $|\Vs|$ OSTs, where $|\Vs|$ corresponds to the possible choices for the root vertex.

Suppose that a weight function $\tau$ is assigned to edges of $\Gs$, meaning that an edge $\xs\ys$ has
weight $\tau_{\xs\ys}$. This yields a symmetric weight function, also denoted $\tau$, on the directed version $\GV$, defined by: 
$\tau_{(\xs,\ys)}=\tau_{(\ys,\xs)}=\tau_{\xs\ys}$. The \emph{$^{\rs}\mathrm{OST}$ partition function} of the graph $\Gs$, 
denoted by $\ZOST^{\rs}(\Gs,\tau)$ is the $^{\rs}\mathrm{OST}$ partition function of the directed version $\GV$ of $\Gs$,
with the symmetric weight function induced by the weight function $\tau$ of the graph $\Gs$. 
Note that since the weight function is symmetric, it is equal to the weighted sum of spanning trees of $\Gs$. It does not
depend on the choice of root vertex.

The Laplacian matrix of the graph
$\Gs$, denoted $\Delta_\Gs$, is the Laplacian matrix of its directed version $\GV$; it is symmetric. In this case, 
Theorem~\ref{thm:matrixtree} is the classical version of the matrix-tree theorem \cite{Kirchhoff}.

A classical fact about spanning trees is that if $\Ts$ is a spanning tree of a graph $\Gs$ embedded in the plane and if 
$\Ts^*$ denotes the complement of the edges dual to $\Ts$, then $\Ts^*$ is a spanning tree of the dual graph $\Gs^*$, 
it is referred to as the \emph{dual spanning tree} of a spanning tree of~$\Gs$. If a spanning tree is oriented, by 
dual spanning tree, we mean the dual of its unoriented version. 

If a graph is partially directed, the above definitions are easily adapted.

\subsection{Extended graphs, isoradial embeddings}\label{sec:ExtendedDefinitions}

Let $\Gs$ be a finite, embedded planar graph. A planar embedding of the dual graph $\Gs^*$ is obtained by assigning a dual
vertex to the outer face of $\Gs$. The \emph{extended dual graph}, denoted $\Gdext=(\Vdext,\Edext)$, is obtained from the dual graph $\Gs^*$ by 
splitting the vertex corresponding to
the outer face of $\Gs$ into $n$ vertices, where $n$ is the number of boundary edges of $\Gs$, and by joining these vertices in a circular
way, see Figure \ref{fig:Iso2a} (grey lines) for an example. The \emph{extended graph} $\Gsext=(\Vsext,\Esext)$ is the dual graph of 
the extended dual, see Figure \ref{fig:Iso2a} (plain and thick black lines). Let us denote by $\rs$ the vertex of $\Gsext$
corresponding to the outer face of the extended dual $\Gdext$.

Note that the graph $\Gs$, the extended dual $\Gdext$, and the extended graph $\Gsext$ can be embedded in such a way that 
they are planar, and such that primal and dual edges cross at most once.

\begin{figure}[ht]
\begin{center}
\psfrag{o}[c][c]{\scriptsize $\rs$}
\includegraphics[width=8cm]{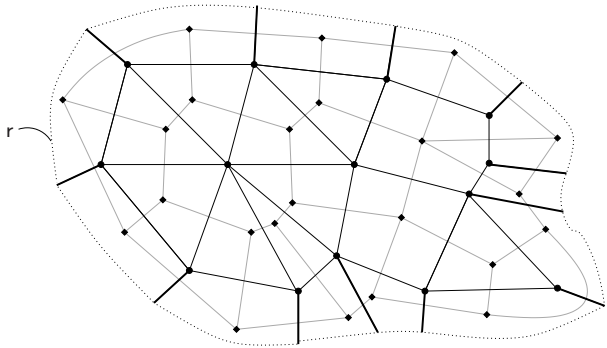}
\caption{A planar graph $\Gs$ (plain black lines), the extended dual $\Gdext$ (grey lines), and the extended graph $\Gsext$
(plain and thick black lines). The dotted line is a spread out way of representing the vertex $\rs$ corresponding to the 
outer face of $\Gdext$.}
\label{fig:Iso2a}
\end{center}
\end{figure}

The \emph{extended quad-graph}, denoted $\Gquadext$ is obtained from the restricted quad-graph $\Gquadres$ by adding the missing
half-quadrangles along the boundary of $\Gs$; in doing so, vertices of the extended dual graph are also added.

Suppose now that the graph $\Gs$ is finite, isoradial with a regular embedding, and fix such an embedding. Then, an embedding of
the extended quad-graph $\Gquadext$ is obtained by adding, in the natural 
embedding of $\Gquadres$, 
the missing half-rhombi along the boundary see Figure \ref{fig:Iso3} (dotted lines). 

Using this embedding, we now assign half-angles to edges of the extended graph $\Gsext$ that are incident to the vertex $\rs$. 
Consider an edge $e^\partial=x\rs$ of 
$\Gsext$ incident to the outer vertex $\rs$,
then it is the dual of an edge $(u,v)$ of the extended dual $\Gdext$, where $u$ comes before $v$ when going clockwise along
the boundary. The half-angle $\theta^\partial$ assigned to the edge $e^{\partial}$, is half the angle 
$\angle(xu,xv)$ defined by the embedding of the extended quad-graph $\Gquadext$, see Figure \ref{fig:Iso3}.

\begin{figure}[ht]
\begin{center}
\psfrag{x}[c][c]{\scriptsize $x$}
\psfrag{o}[c][c]{\scriptsize $\rs$}
\psfrag{u}[c][c]{\scriptsize $u$}
\psfrag{v}[c][c]{\scriptsize $v$}
\psfrag{e}[r][r]{\scriptsize $e^\partial$}
\psfrag{te}[c][c]{\scriptsize $\theta^\partial$}
\includegraphics[width=8cm]{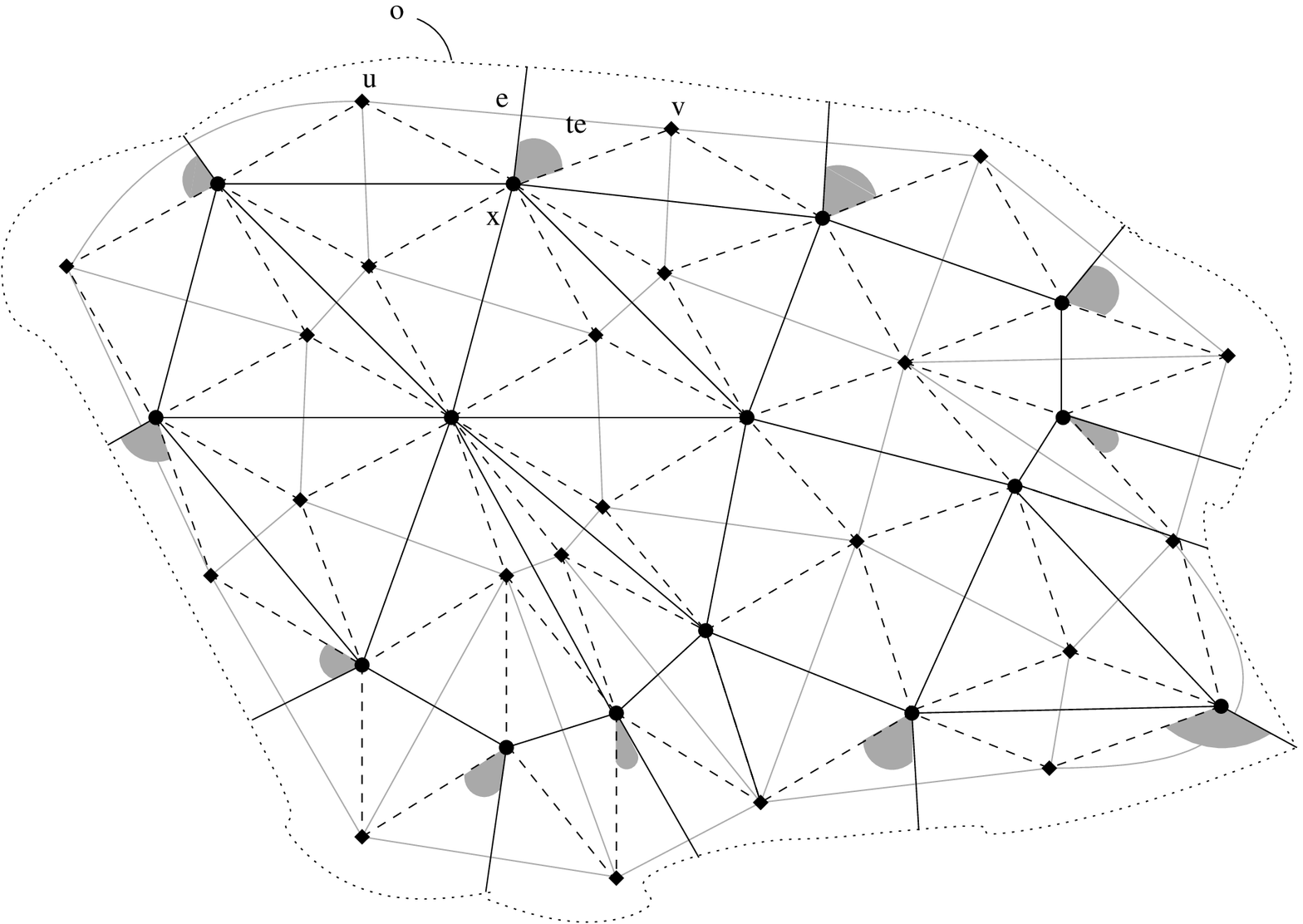}
\caption{Extended quad-graph $\Gquadext$ (dotted lines) and half-angles assigned to edges of $\Gsext$ incident to the outer 
vertex $\rs$.}
\label{fig:Iso3}
\end{center}
\end{figure}

\begin{rem}
Note that half-rhombi added to construct the extended quad-graph from the reduced quad-graph in the isoradial embedding,
may overlap. Then, half-angles assigned to edges along the boundary of the extended graph $\Gsext$ may
be negative; note that they may also be larger than $\frac \pi 2$.
\end{rem}

\subsection{Statement of main result}\label{sec:MainResult}

Let $\Gs$ be a finite, isoradial graph, with a fixed isoradial embedding. Consider the extended dual graph $\Gdext$ and the
extended graph $\Gsext$ defined in Section \ref{sec:ExtendedDefinitions}. We now define the weight function $\tau$
on edges of $\Gsext$ and $\Gdext$. It is defined on oriented edges, but is non-symmetric only on
the boundary of the graphs, implying that only boundary edges of the graphs are `really' oriented. If $uv$ is a boundary
edge of the extended dual graph $\Gdext$, then by convention when writing $(u,v)$, we mean that 
$u$ comes before $v$ when traveling clockwise along the boundary. The weight function $\tau$ is defined as follows.
\begin{align}\label{equ:weighttau}
&\begin{cases}
\tau_{(x,y)}=\tau_{(y,x)}=\tan\theta_e  &\text{ if $xy$ is an $e$ edge of $\Gs$}\\
\tau_{(x,\rs)}=2\sin\frac{\theta^{\partial}}{2}  &\text{ if $x\rs$ is an edge of $\Gsext$ incident to $\rs$}\\
\tau_{(\rs,x)}=0& 
\end{cases}\\\nonumber
&\begin{cases}
\tau_{(u,v)}=\tau_{(v,u)}=1  &\text{ if $uv$ is the dual of and edge $e$ of $\Gs$}\\
\tau_{(u,v)}=e^{-i\frac{\theta^\partial}{2}} & \text{ if $(u,v)$ is a boundary edge of $\Gdext$}\\
\tau_{(v,u)}=e^{i\frac{\theta^\partial}{2}}  &\text{ if $(u,v)$ is a boundary edge of $\Gdext$}.
\end{cases}
\end{align}
Note that away from the boundary, this is the spanning trees critical weight function introduced by Kenyon \cite{Kenyon3}.

Fix a root vertex $\ss$ on the boundary of the extended dual $\Gdext$. Given an oriented spanning tree of $\Gsext$ rooted 
at $\rs$, let us denote by $\Ts^*$ the dual spanning tree of $\Gdext$ rooted at the vertex $\ss$, and by
$\T^{\rs,\ss}(\Gsext)$ the set of $^{\rs}\mathrm{OST}$ $\Ts$ of $\Gsext$, such that the dual is rooted at $\ss$.

Let $\ZOST^{\rs,\ss}(\Gsext,\tau)$ be
the following weighted sum over spanning trees of $\Gsext$:
\begin{multline*}
\ZOST^{\rs,\ss}(\Gsext,\tau)
=\sum_{\Ts\in \T^{\rs,\ss}(\Gsext)}
\Bigl(\prod_{e=(x,y)\in\Ts\cap\Es}\tau_{(x,y)}\Bigr)
\Bigl(\prod_{(x,y)\in\Ts\cap(\Esext\setminus\Es)}\tau_{(x,y)}\Bigr)
\Bigl(\prod_{(u,v)\in\Ts^*\cap(\Edext\setminus{\Es}^*)}\tau_{(u,v)}\Bigr).
\end{multline*}

\begin{rem}\label{rem:KenyonWeights}
Note that dual spanning trees contribute only along the boundary of the extended dual graph $\Gdext$. 
If the boundary of the extended graph $\Gdext$ were undirected, we could factor out the contributions of 
dual spanning trees, this is not possible because 
the boundary of the 
extended graph $\Gdext$ is directed. 
\end{rem}

\begin{thm}\label{thm:MainResult}
Consider the critical Ising model defined on a finite isoradial graph $\Gs$, having a regular embedding, with critical
coupling constants $\Js$ assigned to edges. Let $\Gsext$ and $\Gdext$ be the extended versions of the graph $\Gs$ and of 
its dual graph, with weight function $\tau$ of Equation~\eqref{equ:weighttau} assigned to edges. Then,
\begin{equation*}
(\Zising(\Gs,\Js))^2=2^{|\Vs|}|\ZOST^{\rs,\ss}(\Gsext,\tau)|.
\end{equation*}
\end{thm}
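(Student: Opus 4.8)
The plan is to chain together three results already available in the excerpt, connecting the squared Ising partition function to a determinant, the determinant to oriented spanning trees, and oriented spanning trees to the spanning-tree partition function of the extended graph. The author has laid out this exact roadmap in the outline, so I will follow it.

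\medskip

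\textbf{Step 1: From Ising to dimers to a determinant.} By Theorem~\ref{thm:CedBea}, at criticality I have
\[
(\Zising(\Gs,\Js))^2=2^{|\Vs|}\Bigl(\prod_{e\in\Es}\cosh(2\Js_e)\Bigr)\,\Zdimer(\GQ,\nu),
\]
where $\nu$ takes the explicit critical values $\cos\theta_e$, $\sin\theta_e$, and $1$ from Equation~\eqref{equ:KastWeights}. The first sub-goal is to evaluate $\Zdimer(\GQ,\nu)$ as a determinant. Since $\GQ$ is bipartite and planar but embedded in the sphere (so it has no distinguished boundary yet), I would use the Kasteleyn/Temperley-Fisher method in the refined form due to Kuperberg, producing a modified Kasteleyn matrix $\Ks$ with $\Zdimer(\GQ,\nu)=|\det\Ks|$ up to the controlled choice of phases. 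The point of using Kuperberg's version rather than the bare Kasteleyn theorem is to arrange the entries of $\Ks$ so that, after a gauge transformation, $\Ks$ can be read as a graph Laplacian; this is what makes Step~2 possible.

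\medskip

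\textbf{Step 2: The determinant as oriented spanning trees.} Next I would exhibit $\Ks$ (possibly after conjugating by a diagonal gauge matrix, which does not change $|\det\Ks|$) as the reduced Laplacian $\Delta^{(\rs)}_{\GOo}$ of a directed, weighted graph $\GOo$ built from $\GQ$. Then Theorem~\ref{thm:matrixtree} gives $|\det\Ks|=|\ZOST^{\rs}(\GOo,\tau)|$. This is where the boundary enters for the first time: $\GOo$ is the natural directed graph coming from the spherical picture, and I must surgically modify it along the boundary to obtain a graph $\GO$ whose oriented spanning trees carry exactly the boundary weights $2\sin\tfrac{\theta^\partial}{2}$ appearing in Equation~\eqref{equ:weighttau}. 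The key identities to verify here are that the local weights of $\GOo$/$\GO$ reproduce $\tan\theta_e$ on interior primal edges and produce the correct half-angle boundary contributions; these follow from trigonometric rewritings of $\cos\theta_e$, $\sin\theta_e$ and their ratios.

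\medskip

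\textbf{Step 3: Passing to the extended graph via duality and Temperley.} Finally I would pass from oriented spanning trees of $\GO$ to spanning trees of $\Gsext$. Rather than work with $\GO$ directly, I use planar duality: an OST of $\GO$ corresponds to a dual spanning tree of the dual graph $\GOd$, and I reshape $\GOd$ along the boundary so that it becomes the \emph{double graph} $\GDext$ of $\Gsext$ and $\Gdext$. A perfect matching of the double graph $\GDext$ corresponds, via Kenyon-Propp-Wilson's generalized Temperley bijection, to a pair consisting of a spanning tree of $\Gsext$ rooted at $\rs$ together with its dual tree of $\Gdext$ rooted at $\ss$; tracking the weights through this bijection yields precisely $|\ZOST^{\rs,\ss}(\Gsext,\tau)|$, including the boundary factors $e^{\pm i\theta^\partial/2}$ on the dual side. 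Combining the three steps, the constant $2^{|\Vs|}$ survives unchanged from Theorem~\ref{thm:CedBea} and the factor $\prod_e\cosh(2\Js_e)$ is absorbed when normalizing the edge weights into the $\tan\theta_e$ form, giving the claimed identity.

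\medskip

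\textbf{Main obstacle.} The genuinely hard part is handling the boundary in Steps~2 and~3 simultaneously and consistently. The bulk identities are essentially Kenyon's known computation, but the graph $\Gs$ has a boundary, so I cannot simply quote the torus/periodic results; I must perform two separate surgeries (the modification $\GOo\rightsquigarrow\GO$ and the modification $\GOd\rightsquigarrow\GDext$) and check that the non-symmetric boundary weights $\tau_{(x,\rs)}$, $\tau_{(\rs,x)}=0$, and the complex phases $e^{\pm i\theta^\partial/2}$ emerge with exactly the right normalization so that the bijections stay weight-preserving and the overall modulus $|\cdot|$ is the only place any stray phase can hide. Ensuring that the root conventions at $\rs$ and $\ss$ are compatible with the Temperley bijection, and that overlapping half-rhombi (hence possibly negative $\theta^\partial$) do not break positivity arguments, is where I expect the bulk of the technical effort to lie.
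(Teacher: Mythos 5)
Your roadmap coincides with the paper's own (Theorem~\ref{thm:CedBea}, then Kuperberg's flat phasing and Proposition~\ref{prop:dimerK}, then the Laplacian reinterpretation and the two boundary surgeries taking $\GOo$ to $\GO$ and $\GOd$ to $\GDext$, then the Temperley--Kenyon--Propp--Wilson bijection), and your identification of the boundary modifications as delicate is accurate. However, there is a genuine gap in your Step~3: you pass directly from ``dual spanning trees of the double graph $\GDext$'' to ``perfect matchings of $\GDext$, hence by KPW pairs of trees of $\Gsext$ and $\Gdext$''. These are not the same objects, and KPW does not bridge them. What the chain of reductions actually produces is a \emph{restricted spanning-tree partition function} of $\GDext$ (trees satisfying the local conditions of Lemma~\ref{lem:ConfigLoc}), whereas KPW takes as input a \emph{perfect matching} of $\GDext(\ss)$. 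The missing bridge is the heart of the paper's construction: one maps a tree $\Ts$ to the matching $\Ms_{\Ts}$ of edges exiting black vertices (Lemma~\ref{lem:matching}), shows the resulting decomposition over matchings is disjoint (Lemma~\ref{lem:union}), and --- crucially --- proves that \emph{every} edge configuration containing a fixed matching $\Ms$ and satisfying the local conditions is automatically acyclic, hence a spanning tree (Proposition~\ref{prop:important}). Only then does the weighted sum over trees compatible with $\Ms$ factor into an explicit constant times the weight of $\Ms$ (Proposition~\ref{prop:matchingtree}), converting the restricted tree sum into $\Zdimer(\GDext(\ss),\tau)$, to which KPW can finally be applied.

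Proposition~\ref{prop:important} is not a routine verification: acyclicity is a global condition, and the proof that it follows from the local conditions together with containment of a perfect matching rests on an Euler-formula parity argument (a putative cycle would have to enclose simultaneously an even and an odd number of vertices). Your proposal provides no mechanism for ruling out cycles at this stage, so as written the argument does not close; everything else you outline matches the paper's proof step for step.
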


The remainder of the paper consists in the proof of Theorem \ref{thm:MainResult}.
It starts from Theorem \ref{thm:CedBea} where the squared Ising partition function of the planar graph $\Gs$ is
shown to be equal to the dimer partition function of the bipartite graph $\GQ$.

\section{Kasteleyn theory for the critical dimer model on the graph~$\GQ$}\label{sec:4}

In this section we write an explicit formula for the partition function of the
dimer model on the bipartite graph $\GQ$ with weights $\nu$ assigned to edges, corresponding to the critical Ising weights.
We use the approach of Kuperberg~\cite{Kuperberg}, which generalizes the approach of Kasteleyn~\cite{Kasteleyn2}, and
Temperley and Fisher \cite{TF}. In Section \ref{sec:Kuperberg} we explain how Kuperberg's approach amounts to assigning appropriate 
phases to edges; then in Section \ref{sec:KastPhase} we specify a choice of phases for the dimer model on the graph $\GQ$ with critical weights. 
Our choice is
related to the one made by Kenyon in \cite{Kenyon3}.

\subsection{Explicit formula for the dimer partition function}\label{sec:Kuperberg}

Kasteleyn \cite{Kasteleyn2} and independently Temperley and Fisher \cite{TF} proved an
explicit formula for the partition function of the dimer model defined on a
finite, planar graph $\Gs$. When the graph $\Gs$ is moreover bipartite, this 
theorem has been specified by Percus \cite{Percus}. The set of
vertices $\Vs$ can naturally be split into two subsets $\Ws\cup\Bs$, where $\Ws$
denotes white vertices, $\Bs$ black ones, and vertices in $\Ws$ are only
incident to vertices in $\Bs$. Then, the dimer partition function is 
equal to the determinant of a weighted, oriented adjacency matrix of the graph
$\Gs$, whose lines are indexed by white vertices and columns by black ones.
Edges of the graph $\Gs$ are oriented according to an \emph{admissible
orientation}, that is an orientation of the edges such that all cycles
surrounding inner faces of the graph are \emph{clockwise odd}, \emph{i.e.} when
traveling clockwise around such a cycle, the number of co-oriented edges is odd.
Kasteleyn proved that such an orientation always exists \cite{Kasteleyn2}.

In our case, instead of an admissible orientation, it is more convenient to use
an appropriate \emph{phasing of the edges}: every edge
$e=wb$ is assigned a modulus-one, complex number $e^{i\phi_{wb}}$, where
$\phi_{wb}$ is referred to as the \emph{phase} of the edge $e$. Following
Kuperberg
\cite{Kuperberg}, the phasing has to satisfy the following condition. Let $F$ be an inner
face of the graph $\Gs$, whose boundary vertices are $w_1,b_1,\cdots,w_k,b_k$ in
clockwise order. The \emph{Kasteleyn curvature} at the face $F$, denoted
by
$C(F)$, is:
\begin{equation*}
C(F)=(-1)^{k-1}
\frac{\prod_{j=1}^k e^{i\phi_{w_j b_j}}}{\prod_{j=1}^k e^{i\phi_{w_{j+1} b_j}}}.
\end{equation*}
The phasing of the edges is \emph{flat} if, for every inner face $F$ of the
graph $\Gs$, $C(F)=1$. 

Suppose that edges of the graph $\Gs$ are assigned positive weights
$(\nu_e)_{e\in\Gs}$. Define the \emph{Kasteleyn matrix} $\Ks$ to be 
the following complex valued, adjacency matrix of the graph $\Gs$, having
lines indexed by white vertices and columns by black ones:
\begin{equation}\label{equ:KastMatrix}
\forall\, w\in\Ws,\,b\in\Bs,\quad \Ks_{w,b}=
\begin{cases}
e^{i\phi_{wb}}\nu_e&\text{ if $e=wb$ is an edge of $\Gs$}\\
0&\text{ else}.
\end{cases}
\end{equation}
Then Kuperberg proves the following variant of the result of Kasteleyn \cite{Kasteleyn2}:
\begin{thm}\emph{\cite{Kuperberg}}\label{thm:Kuperberg}
If the phasing of the edges is flat, the partition function of the dimer model
on $\Gs$ is equal to:
$$
\Zdimer(\Gs,\nu)=|\det(\Ks)|.
$$
\end{thm}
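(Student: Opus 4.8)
The plan is to follow the classical Kasteleyn argument, but replacing the real $\pm 1$ orientation signs by the complex phases $e^{i\phi_{wb}}$, and showing that the flatness condition $C(F)=1$ plays exactly the role that the clockwise-odd condition plays in Kasteleyn's original proof. First I would expand $\det(\Ks)$ using the Leibniz formula. Since $\Ks$ has rows indexed by $\Ws$ and columns by $\Bs$, a nonzero term corresponds to a bijection $\pi\colon\Ws\to\Bs$ such that $w\pi(w)$ is an edge of $\Gs$ for every $w$; such a bijection is precisely a perfect matching $\Ms$ of $\GQ$ (here $\Gs$ in the statement). Thus
\begin{equation*}
\det(\Ks)=\sum_{\Ms\in\M(\Gs)}\epsilon(\Ms)\prod_{e=wb\in\Ms}e^{i\phi_{wb}}\nu_e,
\end{equation*}
where $\epsilon(\Ms)=\pm 1$ is the signature of the associated permutation. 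Comparing with $\Zdimer(\Gs,\nu)=\sum_{\Ms}\prod_{e\in\Ms}\nu_e$, it suffices to show that the complex number $c(\Ms):=\epsilon(\Ms)\prod_{e=wb\in\Ms}e^{i\phi_{wb}}$ is the \emph{same} for every matching $\Ms$; then $|\det(\Ks)|=|c(\Ms)|\cdot\Zdimer(\Gs,\nu)$, and since $\nu>0$ this common modulus $|c(\Ms)|$ equals $1$, giving the claim.

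The heart of the argument is therefore the comparison of $c(\Ms)$ across matchings, and this is where the flatness hypothesis enters. I would take two perfect matchings $\Ms$ and $\Ms'$ and consider their symmetric difference $\Ms\triangle\Ms'$, which decomposes into a disjoint union of simple cycles alternating between edges of $\Ms$ and edges of $\Ms'$ (isolated doubled edges contribute trivially). Passing from $\Ms$ to $\Ms'$ amounts to flipping along each such cycle independently, so it is enough to treat a single alternating cycle $\gamma$. For one cycle, the ratio $c(\Ms')/c(\Ms)$ factors as a signature contribution times a phase contribution: the phase contribution is $\prod_{wb\in\Ms'\cap\gamma}e^{i\phi_{wb}}\big/\prod_{wb\in\Ms\cap\gamma}e^{i\phi_{wb}}$, and the signature contribution is $(-1)^{\ell-1}$ where $2\ell$ is the length of $\gamma$ (flipping a cycle of length $2\ell$ changes the permutation by an $\ell$-cycle, of sign $(-1)^{\ell-1}$). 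I would then recognize that this combined quantity is exactly a product of the Kasteleyn curvatures $C(F)$ over the faces $F$ enclosed by $\gamma$. Concretely, the telescoping product of the $C(F)$ over the enclosed faces collapses all interior edge-phases, leaving precisely the boundary phase ratio along $\gamma$ together with the parity sign, by an Euler-characteristic count relating the number of enclosed faces to $\ell$. Since each $C(F)=1$ by flatness, the product is $1$, hence $c(\Ms')=c(\Ms)$.

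The main obstacle I anticipate is the bookkeeping in this telescoping/Euler-count step: one must check carefully that multiplying the face curvatures $C(F)$ over all faces enclosed by the cycle $\gamma$ yields \emph{exactly} the cycle-flip ratio, with the $(-1)^{k-1}$ factors in each $C(F)$ assembling correctly into the global sign $(-1)^{\ell-1}$. This requires a consistent clockwise convention for the face boundaries and an inductive argument that adds enclosed faces one at a time (peeling off a boundary face and checking the ratio is unchanged up to a factor $C(F)$), relying on planarity of $\Gs$ so that $\gamma$ genuinely bounds a disk with a well-defined interior. Once this identity is established, the rest is immediate: flatness forces the interior product to be $1$, the cross-matching phases agree, and taking absolute values removes the common unit-modulus factor to yield $\Zdimer(\Gs,\nu)=|\det(\Ks)|$. \qed
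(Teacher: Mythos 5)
The paper does not actually prove this statement: it is quoted verbatim from Kuperberg, so there is no internal proof to compare against. Your overall strategy is the standard Kasteleyn/Kuperberg argument and is the right one: Leibniz expansion, identification of nonzero terms with perfect matchings, reduction to comparing the unit-modulus factor $c(\Ms)$ across matchings, decomposition of $\Ms\triangle\Ms'$ into alternating cycles, and reduction of the cycle-flip ratio to a product of face curvatures. The permutation-sign computation $(-1)^{\ell-1}$ for a cycle of length $2\ell$ is correct, and the cancellation of interior edge-phases in the product $\prod_F C(F)$ is correct, because each interior edge is traversed white-to-black by one adjacent face and black-to-white by the other, hence appears once in a numerator and once in a denominator.

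There is, however, one genuine gap, and it is exactly the classical subtle point of Kasteleyn's proof. The telescoping identity you assert is not quite right: if $\gamma$ has length $2\ell$ and bounds a disk containing $m$ faces, $E_{\mathrm{int}}$ interior edges and $V_{\mathrm{int}}$ interior vertices, then the sign part of $\prod_F C(F)$ is $(-1)^{\sum_i (k_i-1)}=(-1)^{E_{\mathrm{int}}+\ell-m}$, and Euler's formula for the disk gives $E_{\mathrm{int}}-m=V_{\mathrm{int}}-1$, so that
\begin{equation*}
\prod_{F\subset\mathrm{int}(\gamma)} C(F)=(-1)^{V_{\mathrm{int}}}\,C(\gamma),
\end{equation*}
where $C(\gamma)$ is the cycle-flip ratio (sign times alternating phase ratio). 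Flatness alone therefore only yields $C(\gamma)=(-1)^{V_{\mathrm{int}}}$, not $C(\gamma)=1$; your phrase ``an Euler-characteristic count relating the number of enclosed faces to $\ell$'' suggests the enclosed vertices drop out, which they do not. To close the argument you must additionally observe that a cycle arising in the symmetric difference of two perfect matchings encloses an \emph{even} number of vertices: every vertex of $\gamma$ is matched along $\gamma$ itself, and by planarity no matching edge can cross $\gamma$, so the interior vertices are matched among themselves by $\Ms$, forcing $V_{\mathrm{int}}$ to be even. Without this step the argument would ``prove'' that every contractible cycle has trivial curvature, which is false. With it, your proof is complete and is the intended one.
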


\subsection{Choice of phases}\label{sec:KastPhase}

Let $\Gs$ be a finite, planar, embedded graph.
Recall that the bipartite graph $\GQ$ is constructed from a spherical embedding of the graph $\Gs$ and of its dual graph $\Gs^*$.
A planar embedding of $\GQ$ is obtained by taking a dual vertex of $\Gs^*$ as the outer face of $\GQ$. 
Figure \ref{fig:Iso1} (thick black lines) is a picture of a planar embedding of the graph $\GQ$. 
Note that grey lines of Figure \ref{fig:Iso1} 
represent the extended dual $\Gdext$ rather than
the dual $\Gs^*$, so that the outer face of the graph $\GQ$ contains all dual vertices arising from the splitting of the dual
vertex corresponding to the outer face.

\begin{figure}[ht]
\begin{center}
\psfrag{o}[c][c]{\scriptsize $\rs$}
\includegraphics[width=10cm]{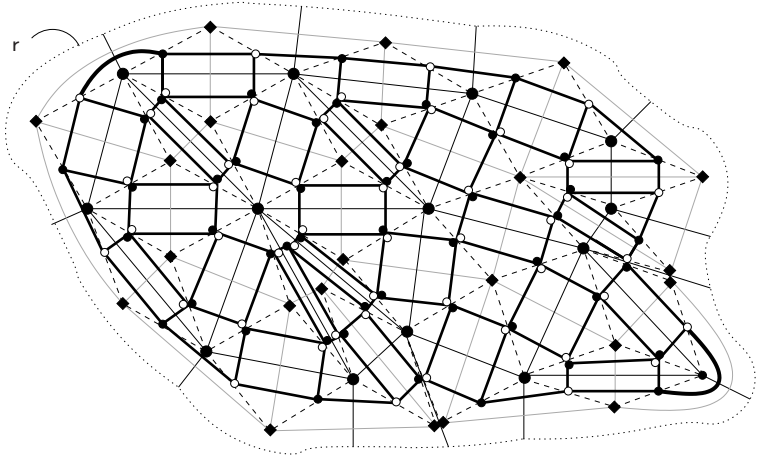}
\caption{Near isoradial embedding of the graph $\GQ$.}\label{fig:Iso1}
\end{center}
\end{figure}

Suppose moreover that the graph $\Gs$ is isoradial with a regular isoradial embedding, and fix such an embedding. Consider also
the induced embedding of the extended quad-graph $\Gquadext$, see Figure \ref{fig:Iso1} (dotted lines). Using this,
we now define a near-isoradial embedding of the graph $\GQ$, which we then use to define phases assigned to edges of $\GQ$.
Draw quadrangles of the bipartite graph $\GQ$ in such a way that they are rectangles included in the rhombi
of the extended quad-graph; draw vertices of the rectangles as midpoints of the sides of the rhombi. An external edge 
which is not on the boundary of the outer face of $\GQ$ has length 0. An external edge on the boundary of the outer face is drawn
so that the face contains the primal vertex in its interior.

\begin{rem}$\,$
\begin{itemize}
\item Away from the boundary, this defines a regular isoradial embedding of the graph $\GQ$ with radius $\frac 1 2$. Note that
the critical dimer weights on the graph $\GQ$ induced by the critical Ising model, are the \emph{critical}
dimer weights defined by Kenyon \cite{Kenyon3} for bipartite isoradial graphs. 

\item Along the boundary, if one draws straight lines to join boundary vertices, then boundary faces are inscribed in a circle
of radius $\frac 1 2$, but it might be that the circumcenter is outside of the face (this happens in the lower right corner of
Figure \ref{fig:Iso1}). The embedding is thus isoradial, but not regular. What can also happen is that 
rhombi of the extended quad-graph $\Gquadext$ overlap. In this case, the embedding is not planar. One can nevertheless define 
the notion of face, and show that it can be embedded in a circle of radius $\frac 1 2$.

\item We use this embedding only to define phases assigned to edges of $\GQ$. 
\end{itemize}
\end{rem}

The bipartite coloring of the vertices
is fixed so that at each rectangle, it is as in
Figure~\ref{fig:FigRhombusColoring}.

\begin{figure}[ht]
\begin{center}
\includegraphics[height=2cm]{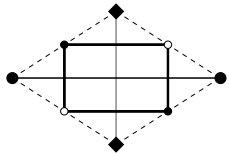}
\caption{Bipartite coloring of the vertices of the bipartite graph $\GQ$.}\label{fig:FigRhombusColoring}
\end{center}
\end{figure}

We now define phases assigned to edges of $\GQ$. Recall that edges of $\GQ$ either belong to quadrangles/rectangles, or are external edges.
\begin{enumerate}
\item Each rectangle belongs to a rhombus of
the extended quad-graph $\Gquadext$, which contains an edge $e$ of $\Gs$. An edge $\es=\ws\bs$ of the rectangle either crosses the
dual edge $e^*$ of the edge $e$, in which case we let $\phi_{\ws\bs}=0$; or it crosses the edge $e$, and 
we let $\phi_{\ws\bs}=\frac{\pi}{2}$.
\item[2(a)] An external edge $\es=\ws\bs$ which is not on the boundary of $\GQ$ crosses an edge shared by two rhombi of the extended
quad-graph. The vertex $\ws$ belongs to one of the two rhombi, and
we let $\phi_{\ws\bs}=\frac{3\pi}{2}-\theta_{e(\ws)}$, where
$\theta_{e(\ws)}$ is the rhombus half-angle of the edge $e(\ws)$ of $\Gs$
contained in the rhombus, see Figure~\ref{fig:Phase1}.

\begin{figure}[ht]
\begin{center}\psfrag{e1}[c][c]{\scriptsize $e_1$}
\psfrag{e}[c][c]{\scriptsize $e(\ws)$}
\psfrag{te}[c][c]{\scriptsize $\theta_{e(\ws)}$}
\psfrag{tb}[c][c]{\tiny $\theta_{\partial}$}
\psfrag{w}[c][c]{\scriptsize $\ws$}
\psfrag{b}[c][c]{\scriptsize $\bs$}
\includegraphics[height=2cm]{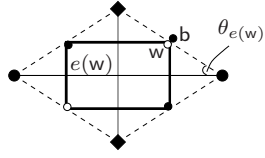}
\caption{Definition of the phase
$\phi_{\ws\bs}$ of an external edge $\ws\bs$ not on the boundary of $\GQ$.}\label{fig:Phase1}
\end{center}
\end{figure}

\item[2(b)] An external edge $\es=\ws\bs$ on the boundary of $\GQ$ crosses an edge $e^\partial$ of the extended graph $\Gsext$. 
Moreover, the vertex $\ws$
belongs to a quadrangle contained in a rhombus of the extended quad-graph.
We let $\phi_{\ws\bs}=\frac{3\pi}{2}-(\theta_{e(\ws)}+\theta^\partial)$, where
$\theta_{e(\ws)}$ is the rhombus half-angle of the edge $e(\ws)$ of $\Gs$
contained in the rhombus, and $\theta^\partial$ is the half-angle assigned in Section \ref{sec:ExtendedDefinitions} 
to edges of the extended graph $\Gsext$, incident to the vertex $\rs$, see also Figure \ref{fig:Phase2}.

\begin{figure}[ht]
\begin{center}\psfrag{e1}[c][c]{\scriptsize $e_1$}
\psfrag{e}[c][c]{\scriptsize $e(\ws)$}
\psfrag{eb}[c][c]{\scriptsize $e^\partial$}
\psfrag{te}[c][c]{\scriptsize $\theta_{e(\ws)}$}
\psfrag{tb}[c][c]{\scriptsize $\theta^\partial$}
\psfrag{w}[c][c]{\scriptsize $\ws$}
\psfrag{b}[c][c]{\scriptsize $\bs$}
\psfrag{o}[c][c]{\scriptsize $\rs$}
\includegraphics[height=2.5cm]{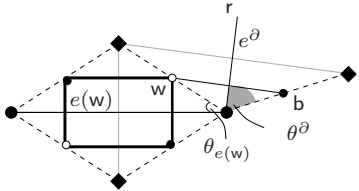}
\caption{Definition of the phase
$\phi_{\ws\bs}$ of an external edge $\ws\bs$ on the boundary of $\GQ$.}\label{fig:Phase2}
\end{center}
\end{figure}

When no confusion occurs, we shall omit the argument $\ws$ in the notation of
$\theta_{\es(\ws)}$.
\end{enumerate}

\begin{lem}\label{lem:phasing}
The phasing of the edges of $\GQ$ defined by Points 1. and 2. above is flat.
\end{lem}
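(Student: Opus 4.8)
\emph{Proof plan.} The plan is to verify directly that the Kasteleyn curvature satisfies $C(F)=1$ for every inner face $F$ of $\GQ$, treating separately the three families of faces described in Section~\ref{sec:quadri}: the quadrangles, the faces around primal vertices of $\Gs$, and the faces around dual vertices. Since $\GQ$ is bipartite and $3$-regular, the boundary vertices of each such face alternate in color, so a face with $2k$ vertices contributes $k$ edges to the numerator (edges $\ws_j\bs_j$) and $k$ to the denominator (edges $\ws_{j+1}\bs_j$) of $C(F)$. For a quadrangle ($k=2$) the two numerator edges are opposite, hence one carries phase $\frac{\pi}{2}$ and the other $0$ (and symmetrically for the denominator, by Point~1), so a one-line computation gives $C(F)=(-1)\cdot e^{i\pi}=1$.

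For the interior primal- and dual-vertex faces I would exploit the standard angle identities of the isoradial rhombic embedding. First I identify which rectangle edge borders such a face: the midline of each rectangle nearest a primal vertex crosses the primal edge (phase $\frac{\pi}{2}$), while the midline nearest a dual vertex crosses the dual edge (phase $0$). A primal-vertex face of a degree-$d$ vertex $v$ is then a $2d$-gon alternating $d$ rectangle edges of phase $\frac{\pi}{2}$ with $d$ external edges of phase $\frac{3\pi}{2}-\theta_{e(\ws)}$ (Point~2(a)), each external edge picking up the half-angle of a distinct rhombus around $v$. Using $\sum_{e\ni v}\theta_e=\pi$ (the rhombus angles $2\theta_e$ filling the full turn at $v$), the curvature collapses to
\[
C(F)=(-1)^{d-1}\,e^{i\left(d\frac{\pi}{2}-d\frac{3\pi}{2}+\sum_{e\ni v}\theta_e\right)}=(-1)^{d-1}e^{i\pi(1-d)}=1.
\]
A dual-vertex face of a degree-$d'$ face $f$ is handled identically, now with rectangle edges of phase $0$, external edges of phase $\frac{3\pi}{2}-\theta_{e(\ws)}$, and the identity $\sum_{e\in\partial f}\theta_e=(d'-2)\frac{\pi}{2}$, which again yields $C(F)=1$. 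Because $C(F)$ is a unit complex number and the only alternative global assignment of phases to numerator and denominator replaces $C(F)$ by $C(F)^{-1}$, the conclusion $C(F)=1$ is insensitive to the precise bipartite labeling fixed in Figure~\ref{fig:FigRhombusColoring}.

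The main obstacle is the boundary. The faces that genuinely differ are those around boundary primal vertices, whose boundary mixes interior external edges (Point~2(a)) with boundary external edges (Point~2(b)) carrying the extra angle $\theta^\partial$. Here I would argue that the construction of $\theta^\partial$ in Section~\ref{sec:ExtendedDefinitions}---as half the angle $\angle(xu,xv)$ of the half-rhombi added to form the extended quad-graph $\Gquadext$---is designed precisely so that the angles around such a vertex, now including the contributions $\theta^\partial$, again total a full turn $2\pi$. Granting this, the phase sum $\sum(\theta_{e(\ws)}+\theta^\partial)$ plays the role of $\sum_{e\ni v}\theta_e=\pi$ and the same algebra gives $C(F)=1$. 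The delicate point is therefore combinatorial rather than algebraic: one must check that around each boundary vertex every rhombus, interior or added, is counted exactly once, and that the extended quad-graph indeed completes the angle to $2\pi$, even when half-rhombi overlap or carry angles outside $(0,\frac{\pi}{2})$ as cautioned in the preceding remarks. Once this bookkeeping is in place, flatness at the boundary follows from the same cancellation as in the interior, completing the verification for all inner faces.
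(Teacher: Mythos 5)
Your proposal follows essentially the same route as the paper: a face-by-face verification that $C(F)=1$ for the three types of inner faces, using the angle identities $\sum_{e\ni v}\theta_e=\pi$ at a primal vertex, $\sum_{e\in\partial f}\theta_e=(k-2)\frac{\pi}{2}$ at a dual vertex, and, at a boundary primal vertex, the fact that the added half-rhombus of $\Gquadext$ closes the full turn so that $\sum_j\theta_{e_j}+\theta^\partial=\pi$; this last identity is exactly what the paper invokes (with no more justification than you give), and your remark that the labeling ambiguity only replaces $C(F)$ by $C(F)^{-1}$ correctly disposes of that issue. One slip: in the quadrangle case you assert that the two numerator edges, being opposite, carry phases $\frac{\pi}{2}$ and $0$ respectively. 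Opposite edges of the rectangle inscribed in a rhombus are parallel, hence cross the \emph{same} diagonal and carry the \emph{same} phase; if your description were taken literally the curvature would be $(-1)\cdot\frac{i\cdot 1}{1\cdot i}=-1$, not $1$. Your final formula $(-1)\cdot e^{i\pi}=1$ corresponds to the correct assignment (both numerator edges at phase $\frac{\pi}{2}$, both denominator edges at $0$, as fixed by the coloring of Figure~\ref{fig:FigRhombusColoring}), so the error is in the justification rather than the conclusion, but it should be fixed. A smaller point: in the boundary computation the total angle is $\sum_j\theta_{e_j}+\theta^\partial$ with a single $\theta^\partial$, since each boundary primal-vertex face meets exactly one boundary external edge; your prose shows you understand this, but the written sum $\sum(\theta_{e(\ws)}+\theta^\partial)$ reads as if $\theta^\partial$ were added once per external edge.
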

\begin{proof}
We need to check that for every inner face $F$ of $\GQ$, $C(F)=1$. Inner faces
are of three types, there are rectangles, faces corresponding
to vertices of $\Gs$ and faces corresponding to vertices of the
restricted dual $\Gdres$. 

If $F$ is a rectangle, then by Point 1, the Kasteleyn curvature is,
$
C(F)=(-1)^{2-1}\frac{i^2}{1^2}=~1.
$
If $F$ is a face of length $2k$, corresponding to a vertex of the restricted
dual $\Gdres$, the situation, as well as the labeling of the vertices is
represented in Figure \ref{fig:FigKastFlat1}.

\begin{figure}[ht]
\begin{center}
\psfrag{e1}[c][c]{\scriptsize $e_1$}
\psfrag{e2}[c][c]{\scriptsize $e_2$}
\psfrag{t1}[c][c]{\tiny $\theta_{e_1}$}
\psfrag{t2}[c][c]{\tiny $\theta_{e_2}$}
\psfrag{w1}[c][c]{\scriptsize $\ws_1$}
\psfrag{w2}[c][c]{\scriptsize $\ws_2$}
\psfrag{b1}[c][c]{\scriptsize $\bs_1$}
\psfrag{b2}[c][c]{\scriptsize $\bs_2$}
\psfrag{...}[c][c]{\scriptsize $\cdots$}
\includegraphics[height=4cm]{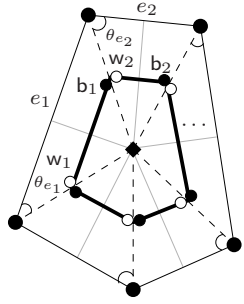}
\caption{Notations for a face of length $2k$, with
$k=5$.}\label{fig:FigKastFlat1}
\end{center}
\end{figure}

By Points 1 and 2(a), the Kasteleyn curvature is:
\begin{equation*}
C(F)=(-1)^{k-1}\frac{1}{e^{i\sum_{j=1}^k\bigl(\frac{3\pi}{2}-\theta_{e_j}\bigr
)}}=-\frac{1}{i^k e^{-i\sum_{j=1}^k\theta_{e_j}}}.
\end{equation*}
Observing that $-\sum_{j=1}^k \theta_{e_j}=-\frac{\pi}{2}k+\pi$, yields
$C(F)=1$.

If $F$ is a face of length $2k$, corresponding to a vertex of the primal graph
$\Gs$, the situation, as well as the labeling of the vertices is represented in
Figure \ref{fig:FigKastFlat2}. 

\begin{figure}[ht]
\begin{center}
\psfrag{eb}[c][c]{\scriptsize $e^{\partial}$}
\psfrag{o}[c][c]{\scriptsize $\rs$}
\psfrag{e1}[c][c]{\scriptsize $e_1$}
\psfrag{e2}[c][c]{\scriptsize $e_2$}
\psfrag{t1}[c][c]{\tiny $\theta_{e_1}$}
\psfrag{t2}[c][c]{\tiny $\theta_{e_2}$}
\psfrag{tb}[c][c]{\tiny $\theta^\partial$}
\psfrag{w1}[c][c]{\scriptsize $\ws_1$}
\psfrag{w2}[c][c]{\scriptsize $\ws_2$}
\psfrag{b1}[c][c]{\scriptsize $\bs_1$}
\psfrag{b2}[c][c]{\scriptsize $\bs_2$}
\psfrag{...}[c][c]{\scriptsize $\cdots$}
\includegraphics[height=4.3cm]{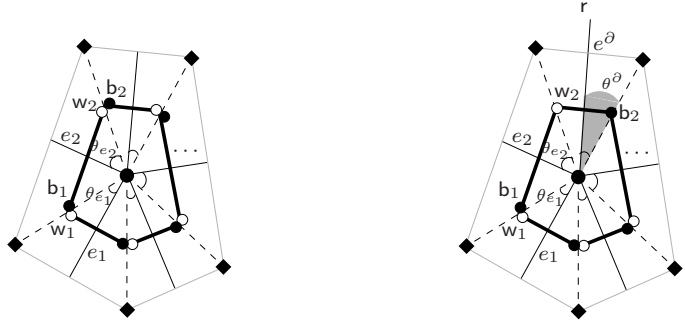}
\caption{Left: notations for a face not incident to the
boundary, with $k=5$. Right: notations for a face
incident to the boundary, with $k=4$.}\label{fig:FigKastFlat2}
\end{center}
\end{figure}

We first consider the case where the face is not
incident to the boundary. By Points 1 and 2(a), the Kasteleyn curvature is:
\begin{equation*}
C(F)=(-1)^{k-1}\frac{e^{i\sum_{j=1}^k\bigl(\frac{3\pi}{2}-\theta_{e_j}\bigr
)}}{i^k}=-\frac{i^k e^{-i\sum_{j=1}^k\theta_{e_j}}}{i^k}.
\end{equation*}
Observing that $\sum_{j=1}^k \theta_{e_j}=\pi$, yields $C(F)=1$. 

We now consider the case where the face is incident to the boundary. By Points
1,
2(a) and 2(b), the Kasteleyn curvature is:
\begin{equation*}
C(F)=(-1)^{k-1}\frac{e^{i\bigl[\sum_{j=1}^k\bigl(\frac{3\pi}{2}-\theta_{e_j
} \bigr
)-\theta^\partial\bigr]}}{i^k}=-\frac{i^k
e^{-i[\sum_{j=1}^k\theta_{e_j}+\theta^\partial]}}{i^k}.
\end{equation*}
Observing that $\sum_{j=1}^k \theta_{e_j}+\theta^\partial=\pi$, yields
$C(F)=1$. 
\end{proof}

\subsection{Conclusion}\label{sec:conclusion}

Let $\Gs$ be an isoradial graph having a
regular isoradial embedding. Consider an Ising model on $\Gs$, with critical coupling constants $\Js$ of Equation~\eqref{equ:IsingWeights}.
Consider the corresponding dimer model on the bipartite graph $\GQ$, with the critical dimer weights $\nu$ of 
Equation~\eqref{equ:KastWeights}. Let $\Ks$ be the Kasteleyn matrix of the graph $\GQ$, with the choice of phases of Section \ref{sec:KastPhase}.
That is, $\Ks$ has lines indexed by white vertices of $\GQ$, columns by black ones. The coefficient 
$\Ks_{\ws,\bs}$ is equal to 0 if $\ws\bs$ is 
not an edge of $\GQ$. If $\ws\bs$ is an edge $\es=\ws\bs$ of $\GQ$, it is equal to:
\begin{equation}\label{equ:KastMatrix1}
\Ks_{\ws,\bs}=
\begin{cases}
\sin\theta_e&\text{ if $\es$ crosses the dual edge $e^*$ of an edge $e$ of $\Gs$,}\\
i\cos\theta_e&\text{ if $\es$ crosses a primal edge $e$ of $\Gs$,}\\
e^{i(\frac{3\pi}{2}-\theta_{e(\ws)})}&\text{ if $\es$ is an external edge, not on the boundary of $\GQ$}\\
% & \text{ $\ws$ belongs to a rhombus of $\Gquad$ having half-angle $\theta_{e(\ws),}$}\\
e^{i[\frac{3\pi}{2}-(\theta_{e(\ws)}+\theta^\partial)]}&\text{ if $\es$ is an external edge 
on the boundary of $\GQ$}.
% &\text{ the boundary of two rhombi: one of 
% $\Gquad$ having half-angle $\theta_{e(\ws)}$,}\\ 
% &\text{ and a boundary rhombus
% having half-angle $\theta_{\partial(\ws)}$.}
\end{cases}
\end{equation}

Then, as a consequence of Theorem \ref{thm:Kuperberg} and Lemma \ref{lem:phasing}, we have:
\begin{prop}\label{prop:dimerK}
The dimer partition function of the graph $\GQ$ with critical weight function $\nu$, is equal to:
\begin{equation*}
\Zdimer(\GQ,\nu)=|\det \Ks|.
\end{equation*}
\end{prop}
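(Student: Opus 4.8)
The plan is to deduce the statement directly from Kuperberg's theorem (Theorem~\ref{thm:Kuperberg}) together with the flatness of the phasing established in Lemma~\ref{lem:phasing}. Kuperberg's theorem asserts that, for a finite planar graph equipped with a \emph{flat} phasing of its edges, the dimer partition function equals the modulus of the determinant of the associated Kasteleyn matrix \eqref{equ:KastMatrix}. Since the bipartite graph $\GQ$ admits the planar embedding described at the start of Section~\ref{sec:KastPhase}, and since Lemma~\ref{lem:phasing} shows that the phasing fixed in that section is flat, the only remaining task is to identify the matrix $\Ks$ with entries listed in Equation~\eqref{equ:KastMatrix1} as the abstract Kasteleyn matrix \eqref{equ:KastMatrix} built from the critical weights $\nu$ of Equation~\eqref{equ:KastWeights} and the phases of Points~1 and~2.

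First I would carry out this identification entry by entry, checking in each of the four cases that $\Ks_{\ws,\bs}=e^{i\phi_{\ws\bs}}\nu_\es$. If $\es$ crosses a dual edge $e^*$, then $\nu_\es=\sin\theta_e$ and $\phi_{\ws\bs}=0$, giving $\Ks_{\ws,\bs}=\sin\theta_e$. If $\es$ crosses a primal edge $e$, then $\nu_\es=\cos\theta_e$ and $\phi_{\ws\bs}=\tfrac{\pi}{2}$, giving $\Ks_{\ws,\bs}=i\cos\theta_e$. If $\es$ is an external edge not on the boundary of $\GQ$, then $\nu_\es=1$ and $\phi_{\ws\bs}=\tfrac{3\pi}{2}-\theta_{e(\ws)}$, so $\Ks_{\ws,\bs}=e^{i(\frac{3\pi}{2}-\theta_{e(\ws)})}$; and for an external edge on the boundary, $\nu_\es=1$ and $\phi_{\ws\bs}=\tfrac{3\pi}{2}-(\theta_{e(\ws)}+\theta^\partial)$, giving the last line of \eqref{equ:KastMatrix1}. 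This confirms that $\Ks$ is exactly the Kasteleyn matrix of $\GQ$ for the chosen data.

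With the identification in hand, I would then simply invoke Lemma~\ref{lem:phasing} to guarantee flatness and apply Theorem~\ref{thm:Kuperberg} to conclude $\Zdimer(\GQ,\nu)=|\det\Ks|$, which is the assertion of the proposition. I do not expect a genuine obstacle here: the substantive combinatorial work, namely verifying $C(F)=1$ on every inner face, has already been discharged in Lemma~\ref{lem:phasing}. The one point deserving a remark is that the near-isoradial embedding used in Section~\ref{sec:KastPhase} to \emph{define} the phases may fail to be planar when rhombi of $\Gquadext$ overlap; however, this is harmless, since flatness is a purely local condition on the faces of $\GQ$ and does not require planarity of that auxiliary embedding, while the planarity hypothesis needed for Kuperberg's theorem is supplied by the genuine planar embedding of $\GQ$ fixed independently at the start of the section.
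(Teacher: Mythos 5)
Your proposal is correct and follows exactly the route the paper takes: the paper states Proposition~\ref{prop:dimerK} as an immediate consequence of Theorem~\ref{thm:Kuperberg} and Lemma~\ref{lem:phasing}, after recording in Equation~\eqref{equ:KastMatrix1} precisely the entry-by-entry identification $\Ks_{\ws,\bs}=e^{i\phi_{\ws\bs}}\nu_\es$ that you verify. Your added remark about the auxiliary near-isoradial embedding being used only to define the phases matches the paper's own remark in Section~\ref{sec:conclusion}.
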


\begin{rem}$\,$
\begin{itemize}
\item Note that the determinant of $\Ks$ is real. In general, assigning phases to edges might induce a complex phase factoring out
in the determinant. The complex phase is that of any of the dimer configurations. Since the contribution of the
dimer configuration of $\GQ$ consisting of all edges crossing dual edges of $\Gs$ is real, we know that the determinant is real.
\item We have used the near-isoradial embedding of the graph $\GQ$ to define a flat phasing of the edges. Now that this is done,
it will be more convenient to consider a planar embedding of $\GQ$ such that inner faces correspond, either to crossings of primal and dual
vertices of $\Gs$ and $\Gs^*$, or to vertices of the graph $\Gs$, or to vertices of the restricted dual $\Gs^*$. 
The outer face corresponds to a vertex of the dual graph $\Gs^*$, or equivalently to the dual vertices of the extended dual graph
$\Gdext$ arising from the splitting of this vertex.
\end{itemize}
\end{rem}

\section{The Kasteleyn matrix as a Laplacian matrix}\label{sec:5}

In Section \ref{sec:ColumnSum}, we compute the sum of the columns of the Kasteleyn matrix $\Ks$ of the graph~$\GQ$.
Using this information and a reinterpretation of the vertices of $\GQ$ this allows us, in Section~\ref{sec:KastLapl}, to 
interpret the Kasteleyn matrix $\Ks$ as the Laplacian matrix
of a directed graph $\GOo$ constructed from $\GQ$. Using the matrix-tree theorem for directed graphs, we deduce that the
determinant of the matrix $\Ks$ counts oriented spanning trees of the directed graph $\GOo$. In Section \ref{sec:onemore}, 
we modify the directed graph $\GOo$ along the boundary. This is key to being able to handle the boundary of the graph.

\subsection{Column sum of the Kasteleyn matrix}\label{sec:ColumnSum}

Consider the bipartite graph $\GQ$. Recall that all vertices of $\GQ$ have degree 3;
given a white vertex $\ws$, let us denote its three neighbors $\bs_1,\bs_2,\bs_3$ in counterclockwise
order, where $\ws\bs_3$ denotes the external edge of $\GQ$, see Figure \ref{fig:SumOfWeights}. 

\begin{figure}[ht]
\begin{center}
\psfrag{o}[c][c]{\scriptsize $\rs$}
\psfrag{e1}[c][c]{\scriptsize $e_1$}
\psfrag{eb}[l][l]{\scriptsize $e^\partial$}
\psfrag{e}[c][c]{\scriptsize $e$}
\psfrag{te}[c][c]{\tiny $\theta_{e}$}
\psfrag{tb}[c][c]{\tiny $\theta^\partial$}
\psfrag{w}[c][c]{\scriptsize $\ws$}
\psfrag{b1}[c][c]{\scriptsize $\bs_1$}
\psfrag{b2}[c][c]{\scriptsize $\bs_2$}
\psfrag{b3}[c][c]{\scriptsize $\bs_3$}
\includegraphics[height=2cm]{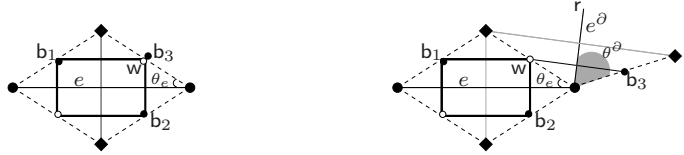}
\caption{Labeling of the neighbors $\bs_1,\bs_2,\bs_3$ of a white vertex $\ws$,
when the vertex $\ws$ is not on the boundary of $\GQ$ (left), and when it is
(right).}\label{fig:SumOfWeights}
\end{center}
\end{figure}

Let $\Ks$ be the Kasteleyn matrix of the graph $\GQ$ with critical dimer weights $\nu$, see Section \ref{sec:conclusion} and
Equation \eqref{equ:KastMatrix1} for definition. The following lemma computes the sum of the columns of the Kasteleyn matrix.

\begin{lem}\label{lem:sumofcolumns}
The sum of the columns of the Kasteleyn matrix $\Ks$ corresponding to a white vertex $\ws$ of $\GQ$ is equal to:
\begin{equation*}
\sum_{i=1}^3\Ks_{\ws,\bs_i}=
\begin{cases}
-ie^{-i\theta_e}(e^{-i\theta^\partial}-1) &\text{ if $\ws$ is a boundary vertex of $\GQ$}\\
0& \text{ otherwise}.
\end{cases}
\end{equation*}
\end{lem}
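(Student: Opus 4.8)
The plan is a direct computation: read the three nonzero entries of row $\ws$ off Equation~\eqref{equ:KastMatrix1} and add them. First I would use the degree-$3$ structure of $\GQ$. The white vertex $\ws$ is a corner of exactly one rectangle, itself contained in a rhombus of the extended quad-graph carrying an edge $e$ of $\Gs$; thus two of the three edges at $\ws$ lie in that rectangle and the third, $\ws\bs_3$, is external (this is the configuration of Figure~\ref{fig:SumOfWeights}). Of the two rectangle edges, the $4$-cycle structure of the rectangle together with the bipartite coloring of Figure~\ref{fig:FigRhombusColoring} forces one to cross the primal edge $e$ and the other to cross its dual $e^*$. Hence, by Equation~\eqref{equ:KastMatrix1}, the two rectangle contributions are $i\cos\theta_e$ and $\sin\theta_e$, in some order.

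Next I would record the external contribution $\Ks_{\ws,\bs_3}$. The one geometric point that needs genuine care is that the half-angle $\theta_{e(\ws)}$ appearing in the phase of the external edge coincides with $\theta_e$, the half-angle of the edge $e$ whose rhombus contains the rectangle of $\ws$; this is precisely the matching encoded in Figure~\ref{fig:SumOfWeights}. Granting it, Equation~\eqref{equ:KastMatrix1} gives the external entry $e^{i(\frac{3\pi}{2}-\theta_e)}$ when $\ws$ is not on the boundary of $\GQ$, and $e^{i[\frac{3\pi}{2}-(\theta_e+\theta^\partial)]}$ when it is.

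Finally I would simplify using the two identities $\sin\theta_e+i\cos\theta_e=ie^{-i\theta_e}$ and $e^{i\frac{3\pi}{2}}=-i$. In the interior case the external entry is $-ie^{-i\theta_e}$, so the three terms collapse to $ie^{-i\theta_e}-ie^{-i\theta_e}=0$. In the boundary case the external entry is $-ie^{-i\theta_e}e^{-i\theta^\partial}$, whence
\begin{equation*}
\sum_{i=1}^3\Ks_{\ws,\bs_i}=ie^{-i\theta_e}\bigl(1-e^{-i\theta^\partial}\bigr)=-ie^{-i\theta_e}\bigl(e^{-i\theta^\partial}-1\bigr),
\end{equation*}
as claimed. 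There is no analytic difficulty here; the only real obstacle is the bookkeeping, namely correctly assigning each of the three edges at $\ws$ to the right case of Equation~\eqref{equ:KastMatrix1} and confirming the angle entering the external phase.
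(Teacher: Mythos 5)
Your proposal is correct and follows essentially the same route as the paper: identify the three nonzero entries of row $\ws$ from Equation~\eqref{equ:KastMatrix1} (noting that the relevant half-angle for the external edge is the $\theta_e$ of the rhombus containing $\ws$, as in Figure~\ref{fig:SumOfWeights}) and sum them, the only cosmetic difference being that you use the identity $\sin\theta_e+i\cos\theta_e=ie^{-i\theta_e}$ directly where the paper expands everything into exponentials with a factor $\frac{i}{2}$. The extra care you take in assigning each edge at $\ws$ to the correct case of Equation~\eqref{equ:KastMatrix1} is exactly the right bookkeeping, and your final expressions agree with the lemma in both cases.
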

\begin{proof}
Suppose first that $\ws$ is not a boundary vertex of $\GQ$. Then, by definition of the Kasteleyn matrix $\Ks$, we have:
\begin{align}\label{equ:mw1}
\sum_{j=1}^3\Ks_{\ws,\bs_j}&=\sin\theta_{e}+i\cos\theta_e+e^{
i\bigl(\frac { 3\pi } {2} -\theta_e\bigr)}\nonumber\\\nonumber
&=\frac{i}{2}[-e^{i\theta_e}+e^{-i\theta_e}+
e^{i\theta_e}+e^{-i\theta_e}-2e^{-i\theta_e}
]\nonumber\\
&=0.
\end{align}

Suppose now that $\ws$ is a vertex on the boundary of $\GQ$. In this case, the
phase of the edge $\ws\bs_3$ changes, and we have:

\begin{align}\label{equ:mw2}
\sum_{j=1}^3\Ks_{\ws,\bs_j}&=\sin\theta_e+i\cos\theta_e+e^{
i\bigl(\frac{3\pi
} {2} -\theta_e-\theta^\partial\bigr)}\nonumber\\
&=\frac{i}{2}[-e^{i\theta_e}+e^{-i\theta_e}+
e^{i\theta_e}+e^{-i\theta_e}-2e^{-i(\theta_e-\theta^\partial)}
]\nonumber\\
&=ie^{-i\theta_e}(1-e^{-i\theta^\partial}).
\end{align}
\end{proof}

As a consequence of Lemma \ref{lem:sumofcolumns}, the coefficient $\Ks_{\ws,\bs_3}$ of the Kasteleyn matrix can be rewritten as:
\begin{equation}\label{equ:Kdiag}
\Ks_{\ws,\bs_3}=
\begin{cases}
-\Ks_{\ws,\bs_1}-\Ks_{\ws,\bs_2}-ie^{-i\theta_e}(e^{-i\theta^\partial}-1) &\text{ if $\ws$ is a boundary vertex of $\GQ$}\\
-\Ks_{\ws,\bs_1}-\Ks_{\ws,\bs_2}& \text{ otherwise}.
\end{cases}
\end{equation}

\subsection{Writing the Kasteleyn matrix as a Laplacian matrix}\label{sec:KastLapl}

Using Lemma \ref{lem:sumofcolumns}, we interpret the Kasteleyn matrix of the graph $\GQ$
as the Laplacian matrix of a related, directed graph.
Let us order the lines and columns of the matrix $\Ks$ in such a way that if
the $i$-th line corresponds to the white vertex $\ws$, then
the $i$-th column represents the black vertex $\bs_3$. The
coefficient $\Ks_{\ws,\bs_3}$ becomes a diagonal element, and we write it as in
Equation \eqref{equ:Kdiag}. Note that this
transformation only changes the determinant by an overall $\pm$ sign. Now, instead of labeling the line by $\ws$ and 
the column by $\bs_3$, we label it by a common vertex, denoted $\xs$. 

Geometrically, consider the directed graph $\GOo$ obtained from the graph $\GQ$ as follows.
\begin{itemize}
\item Replace edges of quadrangles of $\GQ$ by oriented ones, from the white
vertex to the black vertex. An oriented edge $(\xs,\ys)$ of a quadrangle is
assigned weight ${\rhoo}_{(\xs,\ys)}=\sin\theta_e$, if it crosses a dual edge $e^*$ of an edge $e$
of $\Gs$; and weight ${\rhoo}_{(\xs,\ys)}=i\cos\theta_e$, if
it crosses a primal edge $e$ of $\Gs$. Note that each quadrangle has two corners with two outgoing edges, and two corners
with two incoming edges.
\item Each external edge $\ws\bs_3$ is merged into a single vertex $\xs$.
\item Each vertex $\xs$ arising
from the merging of an edge $\ws\bs_3$ on the boundary of the outer face of $\GQ$
is joined to the vertex $\rs$, where recall that the vertex $\rs$ is the vertex of the extended graph $\Gsext$ 
corresponding to the outer face of the extended dual $\Gdext$. The oriented edge $(\xs,\rs)$ is assigned weight
${\rhoo}_{(\xs,\rs)}=ie^{-i\theta_e}(1-e^{-i\theta^\partial})$.
\end{itemize}
The directed graph $\GOo$ corresponding to the graph $\GQ$ of Figure \ref{fig:Iso1} is pictured in Figure \ref{fig:Iso4}.
Edges that are dotted are reduced to a
point. A \emph{boundary vertex} $\xs$ of the directed graph $\GOo$ is a vertex arising from
the merging of an edge $\ws\bs_3$ on the boundary of the outer face of $\GQ$. A boundary vertex $\xs$
has out-degree 3 and in-degree 2; an inner vertex $\xs$ has out-degree 2 and in-degree 2.

\begin{figure}[ht]
\begin{center}
\psfrag{r}[l][l]{\scriptsize $\rs$}
\psfrag{x}[l][l]{\scriptsize $\xs$}
\psfrag{sin}[l][l]{\scriptsize $\sin\theta_e$}
\psfrag{cos}[l][l]{\scriptsize $i\cos\theta_e$}
\psfrag{mbw}[l][l]{\scriptsize $ie^{-i\theta_e}(e^{-i\theta^\partial}-1)$}
\psfrag{te}[l][l]{\scriptsize $\theta_e$}
\psfrag{tb}[l][l]{\scriptsize $\theta^\partial$}
\psfrag{e}[l][l]{\scriptsize $e$}
\psfrag{G0}[l][l]{$\GOo$}
\psfrag{l1}[l][l]{\scriptsize Weights at a boundary vertex $\xs$}
\psfrag{l2}[l][l]{\scriptsize Weights at a non-boundary vertex $\xs$}
\includegraphics[width=\linewidth]{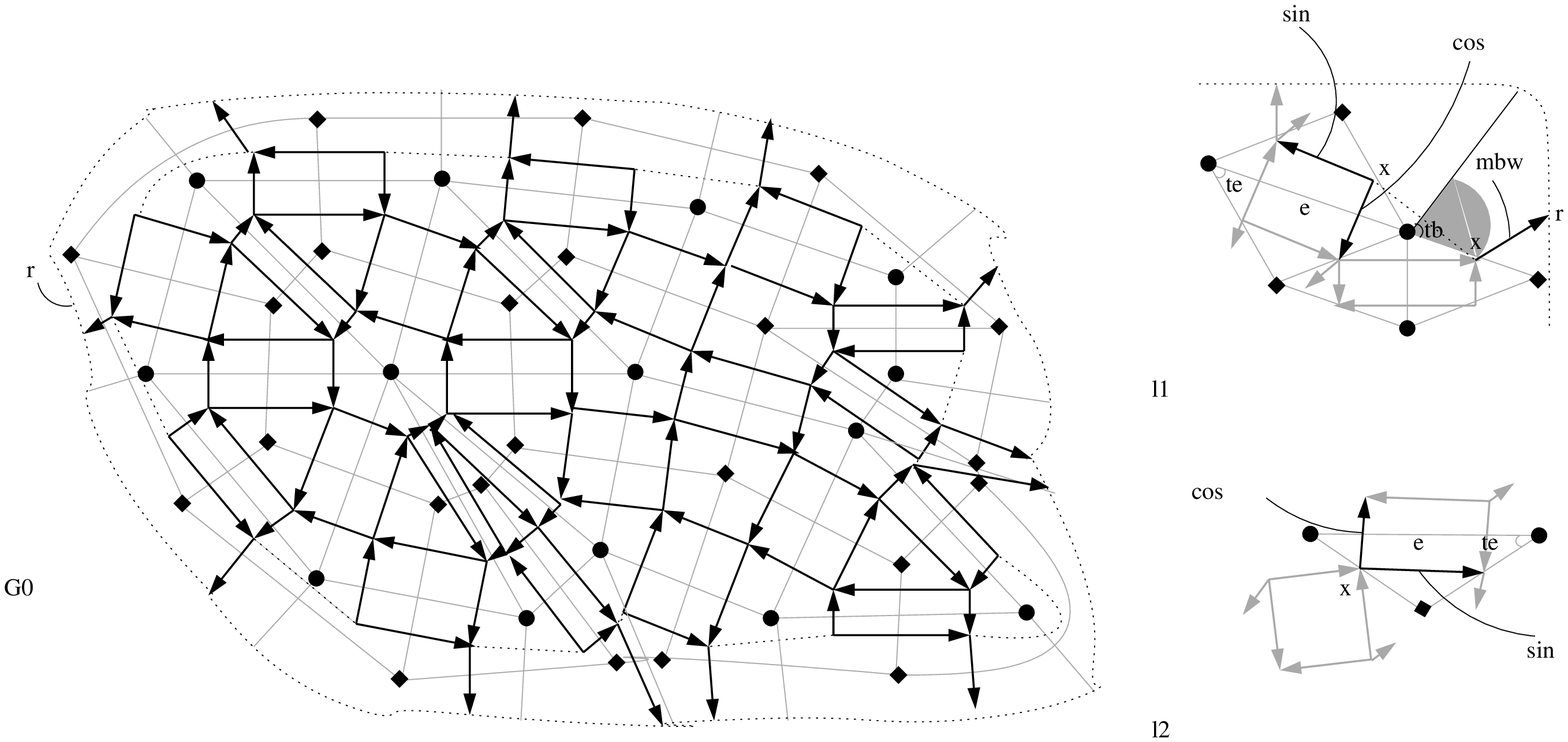}
\caption{Left: the directed graph $\GOo$ obtained from the bipartite graph $\GQ$. Right: weight $\rhoo$ assigned to edges.}
\label{fig:Iso4}
\end{center}
\end{figure}

In accordance with the notation introduced in Section \ref{sec:LaplacianDirected}, we let $\Delta_{\GOo}$ be the
Laplacian matrix of the directed graph $\GOo$ with weight function $\rhoo$
assigned to edges. Observing that the Kasteleyn matrix $\Ks$ is exactly the matrix $\Delta_{\GOo}$ from which the 
line and the column corresponding to the vertex $\rs$ have been removed we deduce, from Theorem \ref{thm:matrixtree}, 
the following proposition. 

\begin{prop}\label{prop:DimerTrees1}
The dimer partition function of the graph $\GQ$ with weight function $\nu$, is equal to 
the absolute value of the $^{\rs}\mathrm{OSTs}$ partition
function of the graph $\GOo$ with weight function $\rhoo$:
\begin{equation*}
\Zdimer(\Gs,\nu)=|\ZOST^{\rs}(\GOo,\rhoo)|.
\end{equation*}
\end{prop}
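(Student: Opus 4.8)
The plan is to combine Proposition~\ref{prop:dimerK}, which already gives $\Zdimer(\GQ,\nu)=|\det\Ks|$, with the matrix-tree theorem for directed graphs (Theorem~\ref{thm:matrixtree}). The whole task then reduces to establishing, after a harmless reordering of lines and columns, the matrix identity
\begin{equation*}
\Ks=\Delta_{\GOo}^{(\rs)},
\end{equation*}
that is, that the Kasteleyn matrix is exactly the Laplacian matrix of $\GOo$ with the line and column of the root $\rs$ deleted. Granting this, Theorem~\ref{thm:matrixtree} yields $\det\Delta_{\GOo}^{(\rs)}=\ZOST^{\rs}(\GOo,\rhoo)$, and taking absolute values gives $\Zdimer(\GQ,\nu)=|\det\Ks|=|\ZOST^{\rs}(\GOo,\rhoo)|$, as claimed.

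First I would fix the reordering. The lines of $\Ks$ are indexed by white vertices $\ws$ and its columns by black vertices $\bs$; since the external edges of $\GQ$ pair each white vertex $\ws$ with a unique black neighbor $\bs_3=\bs_3(\ws)$, these edges realize a bijection between $\Ws$ and $\Bs$. I permute the columns so that, for each $\ws$, the column of $\bs_3(\ws)$ occupies the position of the line of $\ws$, and I relabel the common index by the merged vertex $\xs$ of $\GOo$. This permutation changes the determinant by at most a sign, which disappears upon taking absolute values. After the reordering, the entry in line $\xs$ and column $\xs'$ is $\Ks_{\ws,\bs_3(\ws')}$, where $\ws,\ws'$ are the white vertices merged into $\xs,\xs'$.

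The off-diagonal entries then match the Laplacian by construction. For $\xs'\neq\xs$, the coefficient $\Ks_{\ws,\bs_3(\ws')}$ is nonzero precisely when $\bs_3(\ws')\in\{\bs_1(\ws),\bs_2(\ws)\}$, that is, when $\ws$ is joined to $\bs_3(\ws')$ by a quadrangle edge; this quadrangle edge is exactly the oriented edge $(\xs,\xs')$ of $\GOo$, whose weight $\rhoo_{(\xs,\xs')}$ was defined to be the corresponding Kasteleyn coefficient $\sin\theta_e$ or $i\cos\theta_e$. Hence $\Ks_{\ws,\bs_3(\ws')}=\rhoo_{(\xs,\xs')}=(\Delta_{\GOo})_{\xs,\xs'}$, as required.

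The crux of the argument, and the step that genuinely uses the boundary construction, is the diagonal. By definition of the Laplacian, $(\Delta_{\GOo})_{\xs,\xs}=-\sum_{\xs'}\rhoo_{(\xs,\xs')}$, the sum running over all edges leaving $\xs$, including the edge to $\rs$ when $\xs$ is a boundary vertex. The reordered diagonal entry is $\Ks_{\ws,\bs_3}$, which Equation~\eqref{equ:Kdiag}, a direct consequence of the column-sum computation of Lemma~\ref{lem:sumofcolumns}, expresses as $-\Ks_{\ws,\bs_1}-\Ks_{\ws,\bs_2}$ for an interior vertex, with an additional term for a boundary vertex. For an interior $\ws$, the vanishing of the column sum is precisely what guarantees that $\xs$ has no edge to $\rs$, so the diagonal is the negative sum of its two outgoing quadrangle weights. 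For a boundary $\ws$, the nonzero column sum is exactly the weight $\rhoo_{(\xs,\rs)}$ of the extra edge to $\rs$, so that again $\Ks_{\ws,\bs_3}=(\Delta_{\GOo})_{\xs,\xs}$. I expect this diagonal bookkeeping along the boundary, namely checking that the column-sum excess coincides with the prescribed weight of the edge to $\rs$, to be the only delicate point; away from the boundary the identification is routine. Once the diagonal and off-diagonal entries are matched, the matrix identity holds and the proposition follows from Theorem~\ref{thm:matrixtree}.
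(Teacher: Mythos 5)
Your proposal is correct and follows the paper's own proof essentially verbatim: Section \ref{sec:KastLapl} performs exactly this column reordering along the external-edge bijection $\ws\mapsto\bs_3(\ws)$, uses Lemma \ref{lem:sumofcolumns} via Equation \eqref{equ:Kdiag} to identify the diagonal entry with the Laplacian row sum of $\GOo$, and concludes from Theorem \ref{thm:matrixtree} together with Proposition \ref{prop:dimerK}. The one point to watch is the sign convention at the boundary: for $\Ks_{\ws,\bs_3}=(\Delta_{\GOo})_{\xs,\xs}$ one needs $\rhoo_{(\xs,\rs)}$ to be the \emph{negative} of the column sum, i.e. $ie^{-i\theta_e}(e^{-i\theta^\partial}-1)$, which is the convention of Figure \ref{fig:Iso4} and of the later computations, rather than literally "the column-sum excess" as you phrase it.
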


\subsection{One more graph transformation}\label{sec:onemore}

From the directed graph $\GOo$ with weight function $\rhoo$, we construct a directed graph $\GO$ 
with weight function $\rho$ on the edges, such that their respective $^{\rs}\mathrm{OSTs}$ partition functions are equal.
The directed graph $\GO$ only differs from the graph $\GOo$ along the boundary. The purpose of this transformation
is to have a directed graph with in and out degree two at every vertex, and to have half-quadrangles along the boundary, with
one corner and two outgoing edges. This is key to being able to handle the boundary of the graph.

\begin{itemize}
\item For every boundary vertex $\xs$ of $\GOo$ arising from the merging of a boundary edge $\ws\bs_3$ of $\GQ$, 
undo the merging procedure. That is, we replace
the vertex $\xs$ by the two vertices $\ws$ and $\bs_3$,
add the oriented edge $(\bs_3,\ws)$ and assign it weight,
\begin{equation*}
\rho_{(\bs_3,\ws)}=1.
\end{equation*}
The edge $(\xs,\rs)$ exiting the
vertex $\xs$ is drawn to exit the vertex $\bs_3$, and the edge  $(\bs_3,\rs)$ is assigned weight
\begin{equation*}
\rho_{(\bs_3,\rs)}=e^{-i\theta^\partial}-1.
\end{equation*}
\item All other vertices, edges and edge-weights are left unchanged:
\begin{equation*}
\rho_{(\xs,\ys)}={\rhoo}_{(\xs,\ys)}.
\end{equation*}
\end{itemize}

The transformation is pictured in Figure \ref{fig:FigWeight2}. 

\begin{figure}[ht]
\begin{center}
\psfrag{r}[l][l]{\scriptsize $\rs$}
\psfrag{x}[l][l]{\scriptsize $\xs$}
\psfrag{w}[c][c]{\scriptsize $\ws$}
\psfrag{b3}[c][c]{\scriptsize $\bs_3$}
\psfrag{sin}[l][l]{\scriptsize $\sin\theta_e$}
\psfrag{cos}[l][l]{\scriptsize $i\cos\theta_e$}
\psfrag{mbw}[l][l]{\scriptsize $ie^{-i\theta_e}(e^{-i\theta^\partial}-1)$}
\psfrag{te}[l][l]{\scriptsize $\theta_e$}
\psfrag{tb}[l][l]{\scriptsize $\theta^\partial$}
\psfrag{e}[l][l]{\scriptsize $e$}
\psfrag{mb1}[l][l]{\scriptsize $e^{-i\theta^\partial}-1$}
\psfrag{G0}[l][l]{$\GOo$}
\psfrag{G}[l][l]{$\GO$}
\includegraphics[width=10cm]{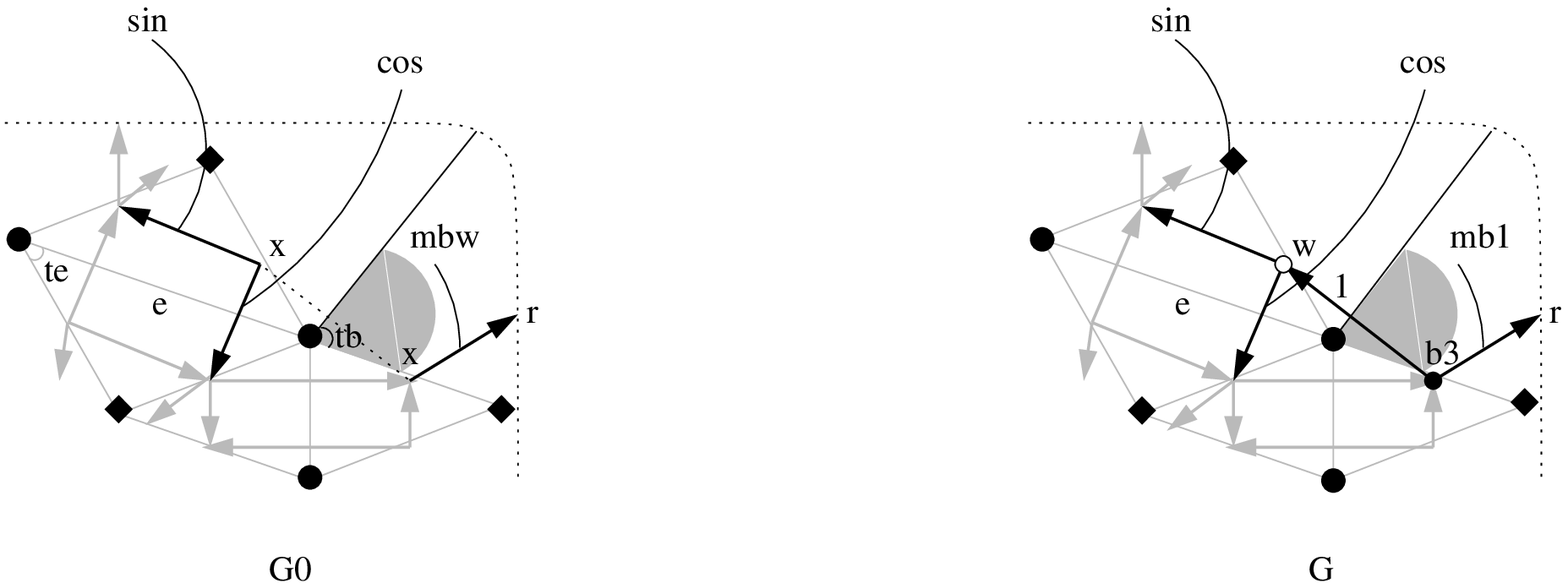}
\caption{From the graph $\GOo$ to the graph $\GO$.}\label{fig:FigWeight2}
\end{center}
\end{figure}

Observe that boundary vertices now have in and out degree equal to two, and that the edges $(\bs_3,\ws)$, $(\bs_3,\rs)$ can 
be seen as forming one corner of a quadrangle. The next proposition proves that this transformation preserves 
$^{\rs}\mathrm{OSTs}$ partition functions.
\begin{prop}\label{prop:treestrees}
The $^{\rs}\mathrm{OSTs}$ partition function of the graph $\GO$ with weight function $\rho$ is equal to the 
$^{\rs}\mathrm{OSTs}$ partition function of the graph $\GOo$ with weight function $\rhoo$:
\begin{equation*}
\ZOST^{\rs}(\GO,\rho)=\ZOST^{\rs}(\GOo,\rhoo).
\end{equation*}
\end{prop}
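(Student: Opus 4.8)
The plan is to prove the identity by a weight-preserving correspondence between $^{\rs}\mathrm{OSTs}$ of the two graphs, carried out independently at each boundary location. Since both sides are \emph{defined} as weighted sums over trees, such a correspondence yields the stated exact equality directly; an alternative would be to compare the two Laplacian minors $\Delta_{\GO}^{(\rs)}$ and $\Delta_{\GOo}^{(\rs)}$ through a one-vertex Schur complement and invoke Theorem~\ref{thm:matrixtree}, but the tree correspondence is cleaner and avoids tracking the permutation sign relating $\Ks$ to $\Delta_{\GOo}^{(\rs)}$. Recall that $\GO$ differs from $\GOo$ only at each boundary vertex $\xs$: in $\GOo$ the vertex $\xs$ carries the two outgoing quadrangle edges $(\xs,\ys_1),(\xs,\ys_2)$ of weights $p,q\in\{\sin\theta_e,i\cos\theta_e\}$ and the outgoing root edge $(\xs,\rs)$ of weight $\rhoo_{(\xs,\rs)}$; the splitting produces $\ws$ and $\bs_3$, the quadrangle edges now leave $\ws$, the former incoming edges now enter $\bs_3$, and one adds $(\bs_3,\ws)$ of weight $1$ and $(\bs_3,\rs)$ of weight $e^{-i\theta^\partial}-1$.

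First I would define a map $\Psi\colon\T^{\rs}(\GO)\to\T^{\rs}(\GOo)$ acting locally at each boundary location. In an $^{\rs}\mathrm{OST}$ of $\GO$, both $\ws$ and $\bs_3$ have out-degree one, so exactly four local patterns occur: the \emph{contracted} patterns $\bs_3\to\ws$ together with $\ws\to\ys_j$, and the \emph{root} patterns $\bs_3\to\rs$ together with $\ws\to\ys_j$, for $j\in\{1,2\}$. In a contracted pattern, $\Psi$ contracts the edge $(\bs_3,\ws)$, merging $\bs_3$ and $\ws$ back into $\xs$ and turning the local edge into $(\xs,\ys_j)$. In a root pattern, the vertex $\ws$ is a source (its only possible incoming edge $(\bs_3,\ws)$ is absent), and $\Psi$ deletes this source together with its outgoing edge and renames $\bs_3$ as $\xs$, producing the local edge $(\xs,\rs)$.

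Next I would verify that $\Psi$ is well defined and compute its fibres. Contracting an edge of an $^{\rs}\mathrm{OST}$ again gives an $^{\rs}\mathrm{OST}$ of the contracted graph rooted at $\rs$, and deleting a leaf-source likewise preserves the $^{\rs}\mathrm{OST}$ property, so the image always lies in $\T^{\rs}(\GOo)$. For a fixed $\Ts_0\in\T^{\rs}(\GOo)$ I would read off the fibre location by location: if $\Ts_0$ uses $(\xs,\ys_j)$ there is exactly one preimage, namely the contracted pattern with $\ws\to\ys_j$; if $\Ts_0$ uses $(\xs,\rs)$ there are exactly two preimages, the root patterns with $\ws\to\ys_1$ and with $\ws\to\ys_2$. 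The only point needing care is that both reinsertions in the root case are admissible: adjoining a source $\ws$ with a single outgoing edge $\ws\to\ys_j$ to an existing $^{\rs}\mathrm{OST}$ can never create a cycle, so acyclicity is automatic and no configuration is lost.

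Finally I would match weights fibre by fibre. Away from the boundary one has $\rho=\rhoo$, so only the local factors matter. For a contracted pattern the local weight is $1\cdot p=p$ (resp. $1\cdot q=q$), matching $\rhoo_{(\xs,\ys_j)}$; for the two root preimages the local weights are $\rho_{(\bs_3,\rs)}\rho_{(\ws,\ys_1)}+\rho_{(\bs_3,\rs)}\rho_{(\ws,\ys_2)}=(e^{-i\theta^\partial}-1)(p+q)$, which must equal $\rhoo_{(\xs,\rs)}$. This single identity is the step I expect to be the main obstacle, and it is where the precise value of the boundary weight is used: since $p+q=\sin\theta_e+i\cos\theta_e=ie^{-i\theta_e}$, one gets $(e^{-i\theta^\partial}-1)(p+q)=ie^{-i\theta_e}(e^{-i\theta^\partial}-1)$, which is exactly the boundary weight $\rhoo_{(\xs,\rs)}$ produced by the column-sum rewriting \eqref{equ:Kdiag}. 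Consequently $\sum_{\Ts\in\Psi^{-1}(\Ts_0)}\mathrm{wt}(\Ts)=\mathrm{wt}(\Ts_0)$ for every $\Ts_0$, and summing over $\Ts_0\in\T^{\rs}(\GOo)$ gives $\ZOST^{\rs}(\GO,\rho)=\ZOST^{\rs}(\GOo,\rhoo)$.
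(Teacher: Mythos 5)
Your proposal is correct and follows essentially the same route as the paper: a local, weight-preserving correspondence at each boundary vertex in which the two quadrangle-edge patterns match one-to-one and the root-edge pattern of $\GOo$ has a two-element fibre whose weights sum to $\rhoo_{(\xs,\rs)}=(e^{-i\theta^\partial}-1)(\sin\theta_e+i\cos\theta_e)=ie^{-i\theta_e}(e^{-i\theta^\partial}-1)$. The only cosmetic difference is that you describe the map in the direction $\GO\to\GOo$ (edge contraction and deletion of a source leaf), whereas the paper builds the inverse correspondence and verifies acyclicity through intermediate configurations; the content is the same.
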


\begin{proof}

The proof consists in providing a weight-preserving mapping between $^{\rs}\mathrm{OSTs}$ of $\GOo$ and 
$^{\rs}\mathrm{OSTs}$ of $\GO$.
Recall that an OST is characterized by the fact that it has one outgoing edge at every vertex except the root $\rs$, and that it
contains no cycle. We start by exhibiting a weight-preserving mapping between oriented edge configurations of $\GOo$ and $\GO$ 
having one outgoing edge at every vertex except the root, and then show that this mapping preserves the property of not having cycle.

In the graph $\GOo$, a boundary vertex $\xs$ has out-degree three. As a consequence,
an oriented edge configuration of $\GOo$ having one outgoing edge at every vertex except the root, is  
locally one of the three configurations A1-A2-A3. In a similar way,
considering the two corresponding vertices $\bs_3$ and $\ws$ of $\GO$, each having out-degree two, an 
oriented edge configuration of $\GO$ having one outgoing edge at every vertex except the root, is  
locally one of the four configurations D1-D2-D3a-D3b. Consider the mapping between
such configurations, which does not change the configuration at non-boundary
vertices, and which acts as in Figure \ref{fig:FigWeight} (line A - line D)
for boundary vertices, \emph{i.e} A1 is mapped to D1, A2 is mapped to D2, A3 is mapped to $\{\mathrm{D3a,D3b}\}$.

\begin{figure}[ht]
\begin{center}
\psfrag{A1}[c][c]{\scriptsize A1}
\psfrag{A2}[c][c]{\scriptsize A2}
\psfrag{A3}[c][c]{\scriptsize A3}
\psfrag{B3a}[c][c]{\scriptsize B3}
%\psfrag{B3b}[c][c]{\scriptsize B3b}
\psfrag{C3a}[c][c]{\scriptsize C3a}
\psfrag{C3b}[c][c]{\scriptsize C3b}
\psfrag{D1}[c][c]{\scriptsize D1}
\psfrag{D2}[c][c]{\scriptsize D2}
\psfrag{D3a}[c][c]{\scriptsize D3a}
\psfrag{D3b}[c][c]{\scriptsize D3b}
\psfrag{x}[c][c]{\scriptsize $\xs$}
\psfrag{w}[c][c]{\scriptsize $\ws$}
\psfrag{b3}[c][c]{\scriptsize $\bs_3$}
\includegraphics[width=\linewidth]{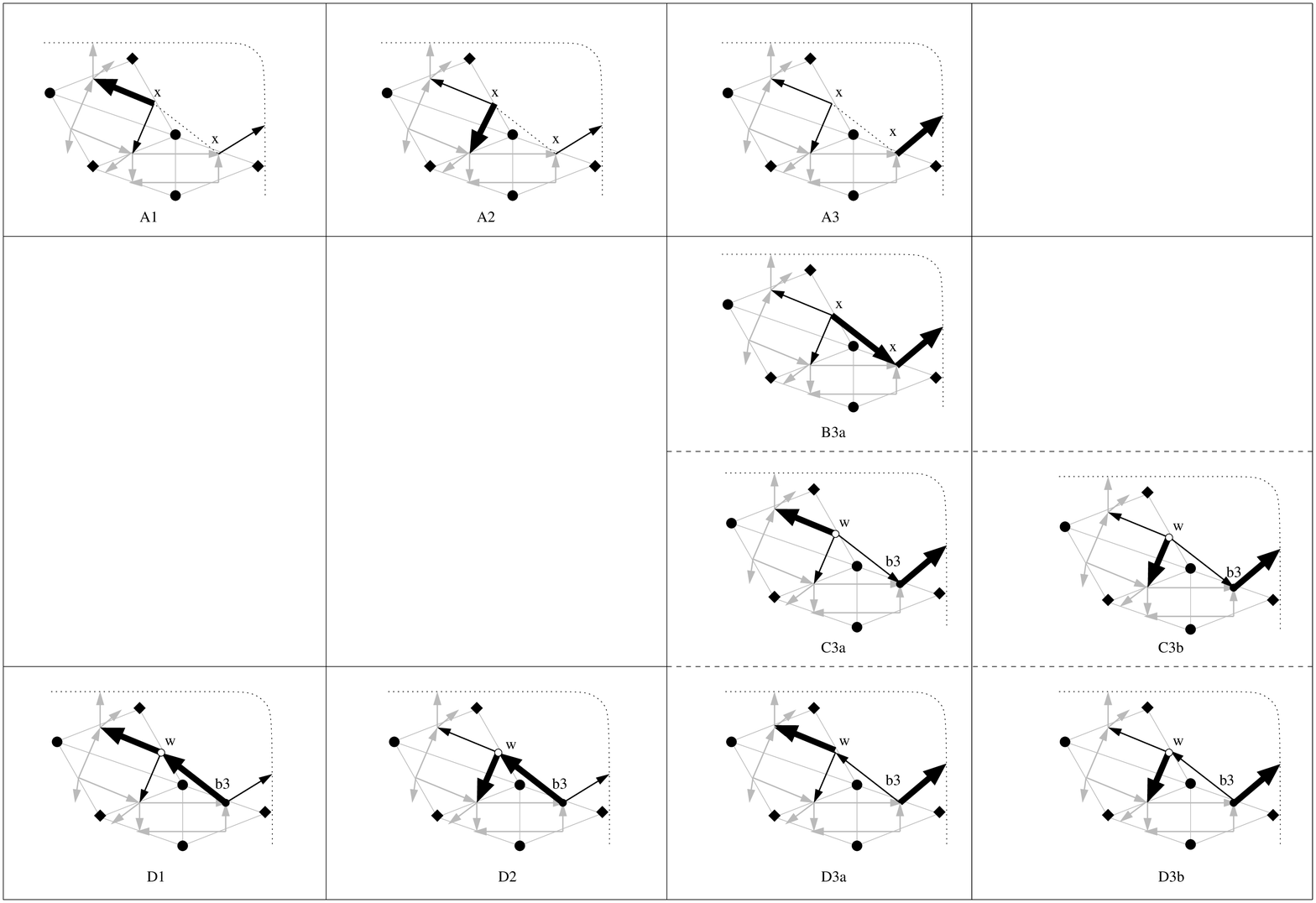}
\caption{From $^{\rs}\mathrm{OSTs}$ of $\GOo$ to $^{\rs}\mathrm{OSTs}$ of $\GO$.}\label{fig:FigWeight}
\end{center}
\end{figure}

Let us prove that if the weight of such a configuration is the product of its edge-weights, then the mapping is
weight-preserving.

The mapping leaves edges exiting non-boundary vertices unchanged, and by definition $\rhoo$ and $\rho$
are equal on such edges, implying that the contributions are equal.
It remains to check that the contributions of edges exiting
boundary vertices are the same. Let us fix a boundary vertex $\xs$ of $\GOo$ and the corresponding boundary vertices 
$\ws,\bs_3$ of $\GO$. By definition, the weight function $\rho$ assigns 
weight 1 to the edge $(\bs_3,\ws)$, so that the weight is preserved for
Cases 1 and 2. Let us check Case 3. The
contribution of the edge exiting $\xs$ is:
$$
{\rhoo}_{(\xs,\rs)}=ie^{-i\theta_e}(e^{-i\theta^{\partial}}-1).
$$
By definition of the weight function $\rho$, the contribution at $\ws$ and $\bs_3$ of the corresponding
configurations of $\GO$ is:
$$
\rho_{(\bs_3,\rs)}[\rho_{(\ws,\bs_1)}+\rho_{(\ws,\bs_2)}]=
(e^{-i\theta^\partial}-1)
(\sin\theta_e+i\cos\theta_e)=(e^{-i\theta^\partial}-1)ie^{
-i\theta_e}.
$$
As a consequence, the mapping is weight-preserving.

Let us now prove that the mapping preserves OSTs, \emph{i.e.} let us show that
an oriented edge configuration of $\GOo$ is an $^{\rs}\mathrm{OSTs}$ of $\GOo$ if and only the
oriented edge configuration(s) obtained from the mapping is (are) OST(s) of
$^{\rs}\mathrm{OSTs}$ of $\GO$. Since the mapping preserves the property of having one outgoing edge at every vertex except the root,
this amounts to showing that the mapping preserves the property of not containing cycles. 

Cases 1 and 2 are clear because the vertex $\xs$ has been
split into two vertices and only the edge joining those two vertices is added. 
To prove Case 3, we use the intermediate lines B3 and C3a-C3b of 
Figure \ref{fig:FigWeight}. The statement between 
A3 and B3 is clear for the same reason as Cases 1 and 2. The statement holds between C3a and D3a (C3b and D3b) 
because the oriented edge configurations and the unoriented versions of the underlying graphs are the same,
what changes is the orientation of one unused edge of the graph. So we are left with proving that the statement holds 
between B3 and C3a-C3b. We only prove it for B3 and C3a, since the argument is the same for B3 and C3b.
The configurations B3 and C3a are defined on the same directed graph and differ at a single edge. Let us denote by $\es_{\text{\tiny{B}}}$, respectively 
$\es_{\text{\tiny{C}}}$, the edge contained in the configuration B3 and not C3a (resp. in C3a and not B3); and let 
$\ws$ be their common vertex.

The configuration C3a is obtained from the configuration B3 by adding the edge $\es_{\text{\tiny{C}}}$ and then removing the edge 
$\es_{\text{\tiny{B}}}$. Consider the oriented configuration $\Fs$ obtained by adding $\es_{\text{\tiny{C}}}$ to B3, then since
every vertex except the root of B3 and C3a has one outgoing edge, the oriented configuration $\Fs$ has outdegree two at the common
vertex $\ws$. Now, suppose that B3 contains no cycle, then $\Fs$ has a single cycle containing $\es_{\text{\tiny{C}}}$. 
When removing the edge $\es_{\text{\tiny{B}}}$, this cycle is broken if 
and only if the edge $\es_{\text{\tiny{B}}}$ is an edge of the cycle. We thus need to show that the edge $\es_{\text{\tiny{B}}}$ 
belongs to the cycle created when adding the edge $\es_{\text{\tiny{C}}}$ to B3. 
The common vertex $\ws$ belongs to the cycle, if 
the cycle does not contain the edge $\es_{\text{\tiny{B}}}$, then the vertex $\ws$ must be incident to three edges. We know that
$\ws$ has outdegree two, meaning that it must have indegree 1. This is not possible, because $\ws$ has only outgoing edges. We 
have thus proved that property of not having cycles is preserved. This argument being symmetric, we have proved an 
`if and only if'.
\end{proof}

From Proposition \ref{prop:DimerTrees1} and Proposition \ref{prop:treestrees}, we deduce the following:
\begin{cor}\label{cor:DimerTrees1}
The dimer partition function of the graph $\GQ$ with weight function $\nu$, is equal to the absolute value of the
$^{\rs}\mathrm{OSTs}$ partition function of the graph $\GO$ with weight function $\rho$:
\begin{equation*}
\Zdimer(\GQ,\nu)=|\ZOST^{\rs}(\GO,\rho)|.
\end{equation*}
\end{cor}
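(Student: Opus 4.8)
The plan is to obtain the corollary by simply chaining the two results that immediately precede it, so that no new combinatorial or analytic work is required. First I would invoke Proposition \ref{prop:DimerTrees1}, which already identifies the dimer partition function of $\GQ$ with the modulus of the $^{\rs}\mathrm{OST}$ partition function of the intermediate directed graph $\GOo$ carrying the weight function $\rhoo$, that is $\Zdimer(\GQ,\nu)=|\ZOST^{\rs}(\GOo,\rhoo)|$. Next I would apply Proposition \ref{prop:treestrees}, which provides the exact equality $\ZOST^{\rs}(\GO,\rho)=\ZOST^{\rs}(\GOo,\rhoo)$. Substituting this equality into the previous relation and taking absolute values gives $\Zdimer(\GQ,\nu)=|\ZOST^{\rs}(\GO,\rho)|$, which is exactly the assertion of the corollary.

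Because both input identities are already proved, I do not expect any genuine obstacle here: the corollary is a one-line deduction, and in fact the substantive content has all been deferred to Proposition \ref{prop:DimerTrees1} (Kasteleyn theory plus the matrix-tree theorem) and Proposition \ref{prop:treestrees} (the boundary transformation). The single subtlety worth flagging is the role of the absolute value. The weight functions $\rhoo$ and $\rho$ are complex-valued --- the boundary edges carry weights such as $ie^{-i\theta_e}(1-e^{-i\theta^\partial})$ on $\GOo$ and $e^{-i\theta^\partial}-1$ on $\GO$ --- so a priori each $^{\rs}\mathrm{OST}$ partition function is only a complex number, whereas the left-hand side $\Zdimer(\GQ,\nu)$ is a nonnegative real. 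What makes the two moduli agree with no extra argument is precisely the strength of Proposition \ref{prop:treestrees}: the passage from $\GOo$ to $\GO$ (undoing the merging of each boundary vertex and redistributing its boundary weights) is shown to preserve the \emph{complex} weighted sum over oriented spanning trees exactly, not merely up to modulus. Hence $|\ZOST^{\rs}(\GOo,\rhoo)|=|\ZOST^{\rs}(\GO,\rho)|$ is immediate, and the chain closes.
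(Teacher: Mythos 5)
Your proposal is correct and is exactly the paper's argument: the corollary is stated as an immediate consequence of chaining Proposition \ref{prop:DimerTrees1} with the exact equality of Proposition \ref{prop:treestrees}. Your remark that the latter preserves the complex-valued partition function (not just its modulus) is a fair point, and it is indeed what makes the one-line deduction go through.
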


\section{Oriented spanning trees of $\GO$ and of the extended double}\label{sec:6}

In this section, we describe a weight-preserving mapping from OSTs of the graph $\GO$ to a family of spanning
trees of the extended double graph $\GDext$ of $\Gs$, which we then characterize.

\subsection{Dual spanning trees of the extended double graph}\label{sec:dual}

Consider the dual graph $\GOd$ of the undirected version of $\GO$, see for example Figure~\ref{fig:FigIso5} (left).
Eventhough $\GOd$ is not bipartite, its vertices are naturally split into black and white vertices as follows.
\begin{itemize}
\item[-] $\bullet$-black: vertices of the primal graph $\Gs$,
\item[-] $\smalllozenge$-black: vertices of the extended dual graph $\Gdext$.
\item[-] White: dual vertices of the quadrangles of $\GO$.
% \item[-] $\smalltriangle$-black, corresponding to faces created in $\GO$ when adding the root vertex
% $\rs$.
\end{itemize}

Recall from Section \ref{sec:LaplacianDirected} that if $\Ts$ is a spanning tree of a graph, the dual configuration
$\Ts^*$ consisting of the complement of the edges dual to $\Ts$, is a spanning tree of the dual graph. As a consequence,
there is a one-to-one correspondence between $^{\rs}\mathrm{OSTs}$ of $\GO$ and dual spanning trees of $\GOd$.

The graph $\GOd$ is nearly the graph we are aiming for; we need to make one more transformation so that it becomes the \emph{extended 
double graph} of $\Gs$, defined as follows. Consider the extended dual graph $\Gdext$ of $\Gs$, and the dual graph 
$\Gsext$ of $\Gdext$. The \emph{extended double graph} $\GDext$ of $\Gs$, consists of the graph $\Gdext$ and the graph $\Gsext$, with a white
vertex added at the crossing of each primal and dual edge; one then removes all edges connected to the vertex $\rs$ 
of $\Gsext$ 
corresponding to the outer face of $\Gdext$. An example of extended double graph is given in Figure~\ref{fig:FigIso5} (right).
The construction of the double graph first appears in the paper of Temperley \cite{Temperley}
proving a bijection between spanning trees of the square grid and perfect matchings its the double graph. This bijection
has been generalized to more general planar graphs and oriented spanning trees by Kenyon, Propp and Wilson in \cite{KPW}.

\begin{figure}[ht]
\begin{center}
% \psfrag{u0}[c][c]{\scriptsize $u_0$}
\includegraphics[width=\linewidth]{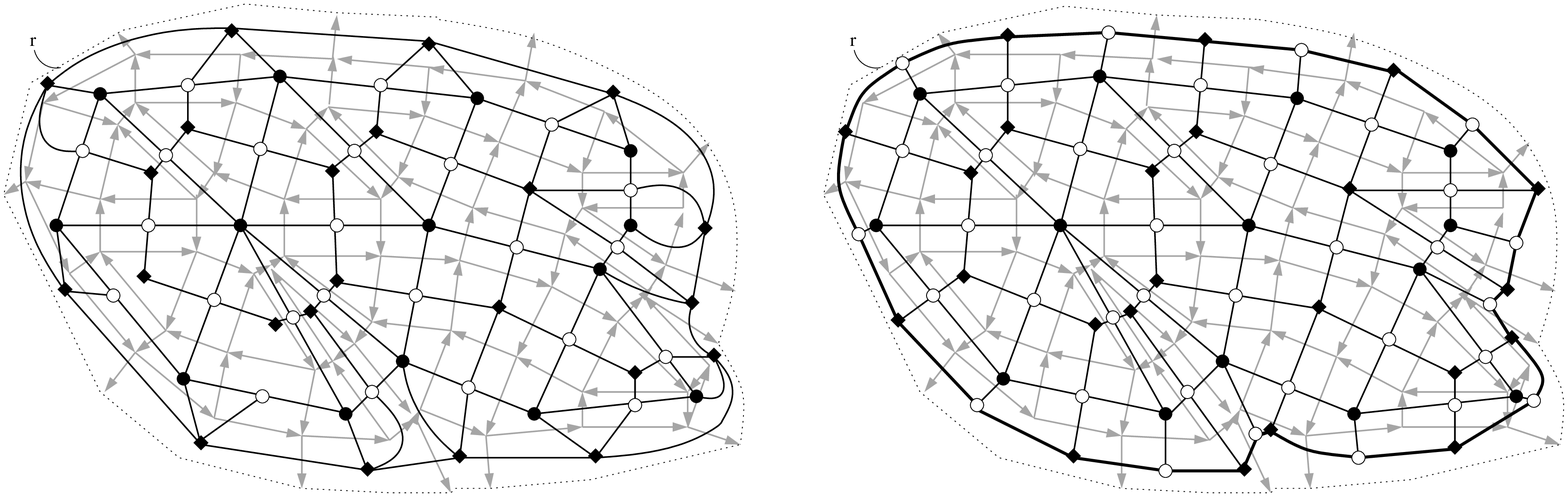}
\caption{Left: the dual graph $\GOd$ (black) of $\GO$ (grey). Right: the extended double graph $\GDext$ of $\Gs$ (black).}\label{fig:FigIso5}
\end{center}
\end{figure}

Observe that the extended double graph $\GDext$ of $\Gs$ is obtained from the dual graph $\GOd$ by
splitting every boundary $\smalllozenge$-black vertex of the extended dual $\Gdext$,
into a $\smalllozenge$-black vertex $\bs$ and a white vertex $\ws$, and by adding the edge $\ws\bs$.
As a consequence, to every spanning tree of $\GOd$ corresponds a spanning tree of the extended double
$\GDext$, obtained by adding the edges arising
from the splitting of the boundary $\smalllozenge$-black vertices. 
This mapping is weight-preserving if the new edges are assigned
weight 1. 

Consider an $^{\rs}\mathrm{OST}$ $\Ts$ of the directed graph $\GO$ and its dual spanning tree in $\GOd$.
Then, the \emph{dual spanning tree in $\GDext$ of $\Ts$} is defined to be the corresponding spanning tree of the extended 
double graph $\GDext$, see Figure \ref{fig:FigIso5a} for an example.

\begin{figure}[ht]
\begin{center}
\includegraphics[width=8cm]{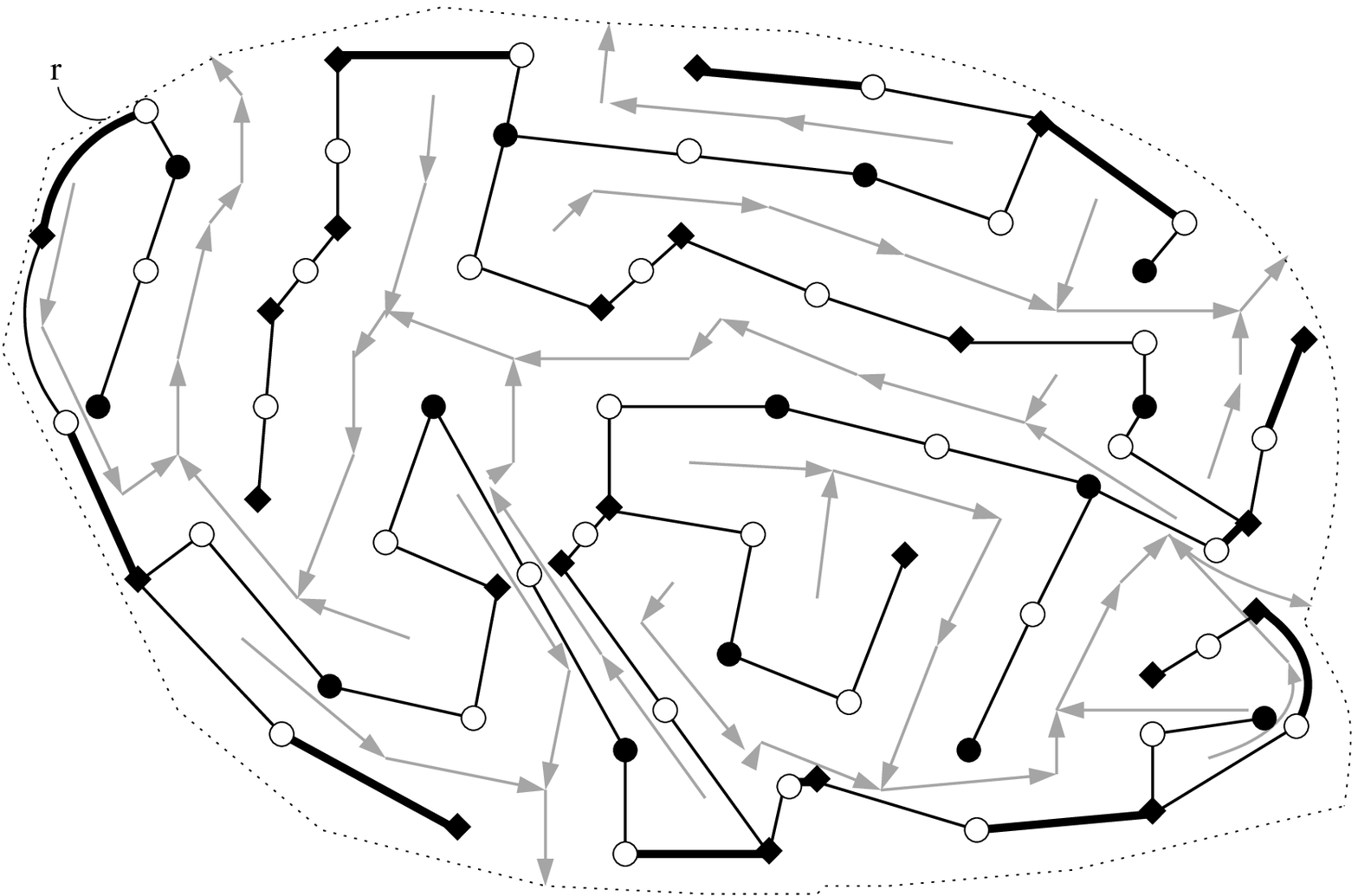}
\caption{An $^{\rs}\mathrm{OST}$ of $\GO$ (grey) and its dual spanning tree in the extended double $\GDext$ graph (black).
By definition of the \emph{dual in $\GDext$}, the thicker edges are always present.}\label{fig:FigIso5a}
\end{center}
\end{figure}

\subsection{Weight preserving mapping}\label{sec:weightpreserving}

Let us now assign weights $\rho^*$ to edges of the double graph $\GDext$, so that the
mapping from $^{\rs}\mathrm{OSTs}$ of $\GO$ with weight function $\rho$, to dual spanning trees in $\GDext$ with 
weight function $\rho^*$, preserves weights.

By construction of dual spanning trees in $\GDext$, we know that edges
arising from the splitting of boundary $\smalllozenge$-black vertices are always present and are
assigned weight 1.

Recall that each vertex $\xs$ of the graph $\GO$, except the root $\rs$, has two outgoing edges; and that an $^{\rs}\mathrm{OST}$ $\Ts$ of $\GO$
takes exactly one of the two edges. Then, the dual of the present (respectively absent) edge in $\Ts$, is absent 
(respectively present) in the dual spanning tree in $\GDext$, see Figure \ref{fig:FigPrimalDual}. This yields a one-to-one correspondence
between edges of $^{\rs}\mathrm{OSTs}$ of $\GO$ and edges of dual spanning trees in $\GDext$. 

\begin{figure}[ht]
\begin{center}
\psfrag{x}[l][l]{\scriptsize $\xs$}
\includegraphics[width=8cm]{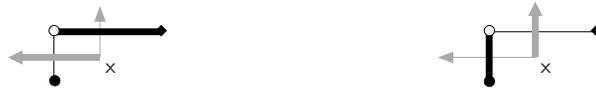}
\caption{Possible primal and dual configurations at a
vertex $\xs$ of $\GO$. Thick lines represent present edges, thin lines absent ones.}\label{fig:FigPrimalDual}
\end{center}
\end{figure}

Returning to the definition of the weight function
$\rho$, see Section \ref{sec:onemore}, this implies that the following weight function 
$\rho^*$ defined on edges of the extended dual $\GDext$, preserves weights, see also Figure~\ref{fig:FigWeight3}:

\begin{equation}\label{def:rhoetoile}
\rho^*_{\ws\bs}=
\begin{cases}
\sin\theta_e&\text{ if $\ws\bs$ is half a primal edge $e$ of $\Gs$}\\
i\cos\theta&\text{ if $\ws\bs$ is half a dual edge $e^*$ of an edge $e$ of $\Gs$}\\
1&\text{ if $\ws\bs$ is half a dual edge on the boundary of the extended dual $\Gdext$}\\
e^{-i\theta^\partial}-1&\text{ if $\ws\bs$ is half an edge of $\Gsext\setminus\Gs$}.
\end{cases}
\end{equation}

\begin{figure}[ht]
\begin{center}
\psfrag{sin}[l][l]{\scriptsize $\sin(\theta_e)$}
\psfrag{cos}[l][l]{\scriptsize $i\cos(\theta_e)$}
\psfrag{te}[l][l]{\scriptsize $\theta_e$}
\psfrag{tb}[l][l]{\scriptsize $\theta^\partial$}
\psfrag{e}[l][l]{\scriptsize $e$}
\psfrag{l1}[l][l]{\scriptsize Weights around a boundary
white vertex}
\psfrag{l2}[l][l]{\scriptsize Weights around a non-boundary
white vertex}
\psfrag{l3}[l][l]{\scriptsize Primal and corresponding dual edge in the dual
configuration}
\psfrag{mb1}[l][l]{\scriptsize $e^{-i\theta^\partial}-1$}
\psfrag{G0}[l][l]{$\vec{\Gs}_0$}
\psfrag{G}[l][l]{$\vec{\Gs}$}
\includegraphics[width=10cm]{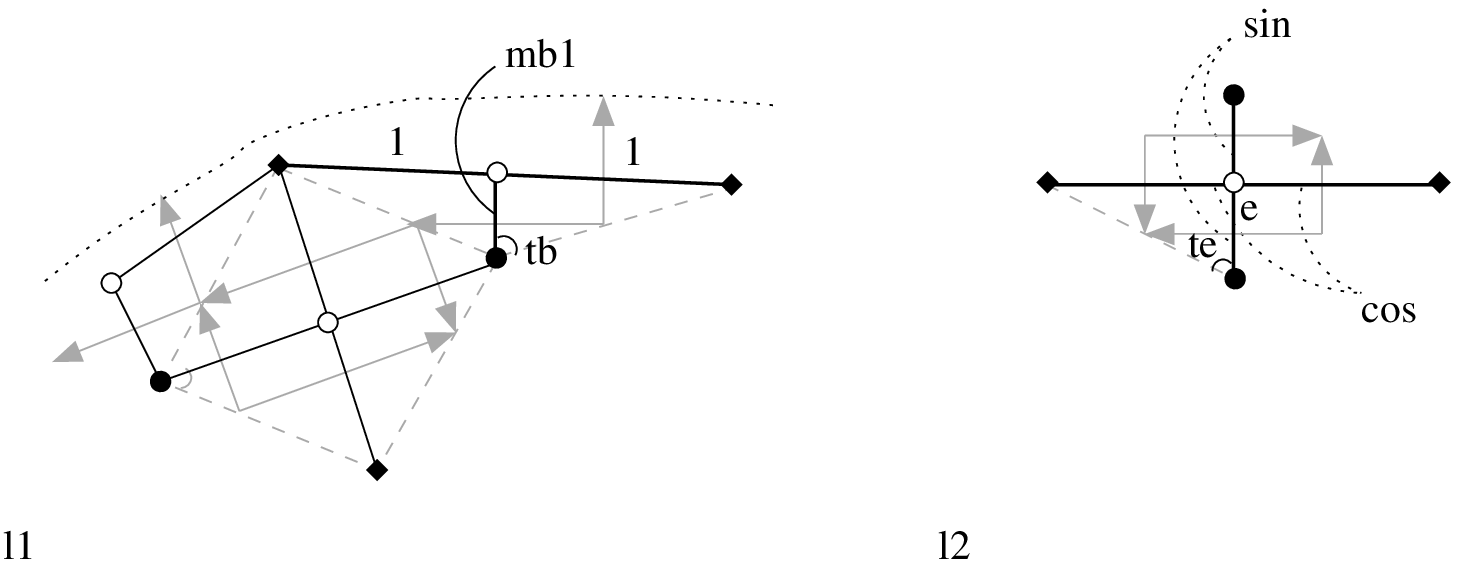}
\caption{Definition of weights $\rho^*$ assigned to edges
of the extended double graph $\GDext$.}\label{fig:FigWeight3}
\end{center}
\end{figure}

\subsection{Characterization of dual spanning trees of the double graph}\label{sec:CharactDual}

The next lemma characterizes spanning trees of $\GDext$ arising as duals of
$^{\rs}\mathrm{OSTs}$ of $\GO$.

\begin{lem}\label{lem:ConfigLoc}
Consider an $^{\rs}\mathrm{OST}$ of $\GO$. Then, the restriction around a white vertex $\ws$ of
$\GDext$ of the dual spanning tree in $\GDext$, is one of the following four
configurations (black lines):
\begin{enumerate}
\item if the vertex $\ws$ is not on the boundary of $\GDext$:

\begin{figure}[ht]
\begin{center}
\includegraphics[width=13cm]{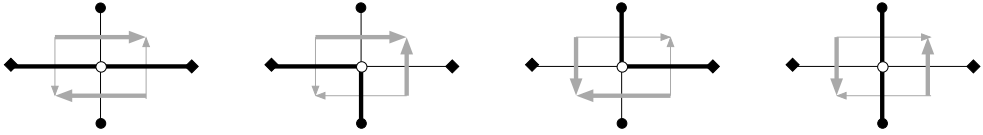}
\caption{The four possible configurations around a non-boundary white vertex of
$\GDext$.}\label{fig:FigDualLocal}
\end{center}
\end{figure}
\item if the vertex $\ws$ is on the boundary of $\GDext$:

\begin{figure}[ht]
\begin{center}
\includegraphics[width=6cm]{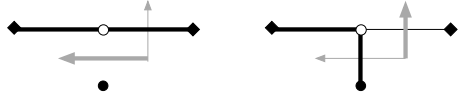}
\caption{The two possible configurations around a boundary white
vertex of $\GDext$.}\label{fig:FigDualLocal1}
\end{center}
\end{figure}
\end{enumerate}
Conversely, consider a spanning tree of $\GDext$ satisfying 1. and 2., 
then the oriented version of its dual spanning tree in $\GO$, oriented
towards the root vertex $\rs$, is an $^{\rs}\mathrm{OST}$ of $\GO$.
\end{lem}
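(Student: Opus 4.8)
The plan is to reduce the whole statement to a purely local analysis around each white vertex of $\GDext$, using two facts already established: an $^{\rs}\mathrm{OST}$ of $\GO$ is exactly a subset of oriented edges with one outgoing edge at every non-root vertex and no cycle (Section~\ref{sec:LaplacianDirected}), and the dual spanning tree in $\GDext$ is obtained edge-by-edge through the present$\leftrightarrow$absent correspondence of Section~\ref{sec:weightpreserving} (Figure~\ref{fig:FigPrimalDual}). A white vertex $\ws$ of $\GDext$ is dual to a quadrangle $Q$ of $\GO$, and the edges of $\GDext$ around $\ws$ are dual to the edges of $Q$; an edge around $\ws$ is present in the dual spanning tree if and only if the corresponding edge of $Q$ is absent from the $^{\rs}\mathrm{OST}$. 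Thus both directions of the lemma amount to describing which subsets of the edges of $Q$ can be selected by an $^{\rs}\mathrm{OST}$ of $\GO$, and reading the result through this duality.

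For the direct statement in the non-boundary case, I would use that an interior quadrangle $Q$ of $\GO$ has two \emph{white} corners, at each of which the two outgoing edges of $\GO$ are precisely the two edges of $Q$ incident to that corner; this follows from the merging construction of $\GOo$ in Section~\ref{sec:KastLapl} together with the fact that interior corners are distinct from $\rs$. Since an $^{\rs}\mathrm{OST}$ selects exactly one outgoing edge at each such corner, it selects exactly one of the two edges of $Q$ at each white corner, so $\Ts\cap Q$ always consists of one edge per white corner, giving $2\times 2=4$ possibilities and hence the four dual configurations of Figure~\ref{fig:FigDualLocal}. At the boundary, the relevant cell is the half-quadrangle produced in Section~\ref{sec:onemore}: there the split edge $\bs_3\ws$ is always present with weight $1$ and the edge toward $\rs$ is removed in $\GDext$, so one of the two local choices becomes forced and only the $2$ configurations of Figure~\ref{fig:FigDualLocal1} survive.

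For the converse, let $S$ be a spanning tree of $\GDext$ satisfying conditions 1.\ and 2. Removing the always-present split edges of Section~\ref{sec:dual} yields a spanning tree of $\GOd$, whose planar dual $T$ is therefore a spanning tree of the undirected version of $\GO$; in particular $T$ is connected, spanning and acyclic. I would then orient each edge of $T$ by the orientation it carries in $\GO$ and read off, from the local configurations 1.\ and 2.\ (the dualization of the out-degree count used in the direct part), that every non-root vertex $\xs$ has exactly one $\GO$-outgoing edge of $T$ while $\rs$ has none. An orientation of a tree in which each non-root vertex has out-degree $1$ and the root out-degree $0$ is automatically acyclic with unique sink $\rs$, hence coincides with the orientation toward $\rs$; together with the absence of cycles inherited from $T$, this shows that the dual spanning tree oriented toward $\rs$ is an $^{\rs}\mathrm{OST}$ of $\GO$, as claimed.

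The main obstacle I expect is the boundary bookkeeping: one must check carefully that the half-quadrangle structure, the always-present edge $\bs_3\ws$, and the removal of the edges incident to $\rs$ combine so that exactly the two configurations of Figure~\ref{fig:FigDualLocal1} occur, and that the out-degree count at boundary vertices still matches the $\GO$-orientation used in the interior. Away from the boundary the proof is a direct $2\times 2$ case check; the delicate part is ensuring the boundary cells are consistent with this interior enumeration.
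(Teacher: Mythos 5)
Your proposal is correct and follows essentially the same route as the paper's own proof: dualizing the four (resp.\ three) edges around a white vertex to a quadrangle (resp.\ half-quadrangle) of $\GO$, using that an $^{\rs}\mathrm{OST}$ selects exactly one of the two outgoing edges at each out-corner to get the $4$ (resp.\ $2$) configurations, and proving the converse by checking out-degree one at every non-root vertex of the dual tree. The only cosmetic difference is that you call the out-corners of the quadrangle ``white corners'' (the vertices of $\GO$ are merged and uncolored) and you spell out slightly more explicitly why the induced orientation coincides with the orientation toward $\rs$.
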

\begin{rem}
Observe that dual spanning trees of $\GDext$, arising as duals of $^{\rs}\mathrm{OSTs}$ of $\GO$, have two edges incident to every
white vertex $\ws$ of $\GDext$. Note also that the two cases of Point 2. are two of the four cases of Point 1. This is because
along the boundary, there are half-quadrangles with one corner and two outgoing edges.
\end{rem}

\begin{proof}
Consider an $^{\rs}\mathrm{OST}$ $\Ts$ of $\GO$, and its dual spanning tree in $\GDext$. Fix a white
vertex $\ws$ of $\GDext$, and assume it is not on the boundary.
There are four edges incident to the vertex $\ws$, and the dual edges form
a quadrangle of $\GO$. Of the four dual edges, two exit one corner of the quadrangle and the other two exit the opposite corner.
Since $\Ts$ is an OST of $\GO$, $\Ts$ contains exactly one edge exiting each of the corners, yielding four 
possible configurations for the restriction of $\Ts$ to the quadrangle. 
Looking at dual configurations, we obtain the four configurations of Point 1.

Now fix a white vertex  $\ws$ on the boundary of $\GDext$. There are three edges
incident to the vertex $\ws$, exactly one of which arises from the splitting of a boundary $\smalllozenge$-black vertex of $\GOd$. 
By definition, this edge always belongs to the dual spanning tree in $\GDext$. The dual of the other two edges belong to $\GO$.
They form a half-quadrangle, and are the two edges exiting the only corner. Since $\Ts$ is an OST, it
contains exactly one edge exiting this corner, yielding two possible configurations for the restriction of $\Ts$ to the half-quadrangle.
Looking at dual configurations, we obtain the two configurations of Point 2. of Figure \ref{fig:FigDualLocal}.

Conversely, suppose we are given a spanning tree of $\GDext$ such that the restriction around every white vertex satisfies Points 1. 
and 2. Taking the dual spanning tree and the corresponding version of
$\GO$ (obtained by merging the boundary edge arising from the splitting of the boundary $\smalllozenge$-black vertex, 
back into a $\smalllozenge$-black vertex) yields a spanning tree of the undirected version of $\GO$. What
remains to check is that the orientation of this spanning tree, induced by the orientation of the edges of $\GO$
yields an $^{\rs}\mathrm{OST}$ of $\GO$. To prove this, it suffices to show that every vertex of
$\GO$, except the root $\rs$, has exactly one outgoing edge. A vertex $\xs$ of
$\GO$ has exactly two outgoing edges, bounding a rectangle (or
half a rectangle) corresponding to a dual vertex $\ws$. Looking at the
possible configurations around the white vertex given by Points 1. and 2., we
know that exactly one of the two edges exiting the vertex $\xs$ is present.
\end{proof}

It is now convenient to root spanning trees of the double graph $\GDext$ at one of the boundary
$\smalllozenge$-black vertex, denoted by $\ss$. Let us denote by
$\T^{\ss}_{\scriptscriptstyle{(1,2)}}(\GDext)$ the set of
spanning trees of the double $\GDext$, rooted at the vertex $\ss$, and satisfying
Points 1. and 2. of Lemma \ref{lem:ConfigLoc}. Then, as a consequence of
Section~\ref{sec:dual}, Section~\ref{sec:weightpreserving} and Lemma~\ref{lem:ConfigLoc}, we have the following proposition.

\begin{prop}\label{prop:TreesDouble}
The $^{\rs}\mathrm{OST}$ partition function of the oriented graph $\GO$ with weight function $\rho$ is equal to the 
partition function of the double graph $\GDext$ with weight function $\rho^*$, restricted to $^{\ss}\mathrm{OST}$ satisfying Points 1. 
and 2. of Lemma \ref{lem:ConfigLoc}.
\begin{equation*}
\ZOST^{\rs}(\GO,\rho)=Z_{\mathrm{OST,(1,2)}}^{\ss}(\GDext,\rho^*).
\end{equation*} 
\end{prop}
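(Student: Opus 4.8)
The plan is to read off the proposition as the composition of the three constructions that precede it: the primal--dual duality of Section~\ref{sec:dual}, the weight function $\rho^*$ of Section~\ref{sec:weightpreserving}, and the local characterization of Lemma~\ref{lem:ConfigLoc}. Concretely, I would exhibit a weight-preserving bijection
$$
\T^{\rs}(\GO)\;\longleftrightarrow\;\T^{\ss}_{\scriptscriptstyle{(1,2)}}(\GDext),
$$
and then compare the two partition functions term by term.

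First I would set up the bijection. To an $^{\rs}\mathrm{OST}$ $\Ts$ of $\GO$, Section~\ref{sec:dual} associates its dual spanning tree in $\GOd$, and then its dual spanning tree $\Ts^*$ in $\GDext$, obtained by splitting each boundary $\smalllozenge$-black vertex into a $\smalllozenge$-black vertex and a white vertex and adjoining the corresponding always-present edge. By the forward implication of Lemma~\ref{lem:ConfigLoc}, the restriction of $\Ts^*$ around every white vertex of $\GDext$ is one of the configurations of Points~1.\ and~2., so that, once rooted at the boundary $\smalllozenge$-black vertex $\ss$, the tree $\Ts^*$ lies in $\T^{\ss}_{\scriptscriptstyle{(1,2)}}(\GDext)$. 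The converse implication of Lemma~\ref{lem:ConfigLoc} furnishes the inverse: any spanning tree of $\GDext$ satisfying Points~1.\ and~2.\ has a dual spanning tree in $\GO$, which, oriented towards $\rs$, is an $^{\rs}\mathrm{OST}$ of $\GO$. Since an undirected spanning tree of $\GDext$ admits a single orientation rooted at $\ss$, these two maps are mutually inverse.

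Next I would invoke the weight preservation built into the definition of $\rho^*$ in Section~\ref{sec:weightpreserving}: the value of $\rho^*$ on each edge of $\GDext$ was chosen exactly so that, along this bijection,
$$
\prod_{(\xs,\ys)\in\Ts}\rho_{(\xs,\ys)}=\prod_{\ws\bs\in\Ts^*}\rho^*_{\ws\bs},
$$
the edges arising from the splitting of boundary $\smalllozenge$-black vertices being always present and carrying weight $1$. As $\rho^*$ is supported on undirected edges, orienting $\Ts^*$ towards $\ss$ does not change its weight, so the $^{\rs}\mathrm{OST}$ weight of $\Ts$ equals the $^{\ss}\mathrm{OST}$ weight of its image. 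Summing over the bijection yields $\ZOST^{\rs}(\GO,\rho)=Z_{\mathrm{OST,(1,2)}}^{\ss}(\GDext,\rho^*)$.

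The substance of the argument is the boundary bookkeeping, which is where I expect the only genuine difficulty to lie: one must check that the splitting of boundary $\smalllozenge$-black vertices, the adjoined always-present weight-$1$ edges, and the weights $e^{-i\theta^\partial}-1$ attached to the half-edges of $\Gsext\setminus\Gs$ along the boundary all fit together consistently, and that rooting $^{\rs}\mathrm{OSTs}$ of $\GO$ at $\rs$ matches the rooting of spanning trees of $\GDext$ at the boundary vertex $\ss$. It is also worth stressing that the restriction to trees satisfying Points~1.\ and~2.\ is essential rather than cosmetic, since Lemma~\ref{lem:ConfigLoc} shows these are exactly the spanning trees of $\GDext$ that occur as duals of $^{\rs}\mathrm{OSTs}$ of $\GO$; without it the two families, and hence their weighted sums, would not match. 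All of these points have, however, been secured by the preceding subsections, so the proof reduces to recording the composition of the bijection and the weight identity.
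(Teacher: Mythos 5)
Your proposal is correct and follows essentially the same route as the paper, which states Proposition \ref{prop:TreesDouble} precisely as the composition of the duality of Section \ref{sec:dual}, the weight-preserving construction of $\rho^*$ in Section \ref{sec:weightpreserving}, and the two-way characterization of Lemma \ref{lem:ConfigLoc}. Your additional remarks on the rooting at $\ss$ and on the necessity of the restriction to Points 1.\ and 2.\ are accurate and consistent with the paper's setup.
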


\section{From spanning trees of the extended double to spanning trees of $\Gsext$}

We are now done with modifying the graph along the boundary to be able to handle the boundary. This section consists in the
heart of the explicit construction: we define a mapping from perfect matchings of the extended double to spanning trees of the extended double. The proof of Theorem 
\ref{thm:MainResult} is then ended, using a generalized form of Temperley's bijection, due to Kenyon, Propp and Wilson \cite{KPW}.

\subsection{From spanning trees of $\GDext$ to perfect matchings of $\GDext(\ss)$}

Recall that a prefect matching, or dimer configuration of a graph, is a subset of edges such that each vertex of the graph
is incident to exactly one edge of the subset. Let $\ss$ be the root vertex of $\GDext$ chosen in the previous section. Recall
that we have chosen $\ss$ to be a boundary $\smalllozenge$-black vertex. Let us denote by $\GDext(\ss)$ the graph $\GDext$
from which the vertex $\ss$ and all incident edges have been removed. In accordance with the notations introduced, 
$\M(\GDext(\ss))$ denotes the set of perfect matchings of the graph $\GDext(\ss)$.

The next lemma gives a natural way of obtaining a perfect matching of $\GDext(\ss)$ from an $^{\ss}\mathrm{OST}$
of $\GDext$ satisfying Points 1. and 2. of Lemma \ref{lem:ConfigLoc}. An example is provided in Figure \ref{fig:FigIso5b}.

\begin{lem}\label{lem:matching}
Let $\Ts$ be an OST of $\T^{\ss}_{(1,2)}(\GDext)$. Then, the edge
configuration $\Ms_{\Ts}$ of $\GDext(\ss)$, consisting of edges of $\Ts$ exiting from
black vertices, is a perfect matching of $\GDext(\ss)$.
\end{lem}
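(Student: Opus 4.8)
The plan is to exploit the bipartite structure of $\GDext$ together with the defining properties of an $\ss$-rooted OST. First I would record the two structural facts that drive everything. The graph $\GDext$ is bipartite: every one of its edges is half of a primal or of a dual edge and therefore joins a white vertex (a crossing) to a black vertex (a $\bullet$-black vertex of $\Gs$ or a $\smalllozenge$-black vertex of $\Gdext$). Consequently, orienting an edge so that it exits a black vertex means orienting it from its black to its white endpoint, while an edge exiting a white vertex is oriented from white to black; these are the only two possibilities. Second, since $\Ts\in\T^{\ss}_{(1,2)}(\GDext)$ is an $\ss$-rooted OST, every vertex other than the root $\ss$ has exactly one outgoing edge in $\Ts$, and by Points 1. and 2. of Lemma~\ref{lem:ConfigLoc} every white vertex is incident to exactly two edges of $\Ts$. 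As the root $\ss$ is a black vertex, no white vertex is the root, so at each white vertex the two incident edges of $\Ts$ split into exactly one outgoing (white to black) and exactly one incoming (black to white) edge.

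Next I would check that $\Ms_{\Ts}$ contains no edge incident to the deleted root $\ss$, so that it is a genuine edge configuration of $\GDext(\ss)$. Each edge of $\Ms_{\Ts}$ is oriented from its black endpoint to its white endpoint; an edge incident to the black vertex $\ss$ has $\ss$ as its black endpoint, so if it belonged to $\Ms_{\Ts}$ it would have to exit $\ss$. But $\ss$, being the root, has no outgoing edge. Hence every edge of $\Ms_{\Ts}$ survives the removal of $\ss$.

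Then I would verify the matching condition vertex by vertex on $\GDext(\ss)$, whose vertex set is exactly the white vertices together with the black vertices distinct from $\ss$. For a black vertex $\bs\neq\ss$, its unique outgoing edge in $\Ts$ is oriented black to white, hence lies in $\Ms_{\Ts}$; and by bipartiteness any edge of $\Ms_{\Ts}$ incident to $\bs$ must exit $\bs$, so this is the only such edge and $\bs$ is matched exactly once. For a white vertex $\ws$, its unique incoming edge in $\Ts$ is oriented black to white and hence lies in $\Ms_{\Ts}$, while its unique outgoing edge is oriented white to black and is therefore excluded; moreover that incoming edge issues from a black vertex other than $\ss$ (again because $\ss$ carries no outgoing edge), so it is present in $\GDext(\ss)$. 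Thus $\ws$ too is matched exactly once, which shows $\Ms_{\Ts}$ is a perfect matching of $\GDext(\ss)$.

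I do not expect a genuinely hard step: the argument is entirely local and combinatorial, relying only on bipartiteness, the one-outgoing-edge property of OSTs, and the two-edges-per-white-vertex property supplied by Lemma~\ref{lem:ConfigLoc}. The one point deserving care, which I would emphasize, is the interaction with the removed root: one must confirm both that $\Ms_{\Ts}$ avoids $\ss$ and that no white vertex loses its matching edge to the deletion of $\ss$. Both reduce to the single observation that the root carries no outgoing edge, which is precisely why the construction selects edges \emph{exiting} black vertices rather than, say, edges entering them.
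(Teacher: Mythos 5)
Your proof is correct and follows essentially the same route as the paper's: black vertices other than $\ss$ are matched by their unique outgoing edge, and white vertices are matched by their unique incoming edge, which by bipartiteness exits a black vertex and hence lies in $\Ms_{\Ts}$. The only difference is that you spell out explicitly why no edge of $\Ms_{\Ts}$ is incident to the deleted root $\ss$ (namely, $\ss$ has no outgoing edge), a point the paper leaves implicit.
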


\begin{figure}[ht]
\begin{center}
\psfrag{s}[l][l]{$\ss$}
\includegraphics[width=\linewidth]{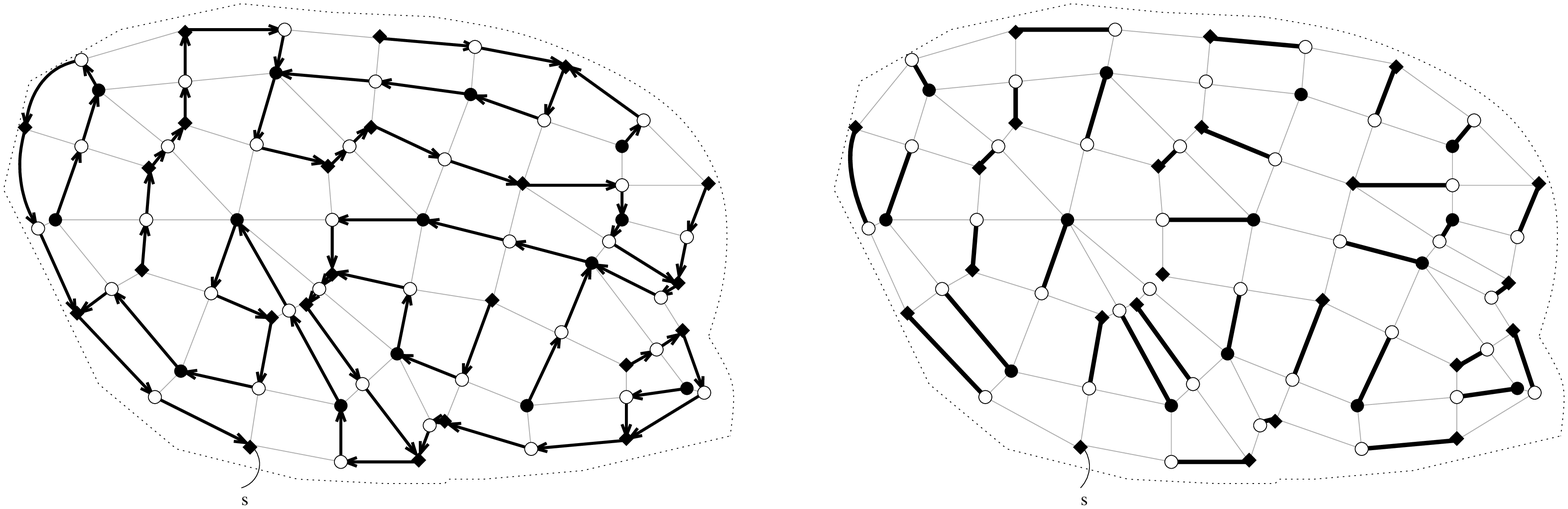}
\caption{Left: oriented spanning tree of $\T^{\ss}_{(1,2)}(\GDext)$. Right: corresponding perfect matching 
$\Ms_{\Ts}$ of $\GDext(\ss)$.}\label{fig:FigIso5b}
\end{center}
\end{figure}

\begin{proof}
Let $\Ts$ be a spanning tree of $\T^{\ss}_{(1,2)}(\GDext)$. We need
to prove that every vertex of $\GDext$ except the root $\ss$ is touched by exactly
one edge of $\Ms_\Ts$. Since $\Ts$ is a spanning tree rooted at $\ss$, every
vertex except the root $\ss$ has exactly one outgoing edge. In particular, this is true
for all black vertices, which are, by definition, part of $\Ms_{\Ts}$. Thus
every black vertex of $\GDext$ except the root is touched by a unique edge of
$\Ms_{\Ts}$. 

Moreover, since $\Ts$ is a spanning
tree $\Ts$ of $\T^{\ss}_{(1,2)}(\GDext)$, it satisfies Points 1. and 2. In
particular, every white vertex has degree exactly 2, with one incoming edge and
one outgoing edge. Since the graph is bipartite, the edge of $\Ts$ entering the
white vertex must exit a black one. By definition, it is thus part
of $\Ms_{\Ts}$, and we conclude that every white vertex is touched by exactly
one edge of $\Ms_{\Ts}$.
\end{proof}

Given a perfect matching $\Ms$ of $\M(\GDext(\ss))$,
a spanning tree $\Ts$ of $\T^{\ss}_{(1,2)}(\GDext)$ is said to be
\emph{compatible with $\Ms$}, if $\Ms_{\Ts}=\Ms$. Let us denote by
$\T^{\ss}_{(1,2),\Ms}(\GDext)$ the set of spanning trees of $\T^{\ss}_{(1,2)}(\GDext)$
compatible with a perfect matching $\Ms$ of $\M(\GDext(\ss))$. Then, we have:

\begin{lem}\label{lem:union}
$$
\T^{\ss}_{(1,2)}(\GDext)=\bigcup_{\Ms\in\M(\GDext(\ss))}\T^{\ss}_{(1,2),\Ms}(\GDext),
$$
and the union is disjoint.
\end{lem}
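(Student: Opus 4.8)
The plan is to recognize this statement as nothing more than the fiber decomposition of the map furnished by Lemma~\ref{lem:matching}. That lemma associates to every spanning tree $\Ts\in\T^{\ss}_{(1,2)}(\GDext)$ a perfect matching $\Ms_{\Ts}\in\M(\GDext(\ss))$, namely the set of edges of $\Ts$ exiting from black vertices; crucially, this assignment is single-valued, since a given $\Ts$ determines exactly one such edge set. I would therefore organize the argument around the map $\Phi\colon\T^{\ss}_{(1,2)}(\GDext)\to\M(\GDext(\ss))$, $\Ts\mapsto\Ms_{\Ts}$, and note that by the very definition of compatibility the set $\T^{\ss}_{(1,2),\Ms}(\GDext)$ is precisely the fiber $\Phi^{-1}(\Ms)$ over $\Ms$.

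For the inclusion from right to left, each $\T^{\ss}_{(1,2),\Ms}(\GDext)$ is by construction a subset of $\T^{\ss}_{(1,2)}(\GDext)$, so their union is contained in it. For the reverse inclusion, I would take an arbitrary $\Ts\in\T^{\ss}_{(1,2)}(\GDext)$, apply Lemma~\ref{lem:matching} to produce the perfect matching $\Ms_{\Ts}\in\M(\GDext(\ss))$, and observe that $\Ts$ is compatible with $\Ms_{\Ts}$ by definition; hence $\Ts\in\T^{\ss}_{(1,2),\Ms_{\Ts}}(\GDext)$, which lies in the union. This establishes the claimed set equality.

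Disjointness then follows immediately from the single-valuedness of $\Phi$: if some $\Ts$ were to belong to both $\T^{\ss}_{(1,2),\Ms}(\GDext)$ and $\T^{\ss}_{(1,2),\Ms'}(\GDext)$, then $\Ms=\Ms_{\Ts}=\Ms'$, so distinct indices give disjoint sets. I expect essentially no obstacle here, since all the genuine content---that $\Ms_{\Ts}$ is truly a perfect matching of $\GDext(\ss)$---has already been dispatched in Lemma~\ref{lem:matching}, and what remains is the purely set-theoretic fact that the preimages of a function partition its domain. The one point that must be explicitly invoked rather than taken for granted is that $\Phi$ actually lands inside $\M(\GDext(\ss))$, so that indexing the union by $\M(\GDext(\ss))$ is exhaustive; but this is exactly the conclusion of Lemma~\ref{lem:matching}.
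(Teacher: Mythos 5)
Your proof is correct. The set equality is handled exactly as in the paper: both arguments reduce it to Lemma~\ref{lem:matching}, which guarantees that $\Phi(\Ts)=\Ms_{\Ts}$ really lands in $\M(\GDext(\ss))$, so every $\Ts$ lies in the fiber over its own matching. Where you diverge is the disjointness. You observe that, since the paper defines ``compatible with $\Ms$'' as the equality $\Ms_{\Ts}=\Ms$, the sets $\T^{\ss}_{(1,2),\Ms}(\GDext)$ are literally the fibers of the single-valued map $\Phi$, so disjointness is immediate set theory. The paper instead uses only the weaker consequence that a tree compatible with both $\Ms_1$ and $\Ms_2$ contains all edges of both, and then invokes the combinatorial fact that the superimposition of two distinct perfect matchings contains a cycle, contradicting acyclicity of $\Ts$. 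Given the definition as literally stated, your route is shorter and needs no combinatorics; the paper's route is more robust in that it would still work under the laxer reading of compatibility as mere containment of $\Ms$ in $\Ts$, and it previews the matching-superimposition technique reused in the proof of Proposition~\ref{prop:important}. Both are complete proofs; there is no gap in yours.
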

\begin{proof}
The fact that $\T^{\ss}_{(1,2)}(\GDext)$ is the union of $\T^{\ss}_{(1,2),\Ms}(\GDext)$
is a consequence of Lemma \ref{lem:matching}. Let us prove that the union is
disjoint. Suppose that there are two distinct matchings $\Ms_1$ and $\Ms_2$
of $\M(\GDext(\ss))$, and an OST $\Ts$ which belongs to
$\T^{\ss}_{(1,2),\Ms_1}(\GDext)\cap \T^{\ss}_{(1,2),\Ms_2}(\GDext)$. Then, by definition
of being compatible with $\Ms_1$ and $\Ms_2$, the spanning tree
$\Ts$ must contain all edges of $\Ms_1$ and all edges of $\Ms_2$. But the two
matchings $\Ms_1$ and $\Ms_2$ being distinct, their superimposition contains
a cycle. This yields a contradiction with $\Ts$ being an OST.
\end{proof}

\subsection{Characterization of spanning trees compatible with a matching}

By definition, given a perfect matching $\Ms$ of $\M(\GDext(\ss))$,
the unoriented version of an OST $\Ts$ of
$\T^{\ss}_{(1,2),\Ms}(\GDext)$ satisfies the following two conditions:
it contains all edges of $\Ms$, and it satisfies Points 1.
and 2. of Lemma \ref{lem:ConfigLoc}, at every white vertex of $\GDext$. The next
proposition proves that if an edge configuration satisfies these two conditions,
then it is a spanning tree. This is
a remarkable fact. Indeed, being a spanning tree requires not having cycles,
a non-local condition. Containing edges of a perfect matching $\Ms$
is a non-local condition, but it determines only half of the edges. The point
of the proposition is to show that the configuration of the other half of the edges is
determined locally.

\begin{prop}\label{prop:important}
Let $\Ms$ be a perfect matching of $\M(\GDext(\ss))$. Then, an edge configuration
containing all edges of $\Ms$, and satisfying Points 1. and 2. of Lemma \ref{lem:ConfigLoc}, is the
unoriented version of an OST of $\T^{\ss}_{(1,2),\Ms}(\GDext)$.
\end{prop}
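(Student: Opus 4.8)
The plan is to turn the given configuration into an oriented one, read off all its out-degrees from the matching together with the local rules, and thereby reduce the whole statement to a single acyclicity assertion. Write $\mathcal{C}$ for the edge configuration in the statement, and orient it by declaring every edge of $\Ms$ to point from its black endpoint to its white endpoint and every edge of $\mathcal{C}\setminus\Ms$ to point from its white endpoint to its black endpoint; since $\GDext$ is bipartite this is well defined. By Points 1.\ and 2.\ of Lemma~\ref{lem:ConfigLoc} each white vertex meets exactly two edges of $\mathcal{C}$, and because $\Ms$ is a perfect matching of $\GDext(\ss)$ one of these two is its $\Ms$-edge; hence each white vertex has exactly one incoming and one outgoing edge. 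At a black vertex other than $\ss$ the unique outgoing edge is its $\Ms$-edge, every further incident edge of $\mathcal{C}$ being non-matching and therefore incoming, while $\ss$ carries no $\Ms$-edge and so has only incoming edges. Thus every vertex except $\ss$ has out-degree exactly one and $\ss$ has out-degree zero.

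Summing out-degrees, $\mathcal{C}$ has exactly one fewer edge than the number of vertices of $\GDext$; no Euler count is needed. Consequently $\mathcal{C}$ is the unoriented version of an element of $\T^{\ss}_{(1,2),\Ms}(\GDext)$ if and only if the orientation above is acyclic: a digraph in which every non-root vertex has out-degree one, the root $\ss$ has out-degree zero, and the edge count is one less than the vertex count is a spanning tree oriented toward $\ss$ precisely when it contains no directed cycle. Everything therefore reduces to showing that this orientation has no directed cycle. This is the only non-local point and, as I expect, the main obstacle, since containing $\Ms$ together with Points 1.\ and 2.\ are all local constraints, whereas acyclicity is global.

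For acyclicity I would argue by contradiction using planarity. Suppose a directed cycle $\gamma$ exists. As $\ss$ has out-degree zero it does not lie on $\gamma$, and as $\ss$ sits on the outer boundary of $\GDext$ it cannot be enclosed by $\gamma$; letting $R$ be the bounded region cut out by $\gamma$, we have $\ss\notin\overline R$. No edge of $\mathcal{C}$ crosses $\gamma$, so the unique outgoing edge of every vertex of $\overline R$ again lies in $\overline R$, which makes $\overline R$ closed under following outgoing edges and, crucially, free of any sink. Restricting $\Ms$ to $\overline R$ gives a perfect matching of the enclosed induced subgraph, since each enclosed vertex is matched and its partner cannot lie across $\gamma$; in particular the enclosed black and white vertices are equinumerous. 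The contradiction then has to come from the local rule itself: the configurations permitted by Points 1.\ and 2.\ exclude, at each white vertex, two of the four ways of joining its incident edges, and this excluded ``turning rule'' is exactly what prevents the enclosed, sink-free, out-degree-one orientation from existing on a disk — equivalently, passing to the planar dual, the cycle $\gamma$ would separate the complementary configuration and stop it from reaching every face. This planar–duality step, which is the combinatorial heart of Temperley's bijection in the form of \cite{KPW}, is where the real work lies. Once it is established, the converse direction of Lemma~\ref{lem:ConfigLoc} identifies the resulting oriented tree, rooted at $\ss$, as an element of $\T^{\ss}_{(1,2),\Ms}(\GDext)$, completing the argument.
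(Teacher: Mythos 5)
Your setup coincides with the paper's: you orient matching edges from black to white, orient the remaining edge at each white vertex outward, check that every vertex except $\ss$ has out-degree one, and correctly reduce everything to the absence of a directed cycle. Your observation that a hypothetical cycle $\gamma$ cannot contain or enclose $\ss$, and that the region strictly inside $\gamma$ is perfectly matched by $\Ms$ and hence contains an \emph{even} number of vertices, also matches the paper (which reaches the same parity via the superimposition of $\Ms$ with the matching $\Ms'$ obtained by shifting along $\gamma$). Up to this point the argument is sound.

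The gap is that you never produce the contradiction: you assert that ``the configurations permitted by Points 1.\ and 2.\ \dots\ is exactly what prevents'' the cycle, that this is ``where the real work lies,'' and then proceed with ``once it is established.'' That missing step is precisely the content of the proposition, and the vague appeal to planar duality and to \cite{KPW} does not supply it (the KPW correspondence concerns the pair of trees encoded by $\Ms$ on $\Gsext$ and $\Gdext$, not the configuration $\mathcal{C}$ on the double graph, so it is not clear the dual-separation argument even applies in the form you sketch). The paper closes the argument with a concrete Euler-formula computation: it classifies the white vertices of the cycle $\Cs$ according to the types of the black vertices preceding and following them ($\smalllozenge$-black or $\bullet$-black), uses Points 1.\ and 2.\ to show that only a right turn can occur at a white vertex going from $\smalllozenge$-black to $\bullet$-black and only a left turn in the opposite case, counts for each type the incident edges and faces of the graph enclosed by $\Cs$, and feeds these counts into $|\VD(\Cs)|+|\FD_{\Cs}|-|\ED(\Cs)|=1$ together with $n_2=n_3$ to conclude that the number of interior vertices is $n_4+n_5$ with $n_5-n_4=1$, hence \emph{odd}. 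This contradicts the even count and finishes the proof. Without carrying out this (or an equivalent) counting argument, your proposal does not prove the proposition.
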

\begin{proof}
Let $\Ts$ be an edge configuration containing all edges of $\Ms$, and satisfying
Points 1. and 2. at every white vertex $\ws$ of $\GDext$. Since the graph $\GDext$
is bipartite, each edge of $\Ms$ joins a black and a white vertex. Orient edges
of $\Ms$ from the black vertex to the white one. 

Since $\Ts$ satisfies Points 1. and 2., every white vertex $\ws$ of $\GDext$ is
incident to exactly two edges of $\Ts$. Exactly one of the two edges is an edge
of $\Ms$, because $\Ms$ is a perfect matching which only leaves the root vertex $\ss$ unmatched, and 
the root vertex $\ss$ is black. By our choice of orientation of $\Ms$,
this edge enters the white vertex. Let us orient the second edge of $\Ts$
incident to $\ws$ away from $\ws$. We have thus defined an oriented version of
the edge configuration $\Ts$ such that every vertex of $\GDext$ except the root
$\ss$ has exactly one outgoing edge, and such that edges exiting black vertices
are exactly those of $\Ms$. As a consequence, if $\Ts$ contains no
cycle, its oriented version is a OST oriented towards $\ss$, satisfying Points
1. and 2., such that $\Ms_{\Ts}=\Ms$, \emph{i.e.} it is an OST of
$\T^{\ss}_{(1,2),\Ms}(\GDext)$.

It thus remains to show that $\Ts$ contains no cycle. Consider the oriented
version of $\Ts$ defined above and suppose it contains a cycle, denoted by
$\Cs$. The graph $\GDext$ being bipartite, vertices of
$\Cs$ alternate between black and white. By our choice of orientation, edges
exiting black vertices belong to $\Ms$, and those exiting white vertices belong
to $\Ts\setminus\Ms$. Moreover, since each vertex of $\GDext$ has
exactly one outgoing edge of $\Ts$, edges of $\Cs$ must be oriented in the same
direction, clockwise or counterclockwise. This implies that edge of the cycle
$\Cs$ alternate between edges of $\Ms$ and edges of $\Ts\setminus\Ms$. Note
that since the root vertex $\ss$ has only incoming edges, it cannot belong to
the cycle, and since it is on the boundary of the graph, it cannot be in the
interior of the cycle.

As a consequence, the edge configuration $\Ms'$ consisting of edges of $\Ms$
away from the cycle $\Cs$, and edges of $\Ts\setminus\Ms$ on the cycle $\Cs$,
is a dimer configuration of $\M(\GDext(\ss))$. The superimposition of $\Ms$ and
$\Ms'$ contains as single cycle, the cycle $\Cs$, and the unmatched vertex $\ss$
is not in the interior of $\Cs$. The interior of $\Cs$ is thus covered by
doubled edges, so that it must contain an \emph{even} number of vertices. 

We now prove that the fact that $\Ts$ satisfies Points 1. and 2. implies that
the cycle $\Cs$ must have an \emph{odd} number of vertices in its interior, thus
yielding a contradiction. We use Euler's formula to prove this. Without loss of
generality, let us suppose that edges of $\Cs$ are oriented clockwise, and
consider the graph ${\GDext}(\Cs)=(\VD(\Cs),\ED(\Cs))$ consisting of the restriction of
$\GDext$ to $\Cs$ and its interior. We now prove that the number of inner vertices
of the graph $\GDext(\Cs)$ is odd. Let us split vertices of $\GDext(\Cs)$ as
follows.
\begin{itemize}
\item[-] Type 1: white vertices of $\Cs$ preceded and followed by a black vertex
of the same type, either $\smalllozenge$-black or $\bullet$-black. Every white
vertex of Type 1 is incident to 3 edges and is on the boundary of 2 faces of
$\GDext(\Cs)$, see Figure \ref{fig:FigDualLocal2}.

\begin{figure}[ht]
\begin{center}
\psfrag{C}[l][l]{\scriptsize $\Cs$}
\psfrag{intC}[r][r]{\scriptsize int$(\Cs)$}
\includegraphics[width=3cm]{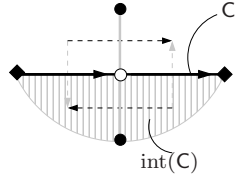}
\caption{White vertex preceded and followed by a $\smalllozenge$-black vertex.}\label{fig:FigDualLocal2}
\end{center}
\end{figure}
\item[-] Type 2: white vertices of $\Cs$ preceded by a $\smalllozenge$-black
vertex and followed by $\bullet$-black vertex. Since the edge configuration
$\Ts$ satisfies Points 1. and 2., only a right turn can occur at such a white
vertex, see Figure \ref{fig:FigDualLocal3}. Every white vertex of Type $2$ is
thus incident to 3 edges and is on the boundary of one face of $\GDext(\Cs)$. 

\begin{figure}[ht]
\begin{center}
\psfrag{C}[l][l]{\scriptsize $\Cs$}
\psfrag{intC}[r][r]{\scriptsize int$(\Cs)$}
\includegraphics[width=3cm]{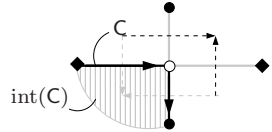}
\caption{White vertices preceded by a $\smalllozenge$-black
vertex and followed by $\bullet$-black vertex}\label{fig:FigDualLocal3}
\end{center}
\end{figure}
\item[-] Type 3: white vertices of $\Cs$ preceded by a $\bullet$-black vertex
and followed by $\smalllozenge$-black vertex. By Points 1. and 2., only a
left turn can occur at such a white vertex, see Figure \ref{fig:FigDualLocal4}.
Every white vertex of Type 3 is thus incident to four edges and is on the boundary
of three faces of $\GDext(\Cs)$.
\begin{figure}[ht]
\begin{center}
\psfrag{C}[l][l]{\scriptsize $\Cs$}
\psfrag{intC}[l][l]{\scriptsize int$(\Cs)$}
\includegraphics[width=3cm]{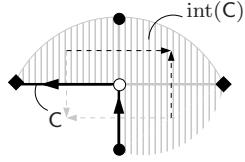}
\caption{White vertices preceded by a $\bullet$-black vertex
and followed by $\smalllozenge$-black vertex.}\label{fig:FigDualLocal4}
\end{center}
\end{figure}
\item[-] Type 4: white vertices in the interior of $\GDext(\Cs)$. Every white
vertex of Type 4 is incident to 4 edges and is on the boundary of 4 faces of
$\GDext(\Cs)$.
\item[-] Type 5: black vertices in the interior of $\GDext(\Cs)$.
\end{itemize}
For every Type $i$, let $n_i$ denote the number of vertices of Type $i$.
The total number $|\VD(\Cs)|$ of vertices of the graph $\GDext(\Cs)$ is equal to
twice the number of boundary white vertices plus the number of inner white and
black vertices, that is:
\begin{equation}\label{equ:vertices}
|\VD(\Cs)|=2(n_1+n_2+n_3)+n_4+n_5. 
\end{equation}
Since the graph $\GDext(\Cs)$ is bipartite, counting the number of edges
$|\ED(\Cs)|$ amounts to counting the number of edges incident to white
vertices. By the description of the different types, we have:
\begin{equation}\label{equ:edges}
|\ED(\Cs)|=3n_1+2n_2+4n_3+4n_4.
\end{equation}
Faces of the graph $\GDext(\Cs)$ are bounded by a cycle of length 4, containing
exactly two white vertices. Thus, summing the number of faces that white
vertices are on the boundary of, yields twice the number of inner faces
$|\FD_{\Cs}|$ of $\GDext(\Cs)$:
\begin{equation}\label{equ:faces}
|\FD_{\Cs}|=\frac{1}{2}(2n_1+n_2+3n_3+4n_4). 
\end{equation}
Euler's formula states that $|\VD(\Cs)|+|\FD_{\Cs}|-|\ED(\Cs)|=1$. Plugging
Equations \eqref{equ:vertices}, \eqref{equ:edges}, \eqref{equ:faces} yields:
\begin{equation*}
\frac{1}{2}(n_2-n_3)-(n_4-n_5)=1. 
\end{equation*}
Moreover, since $\Cs$ is a cycle, $n_2=n_3$, implying that: $n_5-n_4=1$.
Observing that the number of inner vertices of the graph $\GDext(\Cs)$ is
$n_4+n_5$, we deduce that this number is odd, thus concluding the proof.
\end{proof}

\subsection{Weights of spanning trees compatible with a perfect matching.}

Let $\Ms$ be a perfect matching of $\M(\GDext(\ss))$. Using Proposition
\ref{prop:important}, and the definition of the weight function $\rho^*$, see Equation \eqref{def:rhoetoile},
we now compute the $\rho^*$-weighted
sum $\rho^*(\T^{\ss}_{(1,2),\Ms}(\GDext))$ of spanning trees of the extended double graph $\GDext$, satisfying Points 1. and 2. 
of Lemma \ref{lem:ConfigLoc}, compatible with $\Ms$. Since the graph is bipartite, the
contribution is the product of the contribution of edges incident to white
vertices. Recall that they all have degree exactly 2. 
Given a white vertex $\ws$ of $\GDext$, the edge entering
the white vertex is that of the matching $\Ms$ and is common to all spanning
trees of $\T^{\ss}_{(1,2),\Ms}(\GDext)$, let us denote it by $\es_\ws$.

\begin{itemize}
 \item If the white vertex $\ws$ is not on the boundary of $\GDext$, the spanning tree satisfies Point 1. of Lemma \ref{lem:ConfigLoc}
 at $\ws$. Having the edge
$\es_\ws$ leaves only two possible configurations; and by Proposition~\ref{prop:important}, 
the two are possible. Thus the contribution is:

\begin{equation}\label{equ:500}
\rho^*_{\es_{\ws}}(i\cos\theta_e+\sin\theta_e)=ie^{-i\theta_e}\,\rho^*_{\es_{\ws}}.
\end{equation}
\item If the white vertex $\ws$ is on the boundary of $\GDext$, the spanning tree satisfies Point 2. of Lemma \ref{lem:ConfigLoc} at $\ws$.
There are three cases to consider. If the edge $\es_\ws$ is half the dual edge of $\Gdext$ arising from the splitting of a boundary 
$\smalllozenge$-black vertex,
it contributes $1$. By Proposition \ref{prop:important}, the two configurations of Point 2. are possible. Thus the contribution is:
\begin{equation}\label{equ:501}
1\cdot(1+e^{-i\theta^\partial}-1)=e^{-i\theta^\partial}=(-ie^{-i\frac{\theta^\partial}{2}})(ie^{-i\frac{\theta^\partial}{2}}).
\end{equation}
If the edge $\es_\ws$ is the other half dual edge, it contributes $1$. By Point 2, only the first configuration is possible, and 
by Proposition \ref{prop:important}, it is indeed possible. Thus, the contribution is:
\begin{equation}\label{equ:502}
1=(-ie^{-i\frac{\theta^\partial}{2}})(ie^{i\frac{\theta^\partial}{2}}).
\end{equation}
If the edge $\es_\ws$ is half the primal edge of $\Gsext$ incident to $\rs$, 
it contributes $e^{-i\theta^\partial}-1$. Only the second configuration of Point 2.
is possible, and by Proposition \ref{prop:important} it indeed is. Thus, the contribution is:
\begin{equation}\label{equ:503}
e^{-i\theta^\partial}-1=(-ie^{-i\frac{\theta^\partial}{2}})\Bigl(2\sin\frac{\theta^\partial}{2}\Bigr).
\end{equation}
\end{itemize}

Let us assign the following weight function $\tau$ to edges of the extended double graph $\GDext(\ss)$, 
see also Figure~\ref{fig:FigWeight4}.

\begin{equation}\label{equ:weighttau2}
\tau_{\ws\bs}=
\begin{cases}
\sin\theta_e&\text{ if $\ws\bs$ is half a primal edge $e$ of $\Gs$}\\
i\cos\theta_e&\text{ if $\ws\bs$ is half a dual edge $e^*$ of an edge $e$ of $\Gs$}\\
ie^{-i\frac{\theta^\partial}{2}}
&\text{ if $\ws\bs$ is half a boundary dual edge of $\Gdext$, as in Figure \ref{fig:FigWeight4}},\\
ie^{i\frac{\theta^\partial}{2}}
&\text{ if $\ws\bs$ is half a boundary dual edge of $\Gdext$, as in Figure \ref{fig:FigWeight4}},\\
2\sin\frac{\theta^\partial}{2}&\text{ if $\ws\bs$ is half an edge of $\Gsext$ incident to $\rs$}.
\end{cases}
\end{equation}

\begin{figure}[ht]
\begin{center}
\psfrag{sin}[l][l]{\scriptsize $\sin\theta_e$}
\psfrag{cos}[l][l]{\scriptsize $i\cos\theta_e$}
\psfrag{te}[l][l]{\scriptsize $\theta_e$}
\psfrag{tb}[l][l]{\scriptsize $\theta^\partial$}
\psfrag{e}[l][l]{\scriptsize $e$}
\psfrag{l1}[l][l]{\scriptsize Weights $\tau$ around a boundary
white vertex}
\psfrag{l2}[l][l]{\scriptsize Weights $\tau$ around a non-boundary
white vertex}
\psfrag{l3}[l][l]{\scriptsize Primal and corresponding dual edge in the dual
configuration}
\psfrag{mb1}[l][l]{\scriptsize $\sin\theta^\partial$}
\psfrag{w1}[l][l]{\scriptsize $\frac{i}{2}e^{-i\frac{\theta^\partial}{2}}$}
\psfrag{w2}[l][l]{\scriptsize $\frac{i}{2}e^{i\frac{\theta^\partial}{2}}$}
\psfrag{G0}[l][l]{$\vec{\Gs}_0$}
\psfrag{G}[l][l]{$\vec{\Gs}$}
\includegraphics[width=10cm]{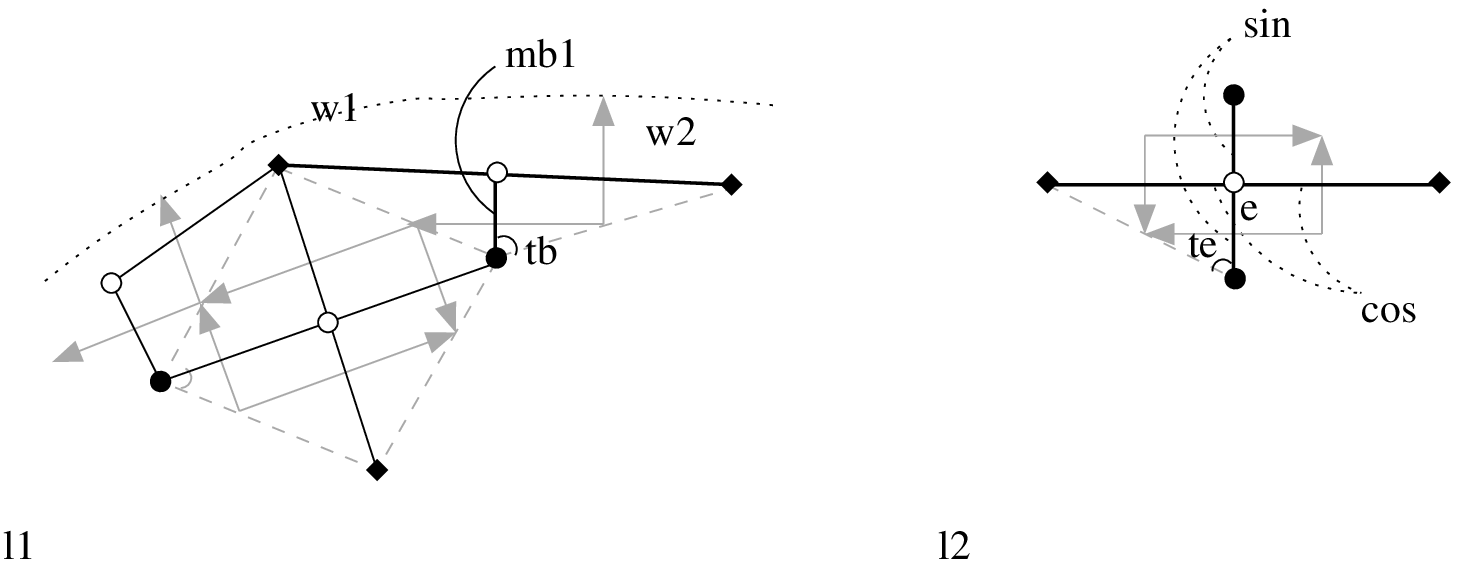}
\caption{Definition of the weights $\tau$ assigned to edges
of the extended double graph $\GDext$.}\label{fig:FigWeight4}
\end{center}
\end{figure}

Observing that non-boundary white vertices of $\GDext$ are in bijection with edges of $\Gs$, and that boundary white vertices
of $\GDext$ are in bijection with edges of $\Esext\setminus\Es$, \emph{i.e.} edges of $\Gsext$ incident to the boundary vertex $\rs$, we deduce
from Equations \eqref{equ:500}, \eqref{equ:501}, \eqref{equ:502}, \eqref{equ:503} and Lemma \ref{lem:union}, the following proposition.

\begin{prop}\label{prop:matchingtree}$\,$
\begin{enumerate}
\item Given a perfect matching of $\M(\GDext(\ss))$, the $\rho^*$-weighted sum $\rho^*(\T^{\ss}_{(1,2),\Ms}(\GDext))$ is equal to:
\begin{equation*}
 \rho^*(\T^{\ss}_{(1,2),\Ms}(\GDext))=
 \Bigl(\prod_{e\in\Es} ie^{-i\theta_e}\Bigr)
 \Bigl(\prod_{e\in\Esext\setminus\Es} -ie^{-i\frac{\theta^\partial}{2}}\Bigr)
 \prod_{\es\in\Ms} \tau_\es.
 \end{equation*}
\item The weighted sum of spanning trees of the extended double $\GDext$, satisfying Points 1. and 2. of Lemma \ref{lem:ConfigLoc} 
is equal, up to an explicit multiplicative constant, to the dimer partition function of the graph $\GDext(\ss)$, with weight function 
$\tau$:
$$
Z_{\mathrm{OST,(1,2)}}^{\ss}(\GDext,\rho^*)=\Bigl(\prod_{e\in\Es} ie^{-i\theta_e}\Bigr)
 \Bigl(\prod_{e\in\Esext\setminus\Es} -ie^{-i\frac{\theta^\partial}{2}}\Bigr)
 \Zdimer(\GDext(\ss),\tau).
$$
\end{enumerate}
\end{prop}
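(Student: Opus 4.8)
The plan is to prove the first assertion by factorizing the $\rho^*$-weighted sum over the white vertices of $\GDext$, and then to obtain the second assertion by summing over all perfect matchings. Fix a perfect matching $\Ms$ of $\M(\GDext(\ss))$. The starting point is Proposition~\ref{prop:important}: the trees of $\T^{\ss}_{(1,2),\Ms}(\GDext)$ are exactly the edge configurations obtained by retaining all of $\Ms$ and then, at each white vertex $\ws$, choosing the second incident edge among those permitted by Points~1. and~2. of Lemma~\ref{lem:ConfigLoc}; crucially \emph{no} further global constraint survives, since any such local choice already yields a spanning tree. Because $\GDext$ is bipartite and each white vertex has degree exactly two in these configurations, the $\rho^*$-weight of a compatible tree is the product over white vertices of the product of its two incident edge-weights. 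Summing over the independent local choices therefore factors the weighted sum:
\begin{equation*}
\rho^*(\T^{\ss}_{(1,2),\Ms}(\GDext))=\prod_{\ws}C_\ws,\qquad
C_\ws=\rho^*_{\es_\ws}\Bigl(\textstyle\sum_{\es'}\rho^*_{\es'}\Bigr),
\end{equation*}
where $\es_\ws\in\Ms$ is the matching edge entering $\ws$ and the inner sum runs over the admissible second edges at $\ws$.

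The next step is purely local: the quantities $C_\ws$ are precisely those computed in Equations~\eqref{equ:500}--\eqref{equ:503}. Reading off those identities, I would observe that each factors as $C_\ws=\kappa_\ws\,\tau_{\es_\ws}$, with $\kappa_\ws=ie^{-i\theta_e}$ at a non-boundary white vertex (corresponding to an edge $e\in\Es$) and $\kappa_\ws=-ie^{-i\frac{\theta^\partial}{2}}$ at a boundary white vertex (corresponding to an edge of $\Esext\setminus\Es$); the boundary weights $\tau$ of \eqref{equ:weighttau2} are defined exactly so that this repackaging holds in all three boundary cases. It then remains to collect constants using three bijections stated just before the proposition: non-boundary white vertices are in bijection with $\Es$ and boundary white vertices with $\Esext\setminus\Es$, giving $\prod_\ws\kappa_\ws=(\prod_{e\in\Es}ie^{-i\theta_e})(\prod_{e\in\Esext\setminus\Es}-ie^{-i\frac{\theta^\partial}{2}})$; and, since $\Ms$ is a perfect matching with each matching edge meeting a unique white vertex, $\ws\mapsto\es_\ws$ is a bijection between white vertices and edges of $\Ms$, so $\prod_\ws\tau_{\es_\ws}=\prod_{\es\in\Ms}\tau_\es$. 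This establishes the first assertion.

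For the second assertion, I would invoke Lemma~\ref{lem:union}, which exhibits $\T^{\ss}_{(1,2)}(\GDext)$ as the disjoint union of the $\T^{\ss}_{(1,2),\Ms}(\GDext)$ over $\Ms\in\M(\GDext(\ss))$, whence $Z_{\mathrm{OST,(1,2)}}^{\ss}(\GDext,\rho^*)=\sum_{\Ms}\rho^*(\T^{\ss}_{(1,2),\Ms}(\GDext))$. Substituting the first assertion and factoring out the matching-independent constant $(\prod_{e\in\Es}ie^{-i\theta_e})(\prod_{e\in\Esext\setminus\Es}-ie^{-i\frac{\theta^\partial}{2}})$ leaves the sum $\sum_{\Ms}\prod_{\es\in\Ms}\tau_\es$, which is by definition $\Zdimer(\GDext(\ss),\tau)$.

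The substantive difficulty has already been resolved upstream: the factorization rests entirely on Proposition~\ref{prop:important}, which guarantees that the local choices at distinct white vertices are genuinely independent, with no hidden global acyclicity constraint left to enforce. Given that, the only real care needed here is the constant bookkeeping — in particular checking that the non-uniform boundary contributions \eqref{equ:501}--\eqref{equ:503}, whose raw values $e^{-i\theta^\partial}$, $1$, and $e^{-i\theta^\partial}-1$ look unrelated, all collapse to the single prefactor $-ie^{-i\frac{\theta^\partial}{2}}$ once the correct half-edge $\tau$-weight is split off. That uniformization is the one place where the precise definition of $\tau$ in \eqref{equ:weighttau2} is doing essential work.
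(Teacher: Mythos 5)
Your proposal is correct and follows essentially the same route as the paper: factorize the $\rho^*$-weighted sum over white vertices (each of degree two in a compatible tree), use Proposition~\ref{prop:important} to guarantee that the local choices are independent with no residual acyclicity constraint, identify the local contributions with Equations~\eqref{equ:500}--\eqref{equ:503} and repackage each as a constant times a $\tau$-weight of the matching edge, then invoke the bijections between white vertices and $\Es$, $\Esext\setminus\Es$ and between white vertices and edges of $\Ms$, and finally sum over matchings via Lemma~\ref{lem:union} for the second assertion. No gaps.
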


\subsection{Concluding the proof of Theorem \ref{thm:MainResult}}

From Theorem \ref{thm:CedBea} used with critical coupling constants, we have:
\begin{equation*}
(\Zising(\Gs,\Js))^2=2^{|\Vs|}\Bigl(\prod_{e\in\Es}\cos^{-1}\theta_e\Bigr)\Zdimer(\GQ,\nu).
\end{equation*}

From Corollary \ref{cor:DimerTrees1}, we have:
\begin{equation*}
\Zdimer(\GQ,\nu)=|\ZOST^{\rs}(\GO,\rho)|.
\end{equation*}

From Proposition \ref{prop:TreesDouble}, we have:
\begin{equation*}
\ZOST^{\rs}(\GO,\rho)=Z_{\mathrm{OST,(1,2)}}^{\ss}(\GDext,\rho^*).
\end{equation*}

Combining this with Point 2 of Proposition \ref{prop:matchingtree}, yields:
\begin{equation*}
(\Zising(\Gs,\Js))^2=2^{|\Vs|}\Bigl(\prod_{e\in\Es}\cos^{-1}\theta_e\Bigr)|\Zdimer(\GDext(\ss),\tau)|.
\end{equation*}

Then, by the work of Kenyon, Propp and Wilson \cite{KPW},
building on Temperley's bijection, we know that every perfect matching $\Ms$ of
$\M(\GDext(\ss))$ defines a pair of primal and dual oriented spanning trees of $\Gsext$ and
$\Gdext$, where the primal oriented spanning tree is rooted at the vertex $\rs$, and the dual one is rooted 
at the vertex $\ss$. In the bijection, half-edges of $\GDext$ correspond to oriented edges of the graphs $\Gsext$ and $\Gdext$,
and weights of oriented edges are obtained using the bijection. Starting from the weights of Equation \eqref{equ:weighttau2},
we nearly have the weights of Equation \eqref{equ:weighttau}. We need two more observations.

Since edges of $\Gs$ and $\Gs^*$ are unoriented, for every spanning tree $\Ts$ of $\Gsext$ and dual spanning tree $\Ts^*$
of $\Gdext$, we can write:
$$
\Bigl(\prod_{e\in\Es}\cos^{-1}\theta_e\Bigr)=
\Bigl(\prod_{e=(x,y)\in\Ts\cap\Es}\cos^{-1}\theta_e \Bigr)
\Bigl(\prod_{e^*=(u,v)\in\Ts^*\cap\Es^*}\cos^{-1}\theta_e \Bigr).
$$
This amounts to multiplying edge-weights $\tau$ of the graph $\Gs$ and of the graph $\Gs^*$ by $\cos^{-1}\theta_e$, yielding the 1 and 
$\tan\theta_e$ of \eqref{equ:weighttau} rather than $\cos\theta_e$ and $\sin\theta_e$ of \eqref{equ:weighttau2}.

Note that edge-weights of the extended dual graph $\Gdext$ differ by a constant $i$. This is because a spanning tree of a graph
contains a fixed number of edges, so that the factor $i$ on dual edges factors out, and does not change the modulus of the
partition function.

\bibliographystyle{alpha}
\bibliography{survey}

\end{document}